\colorlet{lightgray}{gray!20}
\newcommand{\yang}[1]{{\color{orange}Yang: #1}}
\newcommand{\highlight}[1]{{\color{red}#1}}
\newcommand{\trans}{database transformer\xspace}
\newcommand{\irule}[2]%
   {\mkern-2mu\displaystyle\frac{#1}{\vphantom{,}#2}\mkern-2mu}
\newcommand{\irulelabel}[3]
{
\mkern-2mu
\begin{array}{ll}
\displaystyle\frac{#1}{\vphantom{,}#2} & \!\!\!\!\!\! #3
\end{array}
\mkern-2mu
}
\newcommand{\bfpara}[1]{\vspace{5pt} \noindent {\bf \emph{#1}}}
\newcommand{\textbox}[1]{\vspace{5pt} \fbox{\parbox{.95\linewidth}{#1}}}
\newcommand{\tool}{\textsc{Graphiti}\xspace}
\newcommand{\mediator}{\textsc{Mediator}\xspace}
\newcommand{\verieql}{\textsc{VeriEQL}\xspace}
\newcommand{\dynamite}{\textsc{Dynamite}\xspace}
\newcommand{\qex}{\textsc{Qex}\xspace}
\newcommand{\cosette}{\textsc{Cosette}\xspace}
\newcommand{\opentranspiler}{\textsc{OpenCypherTranspiler}\xspace}
\newcommand{\kuzu}{\textsc{Kuzu}\xspace}
\def\ojoin{\setbox0=\hbox{$\bowtie$}%
  \rule[.17ex]{.25em}{.6pt}\llap{\rule[.88ex]{.25em}{.6pt}}}
\def\leftouterjoin{\mathbin{\ojoin\mkern-5.8mu\bowtie}}
\def\rightouterjoin{\mathbin{\bowtie\mkern-5.8mu\ojoin}}
\def\fullouterjoin{\mathbin{\ojoin\mkern-5.8mu\bowtie\mkern-5.8mu\ojoin}}
\newcommand{\doubleplus}{+\!+~}
\newcommand{\transpattern}{\overset{\textsf{pattern}}{\longrightarrow}}
\newcommand{\transpred}{\overset{\textsf{pred}}{\longrightarrow}}
\newcommand{\transexpr}{\overset{\textsf{expr}}{\longrightarrow}}
\newcommand{\transquery}{\overset{\textsf{query}}{\longrightarrow}}
\newcommand{\transclause}{\overset{\textsf{clause}}{\longrightarrow}}
\newcommand{\fresh}{\textsf{fresh}\xspace}
\newcommand{\ttMatch}{\texttt{MATCH}\xspace}
\newcommand{\boldSelect}{\texttt{\textbf{SELECT}}\xspace}
\newcommand{\boldFrom}{\texttt{\textbf{FROM}}\xspace}
\newcommand{\boldWhere}{\texttt{\textbf{WHERE}}\xspace}
\newcommand{\boldAs}{\texttt{\textbf{AS}}\xspace}
\newcommand{\boldSum}{\texttt{\textbf{Sum}}\xspace}
\newcommand{\boldCount}{\texttt{\textbf{Count}}\xspace}
\newcommand{\boldJoin}{\texttt{\textbf{JOIN}}\xspace}
\newcommand{\boldDistinct}{\texttt{\textbf{DISTINCT}}\xspace}
\newcommand{\boldIsNull}{\texttt{\textbf{IS NULL}}\xspace}
\newcommand{\boldIsNotNull}{\texttt{\textbf{IS NOT NULL}}\xspace}
\newcommand{\boldLeftJoin}{\texttt{\textbf{LEFT JOIN}}\xspace}
\newcommand{\boldRightJoin}{\texttt{\textbf{RIGHT JOIN}}\xspace}
\newcommand{\boldOn}{\texttt{\textbf{ON}}\xspace}
\newcommand{\boldIn}{\texttt{\textbf{IN}}\xspace}
\newcommand{\boldGroupBy}{\texttt{\textbf{GROUP BY}}\xspace}
\newcommand{\boldOrderBy}{\texttt{\textbf{ORDER BY}}\xspace}
\newcommand{\boldDesc}{\texttt{\textbf{DESC}}\xspace}
\newcommand{\boldWith}{\texttt{\textbf{WITH}}\xspace}
\newcommand{\boldAnd}{\texttt{\textbf{AND}}\xspace}
\newcommand{\boldMatch}{\texttt{\textbf{MATCH}}\xspace}
\newcommand{\boldOptMatch}{\texttt{\textbf{OPTIONAL MATCH}}\xspace}
\newcommand{\boldReturn}{\texttt{\textbf{RETURN}}\xspace}
\newcommand{\boldExists}{\texttt{\textbf{EXISTS}}\xspace}
\newcommand{\sfMap}{\mathsf{map}\xspace}
\newcommand{\sfFoldl}{\mathsf{foldl}\xspace}
\newcommand{\sfZip}{\mathsf{zip}\xspace}
\newcommand{\sfHead}{\mathsf{head}\xspace}
\newcommand{\sfLast}{\mathsf{last}\xspace}
\newcommand{\sfFilter}{\mathsf{filter}\xspace}
\newcommand{\sfGraph}{\mathsf{Graph}\xspace}
\newcommand{\sfTable}{\mathsf{Table}\xspace}
\newcommand{\sfDedup}{\mathsf{dedup}\xspace}
\newcommand{\sfHasAgg}{\mathsf{hasAgg}\xspace}
\newcommand{\sfIte}{\mathsf{ite}\xspace}
\newcommand{\sfReturn}{\mathsf{Return}\xspace}
\newcommand{\sfOrderBy}{\mathsf{OrderBy}\xspace}
\newcommand{\sfUnion}{\mathsf{Union}\xspace}
\newcommand{\sfUnionAll}{\mathsf{UnionAll}\xspace}
\newcommand{\sfMatch}{\mathsf{Match}\xspace}
\newcommand{\sfOptMatch}{\mathsf{OptMatch}\xspace}
\newcommand{\sfOptionalMatch}{\mathsf{Optional Match}\xspace}
\newcommand{\sfWith}{\mathsf{With}\xspace}
\newcommand{\sfAgg}{\mathsf{Agg}\xspace}
\newcommand{\sfCast}{\mathsf{Cast}\xspace}
\newcommand{\sfIsNull}{\mathsf{IsNull}\xspace}
\newcommand{\sfExists}{\mathsf{Exists}\xspace}
\newcommand{\sfCount}{\mathsf{Count}\xspace}
\newcommand{\sfAvg}{\mathsf{Avg}\xspace}
\newcommand{\sfMin}{\mathsf{Min}\xspace}
\newcommand{\sfMax}{\mathsf{Max}\xspace}
\newcommand{\sfSum}{\mathsf{Sum}\xspace}
\newcommand{\sfNull}{\mathsf{Null}\xspace}
\newcommand{\sfMerge}{\mathsf{merge}\xspace}
\newcommand{\ttMerge}{\texttt{merge}\xspace}
\newcommand{\sfNullify}{\mathsf{Nullify}\xspace}
\newcommand{\ttWith}{\texttt{With}\xspace}
\newcommand{\ttGroupBy}{\texttt{GroupBy}\xspace}
\newcommand{\ttOrderBy}{\texttt{OrderBy}\xspace}
\newcommand{\ttCast}{\texttt{Cast}\xspace}
\newcommand{\ttIsNull}{\texttt{IsNull}\xspace}
\newcommand{\ttAgg}{\texttt{Agg}\xspace}
\newcommand{\ttCount}{\texttt{Count}\xspace}
\newcommand{\ttAvg}{\texttt{Avg}\xspace}
\newcommand{\ttMin}{\texttt{Min}\xspace}
\newcommand{\ttMax}{\texttt{Max}\xspace}
\newcommand{\ttSum}{\texttt{Sum}\xspace}
\newcommand{\ttHaving}{\texttt{Having}\xspace}
\newcommand{\sffoldl}{\mathsf{foldl}\xspace}
\newcommand{\sfzip}{\mathsf{zip}\xspace}
\newcommand{\sfmap}{\mathsf{map}\xspace}
\newcommand{\sfwhere}{\mathsf{where}\xspace}
\newcommand{\sfNodes}{\mathsf{Nodes}\xspace}
\newcommand{\sfite}{\mathsf{ite}\xspace}
\newcommand{\convert}{\mathcal{C}\xspace}
\newcommand{\pattern}{{PP\xspace}}
\newcommand{\nodepattern}{{{NP}}\xspace}
\newcommand{\edgepattern}{{{EP}}\xspace}
\newcommand{\stackoverflow}{\texttt{StackOverflow}\xspace}
\newcommand{\tutorial}{\texttt{Tutorial}\xspace}
\newcommand{\academic}{\texttt{Academic}\xspace}
\newcommand{\verieqlset}{\texttt{VeriEQL}\xspace}
\newcommand{\mediatorset}{\texttt{Mediator}\xspace}
\newcommand{\gpt}{\texttt{GPT-Translate}\xspace}
\definecolor{deepred}{RGB}{228,26,28}
\definecolor{deepblue}{RGB}{55,126,184}
\definecolor{deepgreen}{RGB}{77,175,74}
\definecolor{deeppurple}{RGB}{152,78,163}
\definecolor{deeporange}{RGB}{255,127,0}
\newcommand{\set}[1]{\{ #1 \}}
\newcommand{\denot}[1]{\llbracket #1 \rrbracket}
\newcommand{\transformer}{\Phi}
\newcommand{\ijoin}{\bowtie}
\newcommand{\ljoin}{\leftouterjoin}
\newcommand{\fjoin}{\fullouterjoin}
\newcommand{\joinop}{\otimes}
\newcommand{\arithop}{\oplus}
\newcommand{\logicop}{\odot}
\newcommand{\pred}{\phi}
\newcommand{\proj}{\Pi}
\newcommand{\filter}{\sigma}
\newcommand{\rename}{\rho}
\newcommand{\Xset}{\mathcal{X}}
\newcommand{\stype}{t_\mathsf{src}}
\newcommand{\ttype}{t_\mathsf{tgt}}
\newcommand{\nodeType}{t_\mathsf{node}}
\newcommand{\edgeType}{t_\mathsf{edge}}
\newcommand{\lbl}{\mathsf{label}}
\newcommand{\schema}{\Psi}
\newcommand{\keys}{\mathsf{keys}}
\newcommand{\gschema}{\Psi_G}
\newcommand{\rschema}{\Psi_R}
\newcommand{\graph}{G}
\newcommand{\conform}{\triangleright}
\newcommand{\dstType}{\mathsf{dstType}}
\newcommand{\srcType}{\mathsf{srcType}}
\newcommand{\tbleq}{\equiv}
\newcommand{\tables}{S}
\newcommand{\integrity}{\xi}
\newcommand{\smapping}{\Lambda}
\newcommand{\sdt}{\transformer_\mathsf{sdt}}
\newcommand{\rdt}{\transformer_\mathsf{rdt}}
\newcommand{\sdtname}{SDT\xspace}
\newcommand{\rdtname}{RDT\xspace}
\newcommand{\reldb}{R}
\newcommand{\fk}{\mathsf{fk}}
\newcommand{\PK}{\mathsf{PK}}
\newcommand{\FK}{\mathsf{FK}}
\newcommand{\attributes}{\mathsf{Attrs}}
\newcommand{\tuples}{\mathcal{T}}
\title{\tool: Bridging Graph and Relational Database Queries}
\author{Yang He}
\affiliation{%
  \institution{Simon Fraser University}
  \city{Burnaby}
  \country{Canada}
}
\email{yha244@sfu.ca}
\author{Ruijie Fang}
\affiliation{%
  \institution{University of Texas at Austin}
  \city{Austin, TX}
  \country{USA}
}
\email{ruijief@cs.utexas.edu}
\author{Isil Dillig}
\affiliation{%
  \institution{University of Texas at Austin}
  \city{Austin, TX}
  \country{USA}
}
\email{isil@cs.utexas.edu}
\author{Yuepeng Wang}
\affiliation{%
  \institution{Simon Fraser University}
  \city{Burnaby}
  \country{Canada}
}
\email{yuepeng@sfu.ca}
\begin{abstract}

This paper presents an automated reasoning technique for checking equivalence between graph database queries written in Cypher and relational  queries in SQL. To formalize a suitable notion of equivalence in this setting, we introduce the concept of \emph{database transformers}, which transform database instances between graph and relational models. We then propose a novel verification methodology that checks equivalence modulo a given transformer by reducing the original problem to verifying equivalence between a pair of SQL queries. This reduction is achieved by embedding a subset of Cypher into SQL through syntax-directed translation, allowing us to leverage existing research on automated reasoning for SQL while obviating the need for reasoning simultaneously over two different data models. We have implemented our approach in a tool called \tool and used it to check equivalence between graph and relational queries. Our experiments demonstrate that \tool is useful both for verification and refutation and that it can uncover subtle bugs, including those found in Cypher tutorials and academic papers.

\end{abstract}
\keywords{Program Verification, Equivalence Checking, Relational Databases, Graph Databases.}
\begin{document}

\maketitle

\section{Introduction} \label{sec:intro}


Over the past decades, \emph{graph} databases have garnered significant attention from both industry and academia, offering more flexible data models with different trade-offs compared to \emph{relational} databases. As a result, developers are increasingly interested in migrating relational database applications to graph databases~\cite{comparison-10}, and systems like Apache Age~\cite{age-web24} aim to  incorporate graph components into relational databases to help with this transition.

Nevertheless, transitioning between relational and graph databases often requires developers to convert queries from one model to the other, and, as evidenced by numerous posts on online forums~\cite{post1-web24,post2-web24,post3-web24}, translating between relational  and graph queries can be quite challenging  due to misunderstandings of joins versus relationships, aggregation semantics, and other complexities.
In fact, we have identified multiple incorrect translations in existing literature where queries claimed to be equivalent were, in reality, non-equivalent~\cite{Lin-arxiv16, neo4jdocs-issue, stackoverflow-issue}.
This underscores a growing need for rigorous reasoning about the equivalence of graph and relational queries.

Although significant progress has been made in verifying the equivalence of relational database queries~\cite{cosette, chu2017hottsql, chu2018axiomatic, mediator-popl18}, to the best of our knowledge, no existing work addresses the equivalence verification problem between relational and graph queries. A key challenge in this area arises from the distinct data models of relational and graph databases. Relational databases organize data in tables, with queries operating on rows and columns through well-defined operations like joins and aggregations. In contrast, graph databases represent data as nodes and edges, with queries typically expressing relationships and traversals through graph structures. This fundamental difference in data representation complicates both the definition of equivalence between graph and relational queries and the verification process itself.

This paper takes a first step towards developing automated  
reasoning techniques that can be used to check equivalence between relational queries written in SQL and graph database queries implemented in Cypher~\cite{cypher-sigmod18}, the most popular graph database query language. Our formal reasoning technique is built on a novel method that embeds a subset of the Cypher language into SQL, building on the insight that paths in a graph database instance correspond  to joins of rows in a relational database. This observation not only allows us to translate Cypher queries to SQL in a syntax-directed way but also facilitates checking equivalence between Cypher and SQL queries.  At a high level, our approach hinges on three crucial components: {\bf (1)} a formal foundation for defining equivalence between graph and relational database instances (over arbitrary schemas); {\bf (2)} a correct-by-construction technique for translating graph database queries  to relational queries (over a specific schema); and {\bf (3)} a novel verification methodology that leverages (2) to establish equivalence between Cypher and SQL queries that operate over any arbitrary schema. We next explain each of our contributions in more detail.

\bfpara{Formal foundation for graph and relational database equivalence.} As mentioned earlier, a key challenge in reasoning about equivalence between graph and relational queries is the lack of a straightforward mapping between the data models.  To address this, we introduce the concept of \emph{\trans}, adapted from prior work on schema mappings~\cite{clio-vldb00,clio-cmfa09,dynamite-vldb20}, which allows transforming a database instance from one data model (graphs) to an equivalent instance in another model (relational databases). This transformation forms the foundation for defining equivalence between graph and relational database queries.

\bfpara{Correct-by-construction transpilation.} Building on this notion of database transformer, we introduce the concept of a \emph{standard \trans (\sdtname)} as the default correctness specification. In simple terms, the SDT provides a set of transformation rules to map graph elements (nodes and edges) into relational tables, maintaining the structure and semantics of the graph within the relational model. For example, nodes in a graph schema are transformed into tables in the relational schema, where attributes of the node become columns, and edges are represented as relationships between these tables with foreign keys. The resulting relational schema is referred to as the \emph{induced relational schema}. Our method then defines syntax-directed transpilation rules to convert any Cypher query into a SQL query over this induced schema.
The core insight is that Cypher path queries, which perform pattern matching over subgraphs, can be mapped to relational joins. However, the actual translation is tricky due to the fact that Cypher supports flexible, multi-step pattern matching over  graph structures, which demands careful handling to ensure that pattern matching in Cypher—whether simple or complex—is accurately represented as joins in SQL, preserving the semantics of the original graph query.

\bfpara{Equivalence checking methodology for arbitrary schema.} While the transpilation method described above can generate an equivalent SQL query, the equivalence is \emph{modulo} the \sdtname. However, in practice, the target relational database often uses a different schema (rather than the \emph{induced relational schema}), so the syntax-directed transpilation method alone is insufficient. To address this gap, we propose a verification methodology that checks equivalence between graph and relational queries modulo any \trans.

The key insight of our approach is that, instead of directly reasoning about equivalence between Cypher and SQL queries—which would require complex SMT encodings that combine graph and relational structures—we reduce the problem to checking equivalence between SQL queries over different schemas. This reduction allows us to leverage existing techniques and tools for SQL equivalence checking~\cite{mediator-popl18,verieql-oopsla24}, avoiding the need for handling the intricate challenge of reasoning over two fundamentally different data models at the same time.

\begin{wrapfigure}{r}{0.5\textwidth}
\vspace{-10pt}
\centering
\hspace{-11pt}
\includegraphics[scale=0.24]{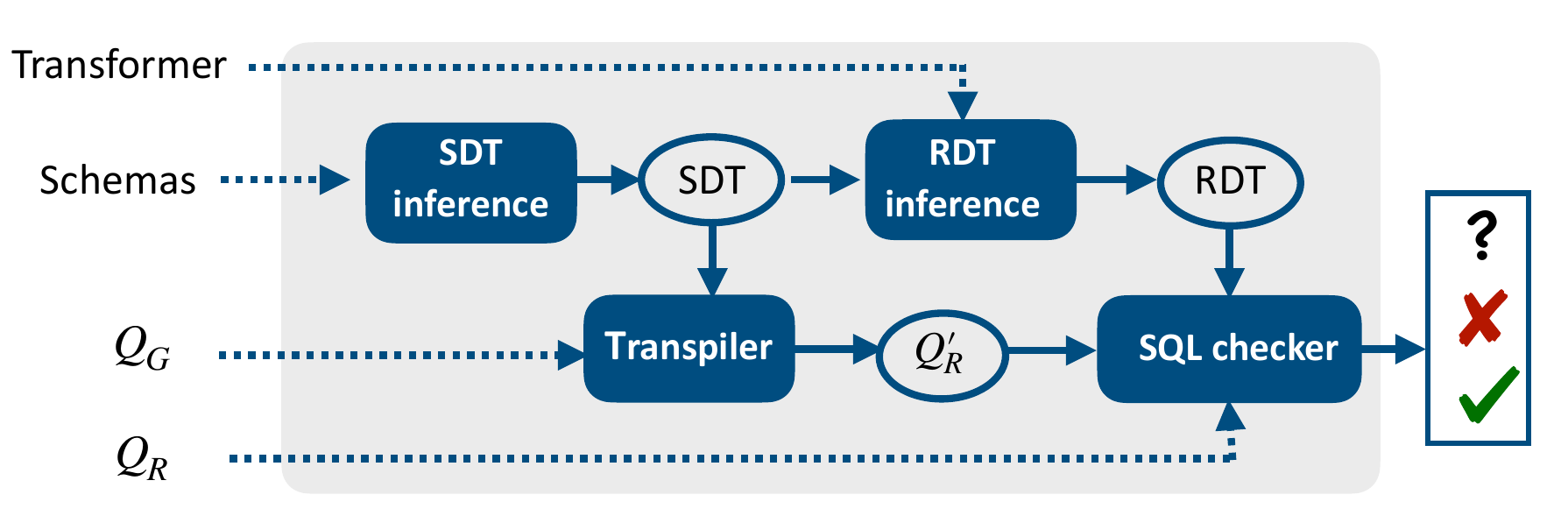}
\vspace{-20pt}
\caption{Overview of approach.}
\label{fig:eq-approach}
\vspace{-10pt}
\end{wrapfigure}


Figure~\ref{fig:eq-approach} illustrates the proposed approach for checking equivalence between Cypher and SQL queries. Our method takes four inputs: (1) a Cypher query \( Q_G \), (2) an SQL query \( Q_R \), (3) schemas for the graph and relational databases, and (4) a correctness specification \( \transformer \) in the form of a \trans. To establish equivalence between \( Q_G \) and \( Q_R \) modulo \( \transformer \), the method first derives the induced relational schema and the \sdtname. It then applies a correct-by-construction transpilation technique to generate a SQL query \( Q_R' \) that is provably equivalent to \( Q_G \) modulo the SDT (but not modulo \( \transformer \)). In the final step, the method computes a semantic "diff" between the induced and target relational schemas, constructing a \emph{residual \trans} to align the two relational instances. Finally, an off-the-shelf SQL equivalence checker is then used to check  equivalence between \( Q_R' \) and \( Q_R \).


Overall, our proposed methodology has two key advantages. First, when the user just wants to translate a Cypher query to SQL but does not care about the underlying relational schema (or if the desired  schema is the same as the induced relational schema), our method can be used to perform correct-by-construction transpilation. Second, given any arbitrary correctness specification (in the form of a \trans), our method  can leverage a combination of the proposed transpilation approach and existing automated reasoning tools for SQL to reason about equivalence between any pair of Cypher and SQL queries. 

We have implemented the proposed approach in a new tool called \tool for reasoning about equivalence between Cypher and SQL queries and conducted an extensive experimental evaluation of \tool on 410 benchmarks. 
These include 45 benchmarks sourced from public platforms such as StackOverflow, tutorials, and academic papers; 160 benchmarks translated from SQL by students with Cypher experience; and 205 benchmarks translated using ChatGPT. In the first evaluation, we combine \tool with the \verieql~\cite{verieql-oopsla24} bounded model checker for SQL, performing equivalence verification on all 410 query pairs. This reveals equivalence violations in 34 benchmarks, including 3 from the wild, 4 from manual translations, and 27 from GPT-generated translations.
In the second evaluation, we pair \tool with the deductive SQL verifier \mediator~\cite{mediator-popl18} for full-fledged verification of an aggregation-free subset. This enables unbounded equivalence verification between Cypher and SQL queries, where both tables and graphs can have arbitrary sizes. Here, about 80\% of supported queries are verified as equivalent in a push-button manner.
Finally, in the third experiment, we show that \tool can generate SQL queries that are competitive with manually-written ones in terms of execution efficiency.


\bfpara{Contributions.}
To summarize, this paper makes the following key contributions:
\vspace{-10pt}
\begin{itemize}[leftmargin=*]
\item We propose the first technique for reasoning about equivalence between graph and relational queries, based on a formal definition of equivalence modulo \trans. 
\item We introduce the concept of \emph{standard \trans}, which acts as the default correctness specification for equivalence between Cypher and SQL queries.
\item We develop a sound and complete transpilation technique that translates a subset of Cypher queries into equivalent SQL queries, guaranteeing that both queries produce the same results under \emph{standard \trans}.
\item We show how to leverage the proposed transpilation approach to reduce the equivalence checking problem (modulo \emph{any} \trans) into the problem of checking equivalence between a pair of SQL queries,  allowing us to leverage existing work on automated reasoning for SQL.  
\item {We implement these ideas in a tool called \tool and conduct an empirical evaluation on 410 benchmarks, showing that our  approach can be used for both verification and falsification.}
\end{itemize}

\section{Motivating Example} \label{sec:overview}

In this section, {we motivate the problem addressed in this work} through an example from prior work~\cite{Lin-arxiv16} that studies SQL analytics for graphs.  
{The  queries  in this section pertain to a real-world biomedical research database~\cite{SemMedDB-web24}.

\begin{figure}[!t]
\centering

\begin{subfigure}{0.45\linewidth}
\centering
\scalebox{0.6}{
\begin{tikzpicture}[main/.style = {draw=black, circle, thin, minimum size=0.8cm}] 

\node[main, draw=deepred] (1) at (0, 0){};
\node[main, draw=deepblue, dashed] (2) at (2.5, 0){};
\node[main, draw=deepgreen, double,double distance=1pt] (3) at (5, 0){};

\node (CONCEPT) at (0, 0.7){\textbf{CONCEPT}};
\node (CONCEPT) at (2.5, 0.7){\textbf{PA}};
\node (CONCEPT) at (5, 0.7){\textbf{SENTENCE}};

\draw[->, line width=1.2pt] (1) -- (2) node [midway, above, sloped] (4) {\small \textbf{:CS}};
\draw[->, line width=1.2pt] (2) -- (3) node [midway, above, sloped] (5) {\small \textbf{:SP}};

\end{tikzpicture}
}
\caption{Graph schema. 
CONCEPT has two property keys: CID and Name. PA has two property keys: PID and CSID. SENTENCE has two property keys: SID and PMID.}
\label{fig:example-graph-schema}
\end{subfigure}
\hspace{10pt}
\begin{subfigure}{0.5\linewidth}
\centering
\scalebox{0.6}{
\begin{tikzpicture}[relation/.style={rectangle split, rectangle split parts=#1, rectangle split part align=base, draw, anchor=center, align=center, text height=3mm, text centered}]\hspace*{-0.3cm}

\node (Concept) {\textbf{Concept}};
\node [left=-3.cm of Concept] (Cs) {\textbf{Cs}};
\node [left=-3.cm of Cs] (Pa) {\textbf{Pa}};
\node [left=-3.cm of Pa] (Sp) {\textbf{Sp}};
\node [left=-4.5cm of Sp] (Sentence) {\textbf{Sentence}};

\node [relation=2, rectangle split horizontal, rectangle split part fill={lightgray!50}, anchor=north west, below=0.7cm of Concept.west, anchor=west] (ConceptRow)
{\nodepart{one} \textbf{CID}
\nodepart{two} NAME};

\node [relation=2, rectangle split horizontal, rectangle split part fill={lightgray!50}, anchor=north west, below=0.7cm of Cs.west, anchor=west] (CsRow)
{\nodepart{one} \textbf{CSID}
\nodepart{two} CID};

\node [relation=2, rectangle split horizontal, rectangle split part fill={lightgray!50}, anchor=north west, below=0.7cm of Pa.west, anchor=west] (PaRow)
{\nodepart{one} \textbf{PID}
\nodepart{two} \underline{CSID}};

\node [relation=3, rectangle split horizontal, rectangle split part fill={lightgray!50}, anchor=north west, below=0.7cm of Sp.west, anchor=west] (SpRow)
{\nodepart{one} \textbf{SPID}
\nodepart{two} \underline{SID}
\nodepart{three} \underline{PID}};

\node [relation=2, rectangle split horizontal, rectangle split part fill={lightgray!50}, anchor=north west, below=0.7cm of Sentence.west, anchor=west] (SentenceRow)
{\nodepart{one} \textbf{SID}
\nodepart{two} PMID};

\draw[-latex] (CsRow.two north) -- ++(0,0) -| ($(CsRow.two north) + (0,0)$) |- ($(ConceptRow.one north) + (0,0.2)$) -| ($(ConceptRow.one north) + (0,0)$);

\draw[-latex] (PaRow.two south) -- ++(0,0) -| ($(PaRow.two south) + (0,0)$) |- ($(CsRow.one south) + (0,-0.2)$) -| ($(CsRow.one south) + (0,0)$);

\draw[-latex] (SpRow.three north) -- ++(0,0) -| ($(SpRow.three north) + (0,0)$) |- ($(PaRow.one north) + (0,0.2)$) -| ($(PaRow.one north) + (0,0)$);

\draw[-latex] (SpRow.two south) -- ++(0,0) -| ($(SpRow.two south) + (0,0)$) |- ($(SentenceRow.one south) + (0,-0.2)$) -| ($(SentenceRow.one south) + (0,0)$);
\end{tikzpicture}
}

\caption{Relational schema. Primary keys are in bold. Foreign keys are underlined and connected to references with arrows.}
\label{fig:example-relational-schema}
\hspace{10pt}
\end{subfigure}

\vspace{-10pt}
\caption{A pair of relational and graph schemas.}
\label{fig:example-schemas}
\vspace{-10pt}
\end{figure}

\bfpara{Incorrect translation.}
Figure~\ref{fig:example-graph-schema} shows a graph schema that contains three types of nodes called \texttt{CONCEPT}, \texttt{PA}\footnote{Here, \texttt{PA} stands for \emph{predication argument}}, and \texttt{SENTENCE}. The \texttt{CS} relationship (represented by an edge) links concepts to predication arguments, and the \texttt{SP} relationship (also represented as an edge) links predication arguments to sentences. On the other hand, Figure~\ref{fig:example-relational-schema} shows the relational representation of the graph model. As shown in Figure~\ref{fig:example-relational-schema}, the relational schema  contains five tables called \texttt{Concept}, \texttt{Cs}, \texttt{Pa}, \texttt{Sp}, \texttt{Sentence} that correspond to both the nodes and edges of the graph schema in Figure~\ref{fig:example-graph-schema}. For some intuition about the correspondence between the two databases,  Figure~\ref{fig:example-instances} shows sample instances of both database schemas that contain the same entries ``Atropine'' and ``Aspirin''.

\begin{figure}[!t]
\centering

\begin{minipage}{0.45\linewidth}
\centering
\begin{subfigure}{\linewidth}
\centering
\scalebox{0.7}{
\begin{tikzpicture}[main/.style = {draw, circle}] 

\node[circle, draw=deepred] (1) at (0, 2){\scalebox{.6}{Atropine}};
\node[circle, draw=deepred] (2) at (0, 0){\scalebox{.7}{Aspirin}};
\node[circle, draw=deepblue, dashed] (3) at (2.5, 2){\scalebox{.95}{PA$_0$}};
\node[circle, draw=deepblue, dashed] (4) at (2.5, 0){\scalebox{.95}{PA$_1$}};
\node[circle, draw=deepgreen, double,double distance=1pt] (5) at (4.5, 2){\scalebox{1.}{S$_0$}};
\node[circle, draw=deepgreen, double,double distance=1pt] (6) at (4.5, 0){\scalebox{1.}{S$_1$}};

\draw[->] (1) -- (3) node [midway, above, sloped] (cs1) {:CS};
\draw[->] (1) -- (4) node [midway, above, sloped] (cs4) {:CS};

\draw[->] (3) -- (5) node [midway, above, sloped] (cs5) {:SP};
\draw[->] (4) -- (5) node [midway, above, sloped] (cs6) {:SP};

\end{tikzpicture} 
}
\vspace{-5pt}
\caption{Graph database instance. Aspirin and Atropine denote two Concept nodes, PA$_0$ and PA$_1$ are two PA nodes, and S$_0$ and S$_1$ are two Sentence nodes.}
\label{fig:example-graph-instance}
\end{subfigure}
\end{minipage}
~
\begin{minipage}{0.5\linewidth}
\centering
\begin{subfigure}{\textwidth}
\centering
\scriptsize
\begin{subfigure}[b]{0.35\textwidth}
\centering
\begin{tabular}{|c|c|}
\hline
CID & NAME \\
\hline
1 & Atropine  \\
\hline
2 & Aspirin  \\
\hline
\end{tabular}
\vspace{-5pt}
\caption*{\texttt{Concept}}
\end{subfigure}
~
\begin{subfigure}[b]{0.4\textwidth}
\centering
\begin{tabular}{|c|c|}
\hline
CID & CSID \\
\hline
1 & 0 \\
\hline
1 & 1 \\
\hline
\end{tabular}
\vspace{-5pt}
\caption*{\texttt{Cs}}
\end{subfigure}
~
\begin{subfigure}[b]{0.3\textwidth}
\centering
\begin{tabular}{|c|c|}
\hline
PID & CSID \\
\hline
0 & 0 \\
\hline
1 & 1 \\
\hline
\end{tabular}
\vspace{-5pt}
\caption*{\texttt{Pa}}
\end{subfigure}

\vspace{5pt}

\begin{subfigure}[b]{0.4\textwidth}
\centering
\begin{tabular}{|c|c|c|}
\hline
SPID & SID & PID \\
\hline
0 & 0 & 0 \\
\hline
1 & 0 & 1 \\
\hline
\end{tabular}
\vspace{-5pt}
\caption*{\texttt{Sp}}
\end{subfigure}
~
\begin{subfigure}[b]{0.35\textwidth}
\centering
\begin{tabular}{|c|c|}
\hline
SID & PMID \\
\hline
0 & 0  \\
\hline
1 & 0  \\
\hline
\end{tabular}
\vspace{-5pt}
\caption*{\texttt{Sentence}}
\end{subfigure}

\vspace{-5pt}
\caption{Relational database instance.}
\label{fig:example-relational-instance}
\end{subfigure}
\end{minipage}

\vspace{-5pt}
\caption{Example graph and relational database instances.}
\label{fig:example-instances} 
\vspace{-5pt}
\end{figure}

Next, consider the SQL and Cypher queries shown in Figures~\ref{fig:example-sql-query} and~\ref{fig:example-cypher-query} respectively.
The SQL query aims to find all concepts that link to a concept \texttt{c1} with \texttt{CID} = 1 and their corresponding frequencies of connected paths to \texttt{c1} through the join of \texttt{Cs}, \texttt{Pa}, and \texttt{Sp} tables.
Similarly, the Cypher query first finds all sentences linked to $c_1$ through the \texttt{CS - PA - SP} path and then counts the frequencies of paths from those sentences to all concepts. 
According to~\citet{Lin-arxiv16}, these queries are intended to be equivalent; however, they are actually \emph{not} equivalent due to the subtle differences in Cypher and SQL semantics. 
At a high level, both queries explore how certain concepts (starting with \texttt{CID} = 1) are linked to other concepts via their occurrence in shared sentences. They do this by traversing relationships (either explicitly in the graph or via joins in the relational database) and aggregating the results to identify how frequently these connections occur. However, the two queries actually differ in how they compute the frequencies and end up providing different results on two database instances that are meant to contain the same data.

\begin{figure}[!t]
\centering
\small

\begin{subfigure}[t]{0.48\linewidth}
\centering
\scriptsize 
\begin{tabular}{l}
\boldSelect \texttt{c2.CID, \boldCount(*)} \boldFrom \texttt{Cs \boldAs c2, Pa \boldAs p2, Sp \boldAs s2} \\
\boldWhere \texttt{s2.PID = p2.PID \boldAnd p2.CSID = c2.CSID \boldAnd s2.SID \boldIn (} \\
\hspace{2.em} \texttt{\boldSelect s1.SID} 
\texttt{\boldFrom Cs \boldAs c1, Pa \boldAs p1, Sp \boldAs s1} \\
\hspace{2.em} \texttt{\boldWhere s1.PID = p1.PID \boldAnd p1.CSID = c1.CSID \boldAnd c1.CID = 1}
\texttt{)} \\
\boldGroupBy \texttt{CID} \\
\end{tabular}
\vspace{-5pt}
\caption{SQL query}
\label{fig:example-sql-query}
\end{subfigure}
\hfill
\begin{subfigure}[t]{0.48\linewidth}
\centering
\scriptsize
\begin{tabular}{|c|c|}
\hline
c2.CID & Count(*) \\
\hline
1 & 2 \\
\hline
\end{tabular}
\caption{The result of SQL query.}
\label{fig:example-sql-result}
\end{subfigure}

\vspace{10pt}

\begin{subfigure}[t]{0.48\linewidth}
\centering
\scriptsize 
\begin{tabular}{l}
\boldMatch \texttt{($c_1$:CONCEPT \{CID:1\})-[$r_1$:CS]->($p_1$:PA)-[$r_2$:SP]->($s$:SENTENCE)} \\
\boldWith \texttt{$s$} \\
\boldMatch \texttt{($s$:SENTENCE)<-[$r_3$:SP]-($p_2$:PA)<-[$r_4$:CS]-($c_2$:CONCEPT)} \\
\boldReturn \texttt{$c_2$.CID, \boldCount(*)} \\
\end{tabular}
\vspace{-5pt}
\caption{Cypher query}
\label{fig:example-cypher-query}
\end{subfigure}
\hfill
\begin{subfigure}[t]{0.48\linewidth}
\centering
\scriptsize
\begin{tabular}{|c|c|}
\hline
c2.CID & Count(*) \\
\hline
1 & 4 \\
\hline
\end{tabular}
\caption{The result of Cypher query.}
\label{fig:example-cypher-result}
\end{subfigure}

\vspace{-5pt}
\caption{A pair of SQL and Cypher queries with their execution results.}
\label{fig:example-queries-and-results}
\vspace{-10pt}
\end{figure}

In particular, when run on the database instances from Figure~\ref{fig:example-instances}, the SQL query produces the table shown in Figure~\ref{fig:example-sql-result} whereas the Cypher query produces the one in Figure~\ref{fig:example-cypher-result}.  These results agree on the \texttt{CID}'s of the related concepts; however they differ on the \emph{frequencies} of the relationships, as is evident from the entries in the \texttt{Count} column of these tables.\footnote{An equivalent Cypher query of the SQL query in Figure \ref{fig:example-sql-query} is shown in Appendix~\ref{sec:correct-cypher}.} As this example illustrates, queries that \emph{appear} to be ostensibly equivalent can have subtle differences in their semantics, motivating the need for automated reasoning tools that can be used to expose semantic differences between graph and relational queries.  In the remainder of this section, we elucidate some important aspects and design choices behind our proposed approach.

\begin{figure}[!t]
\footnotesize

\resizebox{.98\linewidth}{!}{
\hspace{-13pt}
\begin{minipage}{\linewidth}

\[
\begin{array}{rclrcl}
\texttt{CONCEPT}(\mathsf{cid}, \mathsf{name}) & \hspace{-10pt} \to \hspace{-10pt} & \texttt{Concept}(\mathsf{cid}, \mathsf{name}) & 
\texttt{CONCEPT}(\mathsf{cid}, \_), \texttt{CS}(\mathsf{cid, csid, cid, pid}), \texttt{PA}(\mathsf{pid}, \mathsf{csid}) & \hspace{-10pt} \to \hspace{-10pt} & \texttt{Cs}(\mathsf{cid, csid}) \\
\texttt{PA}(\mathsf{pid, csid}) & \hspace{-10pt} \to \hspace{-10pt} & \texttt{Pa}(\mathsf{pid, csid)} & 
\texttt{PA}(\mathsf{pid}, \_), \texttt{SP}(\mathsf{spid, sid, pid, pid, sid}), \texttt{SENTENCE}(\mathsf{sid}, \_) & \hspace{-10pt} \to \hspace{-10pt} & \texttt{Sp}(\mathsf{spid, sid, pid}) \\
\texttt{SENTENCE}(\mathsf{sid, pmid}) & \hspace{-10pt} \to \hspace{-10pt} & \texttt{Sentence}(\mathsf{sid, pmid}) & & \hspace{-10pt} \hspace{-10pt} & \\
\end{array}
\]

\end{minipage}
}
\vspace{-5pt}
\caption{Database transformer for our example. All variables are implicitly universally quantified.}
\label{fig:example-transformer}
\vspace{-15pt}
\end{figure}

\bfpara{Need for database transformers.}
In order to conclude that the SQL and Cypher queries are semantically different, we need to reason about how they behave when executed on the \emph{same data}. However, because graph and relational databases have such different data models, the input database instances are never identical. Thus, in order to reason about query equivalence, we first need a mechanism for defining \emph{data equivalence}. In our framework, this is done through the concept of \emph{\trans}, which takes as input a graph database instance $D$ over a certain schema $\schema$ and produces a relational database instance $D'$  over a relational schema $\schema'$. {Our concept of \trans is adapted from prior work on \emph{schema mappings}~\cite{clio-vldb00,clio-cmfa09}, generalized to model the correspondence between graph and relational databases. For our running example, the correspondence between the two database instances is given by the transformer shown  in Figure~\ref{fig:example-transformer}. 
Intuitively, each rule describes how each table in the relational database can be generated based on nodes and edges in the graph.
For example, the rule $\texttt{CONCEPT}(\mathsf{cid}, \_), \texttt{CS}(\mathsf{cid, csid, cid, pid}), \texttt{PA}(\mathsf{pid}, \mathsf{csid}) \to \texttt{Cs}(\mathsf{cid, csid})$ specifies if there is an edge \texttt{CS} connecting two nodes \texttt{CONCEPT} and \texttt{PA} in the graph, and the first two properties of \texttt{CS} are $\mathsf{cid}$ and $\mathsf{csid}$, then there is a row $(\mathsf{cid}, \mathsf{csid})$ in the \texttt{Cs} table.
Here, the last two attributes of \texttt{CS} serve as foreign keys referencing to the source and target nodes of the \texttt{CS} edge.

\begin{figure}[!t]
\centering
\footnotesize

\begin{tikzpicture}[relation/.style={rectangle split, rectangle split parts=#1, rectangle split part align=base, draw, anchor=center, align=center, text height=3mm, text centered}]\hspace*{-0.3cm}

\node (Concept) {\textbf{Concept'}};
\node [left=-2.5cm of Concept] (CS) {\textbf{Cs'}};
\node [left=-4.0cm of CS] (Pa) {\textbf{Pa'}};
\node [left=-2.5cm of Pa] (Sp) {\textbf{Sp'}};
\node [left=-5.5cm of Sp] (Sentence) {\textbf{Sentence'}};

\node [relation=2, rectangle split horizontal, rectangle split part fill={lightgray!50}, anchor=north west, below=0.6cm of Concept.west, anchor=west] (ConceptRow)
{\nodepart{one} \textbf{CID}
\nodepart{two} NAME};

\node [relation=4, rectangle split horizontal, rectangle split part fill={lightgray!50}, anchor=north west, below=0.6cm of CS.west, anchor=west] (CSRow)
{\nodepart{one} \textbf{CSID}
\nodepart{two} CID
\nodepart{three} \underline{SRC}
\nodepart{four} \underline{TGT}};

\node [relation=2, rectangle split horizontal, rectangle split part fill={lightgray!50}, anchor=north west, below=0.6cm of Pa.west, anchor=west] (PaRow)
{\nodepart{one} \textbf{PID}
\nodepart{two} CSID};

\node [relation=5, rectangle split horizontal, rectangle split part fill={lightgray!50}, anchor=north west, below=0.6cm of Sp.west, anchor=west] (SpRow)
{\nodepart{one} \textbf{SPID}
\nodepart{two} SID
\nodepart{three} PID
\nodepart{four} \underline{SRC}
\nodepart{five} \underline{TGT}};

\node [relation=2, rectangle split horizontal, rectangle split part fill={lightgray!50}, anchor=north west, below=0.6cm of Sentence.west, anchor=west] (SentenceRow)
{\nodepart{one} \textbf{SID}
\nodepart{two} PMID};

\draw[-latex] (CSRow.three south) -- ++(0,0) -| ($(CSRow.three south) + (0,0)$) |- ($(ConceptRow.one south) + (0,-0.2)$) -| ($(ConceptRow.one south) + (0,0)$);

\draw[-latex] (CSRow.four south) -- ++(0,0) -| ($(CSRow.four south) + (0,0)$) |- ($(PaRow.one south) + (-0.1,-0.2)$) -| ($(PaRow.one south) + (-0.1,0)$);

\draw[-latex] (SpRow.four south) -- ++(0,0) -| ($(SpRow.four south) + (0,0)$) |- ($(PaRow.one south) + (0.1,-0.2)$) -| ($(PaRow.one south) + (0.1,0)$);

\draw[-latex] (SpRow.five south) -- ++(0,0) -| ($(SpRow.five south) + (0,0)$) |- ($(SentenceRow.one south) + (0,-0.2)$) -| ($(SentenceRow.one south) + (0,0)$);

\end{tikzpicture}

\vspace{-5pt}
\caption{Induced relational schema. Primary keys are in bold. Foreign keys are underlined and connected to references with arrows.}
\label{fig:example-induced-schema}
\end{figure}

\bfpara{Induced relational schema and default transformer.} As mentioned in Section~\ref{sec:intro}, our approach to reasoning about equivalence between SQL and Cypher queries relies on first transpiling the given Cypher query to a SQL query over the \emph{induced relational schema},  which corresponds to a natural relational representation of the graph database.
In particular, Figure~\ref{fig:example-induced-schema} shows the induced relational schema for  Figure~\ref{fig:example-graph-schema} and is obtained by (a) translating both node and edge types in the graph schema into tables, and (b) translating incidence and adjacency information between node and edge types into \emph{functional dependencies}. For instance, nodes of type {\tt CONCEPT} are mapped to the {\tt Concept} table, edges of type {\tt CS} are mapped to the {\tt Cs} table, and so on. To handle functional dependencies, we need to introduce foreign  keys in the corresponding tables. 
Compared to the relational schema in Figure~\ref{fig:example-relational-schema}, the induced relational schema preserves the attributes while using additional attributes (i.e. \texttt{SRC} and \texttt{TGT}) as foreign keys to represent functional dependencies.
For example, for edges of type {\tt CS}, the source node is of type {\tt CONCEPT}; thus, the induced relational schema has the \texttt{SRC} attribute as a foreign key to \texttt{CID} of the {\tt Concept} table. Similarly, the \texttt{TGT} attribute is considered a foreign key to the {\tt PA} table.
Given the original graph database schema, our method  constructs a so-called \emph{standard database transformer} $\sdt$ that can be used to convert any instance of the given graph schema to a relational database over its induced schema.


\begin{wrapfigure}{R}{0.5\textwidth}
\centering
\scriptsize 
\resizebox{1\linewidth}{!}{
\begin{tabular}{l}
\boldWith T1 \boldAs ( \boldSelect c1.CID \boldAs c1_CID, \ldots, s.SID \boldAs s_SID \\
\hspace{2em} \boldFrom \underline{Concept \boldAs c1 \boldJoin CS \boldAs r1} \\
\hspace{5em} \underline{\boldJoin PA \boldAs p1 \boldJoin SP \boldAs r2 \boldJoin Sentence \boldAs s } \\ 
\hspace{2em} \boldOn c1.CID = 1 \boldAnd c1.CID = r1.SRC \boldAnd \ldots~ \boldAnd r2.TGT = s.SID ), \\
\hspace{2em} T2 \boldAs ( \boldSelect s_SID \boldFrom T1 ), \\
\hspace{2em} T3 \boldAs ( \boldSelect s.SID \boldAs s_SID, \ldots, c2.CID \boldAs c2_CID \\
\hspace{2em} \boldFrom Sentence \boldAs s \boldJoin SP \boldAs r3 \\
\hspace{5em} \boldJoin PA \boldAs p2 \boldJoin CS \boldAs r4 \boldJoin Concept \boldAs c2 \\ 
\hspace{2em} \boldOn s.SID = r3.TGT \boldAnd \ldots~ \boldAnd r4.SRC = c2.CID ), \\
\hspace{2em} T4 \boldAs ( \boldSelect * \boldFrom T2 \boldJoin T3 \boldOn T2.s_SID = T3.s_SID ) \\
\boldSelect T4.c2_CID, \boldCount(*) \boldFrom T4 \boldGroupBy T4.c2_CID \\
\end{tabular}
}
\vspace{-10pt}
\caption{Transpilation result for the Cypher query.}
\label{fig:example-transpiled}
\vspace{-10pt}
\end{wrapfigure}

\bfpara{Syntax-directed transpilation.} 
Intuitively, the standard \trans establishes a one-to-one correspondence between  elements of the graph database and the corresponding entries in the induced relational database. Hence, we can use syntax-directed translation to directly transpile the Cypher query to a SQL query over the induced schema.  Here,  the transformer  $\sdt$ fixes the mapping between ``atomic elements'' of both databases; thus, atomic queries over nodes and edges in Cypher can be translated to atomic queries over SQL tables. This forms the base case for an inductive transpilation scheme, leveraging the key insight that Cypher path queries correspond to relational joins. For example, a Cypher path query like 
$\boldMatch \ (u) \texttt{-[:CS]->} (v)$ translates to a SQL join query between the \texttt{Concept}, \texttt{Cs}, and \texttt{Pa} tables, since the \trans specifies that the \texttt{CS} edge type maps to the \texttt{Cs} table in the relational schema.
Such a transpilation scheme is conceptually simple, but as our final transpilation result in Figure~\ref{fig:example-transpiled} shows, one must take significant care to address the different types of path patterns in Cypher syntax, in addition to translating compositions of path queries with other Cypher operators such as aggregation.
Specifically, the first pattern matching in Figure~\ref{fig:example-cypher-query} is translated to \texttt{T1} while the \texttt{WITH} clause propagates the intermediate results to \texttt{T2}. Similarly, the second pattern matching is translated to \texttt{T3}. Due to the shared node \texttt{s:SENTENCE} in these two path patterns, we join \texttt{T2} and \texttt{T3} as \texttt{T4}. The final \texttt{RETURN} clause is translated to \texttt{GroupBy} because of the aggregation expression.

\begin{figure}[!t]
\footnotesize
\vspace{-10pt}
\resizebox{.9\linewidth}{!}{
\hspace{-25pt}
\begin{minipage}{\linewidth}

\[
\begin{array}{rclrcl}
\texttt{Concept'}(\mathsf{cid}, \mathsf{name}) & \hspace{-10pt} \to \hspace{-10pt} & \texttt{Concept}(\mathsf{cid}, \mathsf{name}) & 
\texttt{Concept'}(\mathsf{cid}, \_), \texttt{Cs'}(\mathsf{cid, csid, cid, pid}), \texttt{Pa'}(\mathsf{pid}, \mathsf{csid}) & \hspace{-10pt} \to \hspace{-10pt} & \texttt{Cs}(\mathsf{cid, csid}) \\
\texttt{Pa'}(\mathsf{pid, csid}) & \hspace{-10pt} \to \hspace{-10pt} & \texttt{Pa}(\mathsf{pid, csid)} & 
\texttt{Pa'}(\mathsf{pid}, \_), \texttt{Sp'}(\mathsf{spid, sid, pid, pid, sid}), \texttt{Sentence'}(\mathsf{sid}, \_) & \hspace{-10pt} \to \hspace{-10pt} & \texttt{Sp}(\mathsf{spid, sid, pid}) \\
\texttt{Sentence'}(\mathsf{sid, pmid}) & \hspace{-10pt} \to \hspace{-10pt} & \texttt{Sentence}(\mathsf{sid, pmid}) & & \hspace{-10pt} \hspace{-10pt} & \\
\end{array}
\]

\end{minipage}
}

\vspace{-5pt}
\caption{Residual database transformer. All variables are universally quantified.}
\label{fig:example-rdt}
\vspace{-15pt}
\end{figure}

\bfpara{Checking equivalence.}
While our approach allows correct-by-construction transpilation from Cypher to SQL, there are several reasons why this is not sufficient. First, our approach does not guarantee that the transpilation result is the most efficient, so even though one could use existing SQL query optimizers to further optimize the query, the user may want to write their hand-optimized SQL query. Second, the user may want to use a different relational schema rather than our default version. Third, while our transpilation rules allow translating Cypher to SQL, they do not address the reverse direction. Motivated by these shortcomings, our method leverages the proposed transpilation algorithm to perform verification between any given pair of Cypher and SQL queries by utilizing existing tools for SQL. In particular, the key idea  is to infer a \emph{residual database transformer} that specifies the relationship between the relational database over the induced schema and the target relational database (as specified by the user-provided database transformer). For our running example,   Figure~\ref{fig:example-rdt} shows the residual transformer that can be used to convert instances of the induced relational schema from Figure~\ref{fig:example-induced-schema} to instances of the desired schema. Given such a residual schema, we  can use an existing SQL equivalence checker, such as \verieql~\cite{verieql-oopsla24}, to refute equivalence between these queries and obtain the counterexample shown in Figure~\ref{fig:example-instances}.


\section{Preliminaries} \label{sec:prelim}


\subsection{Background on Graph Databases}

A graph database instance is a \emph{property graph}, which contains nodes and edges carrying data. Typically, nodes model entities, and edges model relationships. Each node or edge in the graph stores data represented as pairs of property keys and values. Additionally, each node or edge is assigned a \emph{label}, which describes the kind of entity or relationship it models. A well-formed property graph should conform to a graph schema, which is formalized below.

\begin{definition}[{\bf Node/edge type}]
A node type $\nodeType$ is a tuple $(l, K_1, \ldots, K_n)$ where $l$ is the label of the node (e.g., $\mathsf{Actor}$) and $K_1,\ldots, K_n$ are the property keys for that node type (e.g., $\mathtt{name}$, $\mathtt{dob}$, etc.). An edge type $\edgeType$ is also a tuple $(l, \stype, \ttype, K_1,\ldots , K_m)$ where $l$ is a label (e.g., $\mathsf{ACTS\_IN}$), $\stype$ and $\ttype$ are the types of the source and target nodes respectively, and $K_1,\ldots, K_m$ are the property keys (e.g., $\mathtt{role}$) for that edge type.

\end{definition}

For each node type $\nodeType = (l, K_1, \ldots, K_n)$, we define $\lbl(\nodeType)$ to give the label $l$ of $\nodeType$, and we assume that $K_1$ is the \emph{default property key} for $\nodeType$, which is a key with a globally unique value, similar to a primary key in a relational database.  We define $\keys(\nodeType) = \{K_1,\ldots,K_n\}$ to yield the set of property keys associated with $\nodeType$. For an edge type $\edgeType = (l', \stype,\ttype, K'_1, \ldots, K'_m)$ we similarly define $\lbl(\edgeType) = l'$ and $\keys(\edgeType) = \{K'_1, \ldots, K'_m\}$, with key $K'_1$ being the default property key. Additionally, we define $\dstType(\edgeType) = \stype$ and $\srcType(\edgeType) = \ttype$. 



\begin{definition}[{\bf Graph database schema}]
A graph database schema $\gschema$ is a pair $(T_N, T_E)$ where $T_N$ is a set of node types and $T_E$ is a set of edge types. 
\end{definition}

For each graph database schema $\gschema = (T_N,T_E)$, we assume that the label of each node or edge type uniquely defines it: $\forall t_1, t_2 \in T_N \cup T_E. \lbl(t_1) \neq \lbl(t_2)$. Thus, we can use types and labels interchangeably. {Additionally, we assume that all property keys are unique inside a given schema $\Psi_G$, i.e., there are no name clashes between arbitrary pairs of property keys between different types.}

\begin{definition}[{\bf Graph database}]
An instance of a graph database schema $\Psi_G = (T_N, T_E)$ is a tuple $G = (N, E, P, T)$ where $N$ is a set of nodes,  $E \subseteq N \times N$ is a set of edges, $P: (N \cup E) \times \textbf{Keys} \to \textbf{Values}$, and $T : N\cup E \to T_N \cup T_E$ gives the type of a node $n \in N$ or an edge $e \in E$. 



\end{definition}

We use the notation $P(n, k)$ to give the value of a property key $k$ in node $n$  and analogously define $P(e, k)$  for an edge $e$. 
We also use the notation $G \conform \Psi_G$ to denote that G is an instance of schema $\Psi_G$, and we refer to any subgraph of $G$ as a \emph{property graph}.

\begin{figure}
\centering
\small 
\[
\begin{array}{l l c l}
\text{Query} & Q & ::= &  R  ~|~ \sfOrderBy(R, k, b)  ~|~ \sfUnion(Q, Q) ~|~ \sfUnionAll(Q, Q) \\ 
\text{Return Query} & R & ::= & \sfReturn(C, \overline{E}, \overline{k}) \\
\text{Clause} & C & ::= & \sfMatch(\pattern, \pred) ~|~ \sfMatch(C, \pattern, \pred) ~|~ \sfOptMatch(C, \pattern, \pred)  ~|~ \sfWith(C, \overline{X}, \overline{X}) \\ 
\text{Path Patt.} & \pattern & ::= & \nodepattern ~|~ \nodepattern, \edgepattern, \pattern \\ 
\text{Node Patt.} & \nodepattern & ::= & (X, l) \quad\quad \text{Edge Patt.} \quad \edgepattern ::= (X, l, d) \\ 
\text{Expression} & E & ::= & k ~|~ v ~|~ \sfCast(\phi) ~|~ {\sfAgg(E)} ~|~ E \arithop E \\ 
\text{Predicate} & \pred & ::= & \top ~|~ \bot ~|~ E \odot E ~|~ \sfIsNull(E) ~|~ E \in \overline{v} 
                 ~|~ \sfExists(\pattern) ~|~ \pred \land \pred ~|~ \pred \lor \pred ~|~ \neg \pred \\
\end{array}
\]
\[
\begin{array}{c}
X \in \textbf{Node/Edge Names} \quad
l \in \textbf{Labels} \quad
k \in \textbf{Property Keys} \quad
v \in \textbf{Values} \quad
b \in \textbf{Bools} \\
\sfAgg \in \set{\sfCount, \sfAvg, \sfSum, \sfMin, \sfMax} \quad 
d \in \{\rightarrow, \leftarrow, \leftrightarrow \} 
\end{array}
\]
\vspace{-10pt} 
\caption{Featherweight Cypher syntax where $\arithop, \logicop$ correspond to arithmetic and logical operators respectively.}
\label{fig:cypher-syntax}
\vspace{-10pt} 
\end{figure}

\subsection{Query Language for Graph Databases}
To formalize our method, we  focus on a subset of the popular Cypher database query language~\cite{cypher-sigmod18}. This subset,  which we refer to as ``Featherweight Cypher'', is presented in Figure~\ref{fig:cypher-syntax}. 
\footnote{
The denotational semantics is formally described in Appendix~\ref{sec:cypher-semantics}.
}
A  query $Q$ is either a union of return queries  $R$ or an order-by statement following one or more such return queries. 
Each return query $R$ takes as input a clause $C$, a list of expressions $\overline{E}$, and a list of property key names $\overline{k}$. Intuitively, the return query shapes a list of graphs into a table. Each clause in the return statement is a $\sfMatch$, representing a pattern match over an input property graph, and the patterns are specified using the path pattern $\pattern$.
We do not include Cypher features such as unbounded-length path queries and graph reachability primitives (e.g.,  \texttt{shortestPath}), which are not expressible in the core SQL fragment considered in this paper.

\begin{example}\label{example:cypher1}
Consider the following Cypher query
\[
\texttt{\boldMatch (n:EMP)-[:WORK_AT]->(m:DEPT) \boldReturn m.dname \boldAs name, \textbf{Count}(n) \boldAs num}
\]
that returns a table containing department names and the number of employees. We can represent it using our featherweight Cypher syntax as follows
\[
\sfReturn(\sfMatch([(\texttt{n}, \texttt{EMP}), (\texttt{e}, \texttt{WORK\_AT}, \rightarrow), (\texttt{m}, \texttt{DEPT})], \top), [\texttt{m.dname}, \sfCount(\texttt{n.id})], [\texttt{name}, \texttt{num}])
\]
Here, the match clause retrieves all paths of length one from \texttt{EMP} nodes to \texttt{DEPT} nodes connected by an edge of type \texttt{WORK_AT}. Then, the return clause reshapes the set of matched paths into a table with two columns: \texttt{name} and \texttt{num}. The \texttt{name} column is populated with values corresponding to the property key \texttt{dname} of the \texttt{DEPT} node, and the \texttt{num} column is populated by the count of \texttt{n.id}, {where \texttt{m} and \texttt{n} refer to the source and target nodes of the matched edge, respectively.} 

\end{example}

\subsection{Relational Databases}


\begin{definition}[{\bf Relational  schema}]
A \emph{relational database schema} is a pair $\rschema := (\tables, \integrity)$ where $\tables : \mathcal{R} \to [\mathcal{A}]$ is a mapping from a set of relation names $\mathcal{R}$ to a list of attributes, and $\integrity$ is an \emph{integrity constraint}. {We assume that all attribute names in a schema are unique.}
\end{definition}




We represent an integrity constraint as a conjunction of  three types of \emph{atomic} constraints:

\begin{enumerate}[leftmargin=*]
    \item {\bf Primary key constraints:} A primary key constraint $\PK(R) = a$ specifies that attribute $a$ is the \emph{primary key} for relation $R$ --- i.e.,  there  cannot be multiples tuples of $R$ that agree on  $a$.
    \item {\bf Foreign key constraints:} A foreign key constraint $\FK(R.a) = R'.a'$ specifies that the attribute $a$ in relation $R$ is a \emph{foreign key} corresponding to attribute $a'$ in relation $R'$. That is, the values stored in attribute $a$ of relation $R$ must be a subset of the values stored in attribute $a'$ of $R'$. 
    \item {\bf Not-null constraints:} A not-null constraint $\mathsf{NotNull}(R, a)$ specifies that the value stored at attribute $a$ of relation $R$ must not be $\mathsf{Null}$.
\end{enumerate}

\begin{definition}[{\bf Relational database instance}]
    A relational database instance $\reldb$ is a collection of  tuples $\{r_1,\ldots,r_m\}$, where each $r_i \in \reldb$ is of the form $(a_1 : v_1, \ldots, a_n : v_n)$. Here, $a_1,\ldots,a_n$ are attributes and $v_1,\ldots,v_n$ are values. We let $\attributes(r_i)$ return the list of attributes $a_1,\ldots,a_n$ in sequence. We use the notation $r_i.a$ to denote the value stored in attribute $a$ of tuple $r_i$.
\end{definition}


As with graph databases, we use the notation  $\reldb \conform \rschema $, to denote that $\reldb$ is instance of $\rschema$.

\bfpara{SQL query language}. In this paper, we consider relational database queries written in SQL. Figure~\ref{fig:sql-syntax} shows the subset of SQL that we use in our formalization.{
At a high level, this language extends  core relational algebra (e.g., projection $\proj$, selection $\filter$, renaming $\rename$, joins $\joinop$, set and bag unions $\cap, \uplus$) to incorporate \ttGroupBy, \ttOrderBy, and \ttWith clauses. It can express a representative fragment of SQL queries that are commonly used in practice.
The semantics of these SQL operators are standard and formally defined by prior work such as \citet{verieql-oopsla24}.
}

\begin{figure}
\vspace{-0.2in}
\centering
\small
\[
\begin{array}{l l c l}
\text{Query} & Q & ::= & R ~|~ \proj_L(Q) ~|~ \filter_{\phi}(Q) ~|~ \rho_{R}(Q) ~|~ Q \cup Q ~|~ Q \uplus Q ~|~ Q \joinop Q  \\
& & ~|~ & \ttGroupBy(Q, \overline{E}, L, \phi) ~|~ \ttWith(\overline{Q}, \overline{R}, Q) ~|~ \ttOrderBy(Q, a, b) \\ 
\text{Attribute List} & L & ::= & E ~|~ \rho_a(E) ~|~ L, L \\ 
\text{Attribute Expr} & E & ::= & a ~|~ v ~|~ \ttCast(\phi) ~|~ \ttAgg(E) ~|~ E \arithop E \\ 
\text{Predicate} & \phi & ::= & b ~|~ E \logicop E ~|~ \ttIsNull(E) ~|~ E \in \overline{v} ~|~ \overline{E} \in Q ~|~ \phi \land \phi ~|~ \phi \lor \phi ~|~ \neg \phi \\
\text{Join Op} & \joinop & ::= & \times ~|~ \bowtie_\phi ~|~ \fjoin_\phi ~|~ \leftouterjoin_\phi ~|~ \rightouterjoin_\phi \\ 
\end{array}
\] 
\[
\begin{array}{c}
R \in \textbf{Relation Names} \quad a \in \textbf{Attr Names} \quad  v \in \textbf{Values} \quad b \in \textbf{Bools}  \quad
\ttAgg \in \set{\ttCount, \ttAvg, \ttSum, \ttMin, \ttMax}
\end{array}
\]
\vspace{-20pt}
\caption{Featherweight SQL syntax; $\arithop$ and $\logicop$ represent arithmetic and logical operators respectively}
\label{fig:sql-syntax}
\vspace{-10pt}
\end{figure}



\section{Problem Statement} \label{sec:problem}

In this section, we first describe the language for database transformers and then formally define the equivalence checking  problem between graph and relational databases.

\subsection{Language for Database Transformers}\label{sec:transformer}

\begin{figure}
\centering
\small
\[
\begin{array}{l l c l}
\text{Transformer} & \transformer & ::= & P, \ldots, P \to P ~|~ \transformer ~\transformer \\
\text{Predicate}   & P & ::= & E(t, \ldots, t) \\
\text{Term}        & t & ::= & c ~|~ v ~|~ \_ \\
\end{array}
\]
\[
\begin{array}{c}
E \in \textbf{Table Names} \cup \textbf{Node Labels} \cup \textbf{Edge Labels} \quad
c \in \textbf{Constants} \quad
v \in \textbf{Variables} \\
\end{array}
\]
\vspace{-20pt}
\caption{Syntax of the database transformer.}
\label{fig:transformer-syntax}
\vspace{-10pt}
\end{figure}

In this section, we present a small domain-specific language (DSL), shown in Figure~\ref{fig:transformer-syntax},  for expressing database transformers. Following prior work~\cite{dynamite-vldb20}, our DSL generalizes the standard concept of \emph{schema mapping} ~\cite{clio-vldb00,clio-cmfa09} for relational databases to a more flexible form. In particular, a database transformer in this DSL is expressed as a set of first-order formulas of the form $P_1, \ldots, P_n \to P_0$, where each $P_i$ is a predicate that represents a  database element, such as a table in a relational database or a node or edge in a graph database. Each predicate is of the form $E(t_1, \ldots, t_n)$ where $E$ corresponds a table name, node labels, or edge labels, and each $t_j$ is a term  (variable or a constant), with $\_$ denoting a fresh variable that is not used. All  variables are implicitly universally quantified but the quantifiers are omitted in the syntax for brevity.
Intuitively, the formula $P_1, \ldots, P_n \to P_0$ expresses that, if predicates $P_1, \ldots, P_n$ hold over a database instance $D$, then predicate $P_0$ holds over another database instance~$D'$.

\bfpara{Semantics.}
To define the semantics of our transformer DSL, we first represent a database transformer $\transformer$ as a set of universally quantified first-order logic formulas, denoted as $\denot{\transformer}$.  The idea behind the semantics of our DSL is to represent each database instance as a set of ground predicates and then check whether these predicates entail the first-order logic formula $\denot{\transformer}$ under the Herbrand semantics.

To make this discussion more precise, we introduce a function $\convert$ that maps a database instance to a set of ground predicates representing its structure and contents. Formally, $\convert$ is defined as follows:
\[
\convert(D) = \{ E(t_1, \ldots, t_n) \mid E \in D \}
\]
The mapping of elements in $D$ to predicates depends on whether $D$ is a relational or graph database:

\begin{itemize}[leftmargin=*]
    \item \textbf{For relational database instance $D$:}  If $R$ is a table in $D$ with a set of records $\{ (a_1, \ldots, a_n) \}$, then $R$ is converted to the following set of ground predicates:
        \[
        \set{ R(a_1, \ldots, a_n) \mid (a_1, \ldots, a_n) \in R }
        \]
        Here, $R$ represents the table name, and each $a_i$ is a constant representing a value in the record.
    
    \item \textbf{For graph database instance $D$:} If $N(l, a_1, \ldots, a_n)$ denotes a node with label $l$ and values $a_1, \ldots, a_n$ of property keys $K_1, \ldots, K_n$, nodes are converted to the following set of ground facts:
        \[
        \set{ l(a_1, \ldots, a_n) \mid \text{node } N(l, a_1, \ldots, a_n) \in D }
        \]
   Similarly, if $E(l, s, t, a_1, \ldots, a_n)$ denotes an edge with label $l$ that connects nodes $s$ and $t$  and has property  values $a_1, \ldots a_n$, then edges are converted to predicates as follows:
        \[
        \set{ l(a_1, \ldots, a_n, s, t) \mid \text{edge } E(l, s, t, a_1, \ldots, a_n) \in D }
        \]

\end{itemize}

Given a database instance $D$,  $\convert(D)$ yields a set of ground predicates representing the structure and contents of $D$. We can now define the semantics of the transformer $\transformer$ as follows:
\[
\transformer(D) = D' \quad \Leftrightarrow \quad  \convert(D)\cup \convert(D') \models \denot{\transformer}
\]
where the notation $S \models \varphi$ indicates that the set $S$ of ground predicates is a Herbrand model of $\varphi$.

\begin{example}
Consider the graph and relational database instances $\graph, \reldb$ from Figures~\ref{fig:example-graph-instance} and ~\ref{fig:example-relational-instance} respectively. For the transformer $\transformer$ shown in Figure~\ref{fig:example-transformer}, we have $\transformer(\graph) = {\reldb}$.
\end{example}

\subsection{Equivalence Checking Problem}

In this section, we formally define what it means for a pair of graph and relational queries to be equivalent modulo a \trans $\transformer$, expressed in the DSL of Section~\ref{sec:transformer}.  To this end, we first introduce some necessary definitions and notations.

\begin{SCfigure}
\centering
\includegraphics[scale=0.2]{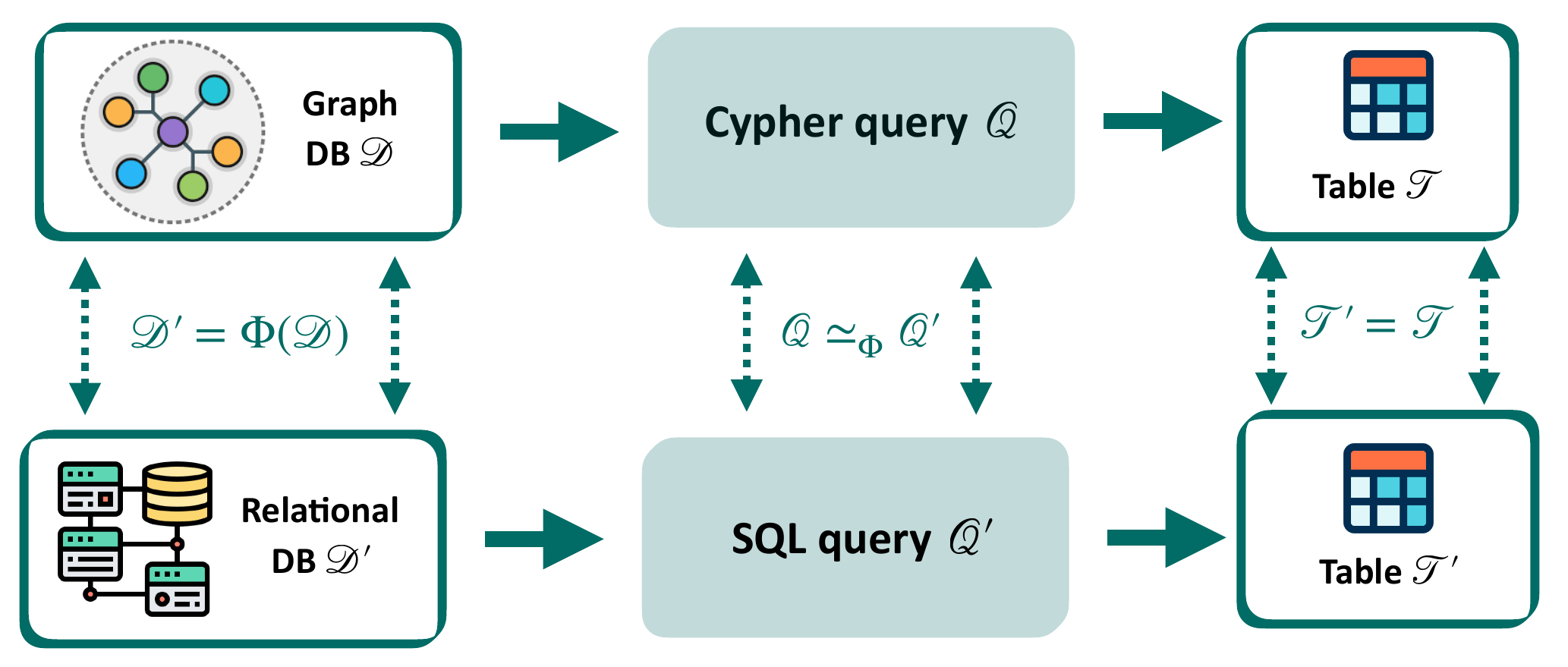}
\caption{Definition of equivalence between Cypher and SQL queries where $\Phi$ is the user-provided schema transformer. A pair of Cypher and SQL queries are equivalent if they produce the same table (modulo renaming) whenever they are executed on a pair of database instances satisfying $\Phi$. \\}
\label{fig:db-equivalence}
\end{SCfigure}

\begin{definition}[{\bf Database query}]
A database query $Q$ over schema $\schema$ takes as input a database instance $D$ such that   $D \conform \schema$ and yields a table. We denote the semantics of $Q$ as  $\denot{Q}_D$.
\end{definition}

The definition above is sufficiently general to describe both SQL and Cypher queries, as both query languages return a table.


\begin{definition}[{\bf Database equivalence}]
Let $D, D'$ be  database instances over schemas $\schema, \schema'$ respectively and let $\transformer$ be a \trans that can be used to convert instances of $\schema$ to instances of $\schema'$. Then,  $D$ is said to be equivalent to $D'$ modulo $\transformer$, denoted $D \sim_\transformer D'$, if $\transformer(D) = D'$.
\end{definition}

According to the above definition, a graph database instance $G$ is equivalent to a relational database instance $R$ if the contents of $R$ can be obtained from $G$ by applying the transformer $\transformer$ to $G$.  Next, to define equivalence between graph and relational queries, we need to define what it means for the query outputs to be the same. Since queries in both Cypher and SQL return tables, we need a notion of equivalence between tables. 

\begin{definition}[{\bf Table equivalence}]
\label{def:tableEquivalence}
Two tables $T$ and $T'$ are said to be equivalent, denoted $T \equiv T'$, if there exists a bijective mapping $\pi$ from columns of $T$ to those of $T'$ such that, for each  tuple $r \in T$ with multiplicity $n$, there exists a unique tuple $r' \in T'$ with multiplicity $n$ where $\forall a \in \attributes(r).~ r.a = r'.\pi(a)$. 
\end{definition}

In other words, our notion of table equivalence disregards the order of attributes as well as their names, which allows for a more robust notion of query equivalence.\footnote{If a query includes an $\ttOrderBy$ clause, list semantics will be applied, where the result is an ordered list of tuples. In this case, the equivalence of two tables $T$ and $T'$ is defined such that there exists a bijective mapping $\pi$ between their attributes, and for each pair of tuples $r \in T$ and $r' \in T'$ at the same index, it holds that $\forall a \in \attributes(r).~ r.a = r'.\pi(a)$.}


\begin{definition}[{\bf Graph-relational query equivalence}]
\label{def:queryEquivalence}
Let $\gschema$ and $\rschema$ be graph and relational  schemas respectively, and $\transformer$ be a transformer from $\gschema$ to $\rschema$. A query $Q$ over $\gschema$ is \emph{equivalent} to $Q'$ over $\rschema$ modulo $\transformer$, denoted $Q \simeq_\transformer Q'$, iff:
\[
\forall G, R. \ \ (G \conform \gschema \land R \conform \rschema \land G \sim_{\transformer} R) \ \Rightarrow  \ \denot{Q}_G \tbleq \denot{Q'}_R
\]
\end{definition}

In other words, $Q, Q'$ are considered equivalent if they produce the same tables (modulo renaming/re-ordering of columns) when executed on a pair of database instances $G, R$ satisfying
$G \sim_\transformer R $. Figure~\ref{fig:db-equivalence} visualizes this definition of equivalence between graph databases and relational databases. 




\begin{definition}[{\bf Equivalence checking problem}]
\label{def:verificationProblem}
Given  graph and relational queries $Q, Q'$ and  a  transformer $\transformer$ from graph schema $\gschema$ to relational schema $\rschema$,  the  equivalence checking problem is to decide whether  $Q \simeq_\transformer Q'$.
\end{definition}

\section{Equivalence Checking Algorithm}\label{sec:algorithm}

\begin{figure}[!t]
\begin{algorithm}[H]
\caption{Methodology for checking equivalence between Cypher and SQL queries}
\label{algo:verify}
\begin{algorithmic}[1]
\Procedure{\textsc{CheckEquivalence}}{$\gschema, Q_G, \rschema, Q_R, \transformer$}
\vspace{2pt}
\Statex \textbf{Input:} Graph and relational schemas $\gschema, \rschema$, Cypher query $Q_G$, SQL query $Q_R$, transformer  $\transformer$
\Statex \textbf{Output:} $\top$ for equivalence or $\bot$ indicating failure
\vspace{2pt}


\State $( \sdt, \rschema') \gets \textsc{InferSDT}(\gschema)$
\State $Q_{R'} \gets \textsc{Transpile}(Q_G, \sdt, \rschema')$
\State \Return $\textsc{ReduceToSQL}(\rschema, Q_R, \schema_{R}', Q_{R}', \transformer, \sdt)$

\EndProcedure
\end{algorithmic}
\end{algorithm}
\vspace{-20pt}
\end{figure}

In this section, we present our algorithm, summarized in Algorithm~\ref{algo:verify}, for checking equivalence between Cypher and SQL queries.  As shown in Algorithm~\ref{algo:verify}, our algorithm consists of three steps: 

\begin{enumerate}[leftmargin=*]
\item {\bf Schema and transformer inference:}
Given the graph database schema $\gschema$, our algorithm first invokes the {\sc InferSDT} function to infer both the induced relational schema $\rschema' = (S, \integrity)$ as well as the standard \trans $\sdt$. 
\item {\bf Syntax-directed transpilation:}
Next, our algorithm uses the inferred \trans and integrity constraints to transpile the Cypher query into an SQL query $Q_{R}'$ that is guaranteed to be equivalent to $Q_G$ modulo the standard $\sdt$. 
\item {\bf Checking SQL equivalence:}
{
Finally, the algorithm computes a residual \trans $\rdt$ 
that can be used to convert instances of $\rschema'$ into instances of $\rschema$ and checks equivalence between SQL queries $Q_R$ and $Q_{R}'$ modulo the residual \trans $\rdt$ relating a pair of database schemas.
}
\end{enumerate} 



\bfpara{Discussion.} An alternative approach to solving the equivalence checking problem would be to directly reason about equivalence between the graph query $Q_G$ and the relational query $Q_R$. While such an approach would be more direct, we adopt the methodology illustrated in Figure~\ref{fig:eq-approach} for several reasons. First, in order to directly verify equivalence between graph and relational database queries, we need suitable SMT encodings of \emph{both} graphs and relations, which makes the resulting constraint solving problem harder compared to the alternative. Second, the reduction to relational equivalence checking allows us to leverage a variety of existing tools that have been developed for SQL, including testing tools~\cite{xdata-vldbj15}, bounded model checkers~\cite{verieql-oopsla24}, and deductive verifiers~\cite{mediator-popl18}.

\subsection{Induced Relational Schema and Standard Transformer Inference}

\begin{figure}[!t]
\begin{mdframed}
\small
\[
\hspace{-10pt}
\begin{array}{c}

{
\irulelabel
{\begin{array}{c}
 \nodeType = (l, K_1, \ldots, K_n) \quad 
 \integrity = (\mathsf{PK}(R_l) = K_1) \quad
 \transformer = \set{ l(K_1, \ldots, K_n) \to R_l(K_1, \ldots, K_n) }
\end{array}}
{ \nodeType \hookrightarrow (\set{R_l \mapsto  (K_1, \ldots, K_n)}, \integrity, \transformer)  }
{\textrm{(Node)}}
}

\\ \ \\ 

{
\irulelabel
{\begin{array}{c}
 \edgeType = (l,  \stype, \ttype, K_1, \ldots, K_m) \quad
 \lbl(\stype) = s \quad
 \lbl(\ttype) = t \\
 \integrity = \PK(R_l) = R_l.K_1 \land \FK(R_l.\fk_s) = \PK(R_s) \land \FK(R_l.\fk_t) = \PK(R_t) \\
 \transformer = \set{ l(K_1, \ldots, K_m, \fk_s, \fk_t) \to R_l(K_1, \ldots, K_m, \fk_s, \fk_t) }
\end{array}}
{ \edgeType \hookrightarrow (\set{R_l \mapsto (K_1, \ldots, K_m, 
\fk_s, \fk_t)}, \integrity, \transformer)  }
{\textrm{(Edge)}}
}

\\ \ \\ 

\irulelabel
{\begin{array}{c}
T_1 \hookrightarrow (S_1, \integrity_1, \transformer_1) \quad  T_1 \hookrightarrow (S_2, \integrity_2, \transformer_2) \\
\end{array}}
{T_1 \uplus T_2 \hookrightarrow (S_1 \uplus S_2, \integrity_1 \land \integrity_2, \transformer_1 \cup \transformer_2)}
{\textrm{(Set)}}

\irulelabel
{\begin{array}{c}
(T_N \uplus T_E) \hookrightarrow (S, \integrity, \transformer)  \\
\end{array}}
{ (T_N, T_E) \hookrightarrow (S, \integrity, \transformer)}
{\textrm{(Schema)}}
\end{array}
\]
\end{mdframed}
\vspace{-10pt}
\caption{Rules for the {\sc InferSDT} procedure. $R'_l$ denotes the table name of $R_l$ in the induced relational schema.}
\vspace{0pt}
\label{fig:infer-sdt}

\end{figure}

We first discuss the {\sc InferSDT} procedure for inferring the induced relational schema as well as the standard \trans. This procedure is formalized in Figure~\ref{fig:infer-sdt} using inference rules of the form:
$
\gschema \hookrightarrow (S, \integrity, \sdt)
$
where $\gschema$ corresponds to elements of the graph schema (nodes, edges, and subgraphs),  $\rschema = (S, \integrity)$ is the induced relational schema for $\gschema$, and $\sdt$ is the standard \trans that can be used to convert instances of $\gschema$ into instances of $\rschema$.

\bfpara{Induced relational schema.}
Intuitively, the induced relational schema is the ``closest'' relational representation of the graph database schema $\gschema = (T_N, T_E)$ that represents each node and edge type as a relational table.
As shown in the \textrm{Node} rule, for each node type $\nodeType = (l, K_1, \ldots, K_n) \in T_N$ with label $l$ and default property key $K_1$, we introduce a table $R_l$ with attributes $K_1, \ldots, K_n$ in the induced relational schema. We also use the default property key $K_1$ as the primary key of the corresponding table and generate an integrity constraint $\PK(R_l) = K_1$. 
Similarly, as shown in the \textrm{Edge} rule, for each edge type $\edgeType = (l, \stype, \ttype, K_1, \ldots, K_m) \in T_E$ with default property key $K_1$, we introduce a table $R_l$ with attributes $K_1, ..., K_m, \fk_s, \fk_t$. Here, $K_1$ is also the primary key of table $R_l$, and $\fk_s, \fk_t$ are foreign keys which reference to primary keys of the tables corresponding to source and target nodes. Thus, the integrity constraint is $\PK(R_l) = R_l.K_1 \land \FK(R_l.\fk_s) = \PK(R_s) \land \FK(R_l.\fk_t) = \PK(R_t)$, where $R_s, R_t$ are the tables corresponding to the source and target nodes.

\begin{example}\label{ex:irs}

\begin{figure}[!t]
\centering
\small

\begin{subfigure}{0.56\linewidth}
\centering
\begin{tikzpicture}[main/.style = {draw=black, circle, ultra thick, minimum size=1.cm}] 

\node[main] (1) at (0, 0){\footnotesize EMP};
\node[main] (2) at (3., 0){\footnotesize DEPT};

\draw[->, line width=1.2pt] (1) -- (2) node [midway, above, sloped] (PURCHASED) {\footnotesize WORK\_AT};

\end{tikzpicture} 
\vspace{-5pt}
\caption{Graph schema. Node labels are inside the nodes. Edge labels are above the arrows. \textrm{EMP} has property keys: \textbf{id}, \textrm{name}. \textrm{DEPT} has property keys: \textbf{dnum}, \textrm{dname}. \textrm{WORK\_AT} has a property key: \textbf{wid}. The default property keys are in bold.}
\label{fig:example-irs-graph}
\end{subfigure}
~\hspace{10pt}
\begin{subfigure}{0.43\linewidth}
\centering
\footnotesize
\begin{tikzpicture}[relation/.style={rectangle split, rectangle split parts=#1, rectangle split part align=base, draw, anchor=center, align=center, text height=3mm, text centered}]

\node (emp) {\textbf{emp}};
\node [left=-3.cm of emp] (work) {\textbf{work\_at}};
\node [left=-3.cm of work] (dept) {\textbf{dept}};

\node [relation=2, rectangle split horizontal, rectangle split part fill={lightgray!50}, anchor=north west, below=0.6cm of emp.west, anchor=west] (empRow)
{\nodepart{one} \textbf{id}
\nodepart{two} name};

\node [relation=3, rectangle split horizontal, rectangle split part fill={lightgray!50}, anchor=north west, below=0.6cm of work.west, anchor=west] (workRow)
{\nodepart{one} \textbf{wid}
\nodepart{two} \underline{SRC}
\nodepart{three} \underline{TGT}};

\node [relation=2, rectangle split horizontal, rectangle split part fill={lightgray!50}, anchor=north west, below=0.6cm of dept.west, anchor=west] (deptRow)
{\nodepart{one} \textbf{dnum}
\nodepart{two} dname};

\draw[-latex] (workRow.two south) -- ++(0,0) -| ($(workRow.two south) + (0,0)$) |- ($(empRow.one south) + (0,-0.2)$) -| ($(empRow.one south) + (0,0)$);

\draw[-latex] (workRow.three south) -- ++(0,0) -| ($(workRow.three south) + (0,0)$) |- ($(deptRow.one south) + (0,-0.2)$) -| ($(deptRow.one south) + (0,0)$);

\end{tikzpicture}
\caption{Induced relational schema. Primary keys are in bold. Foreign keys are underlined and connected to references with arrows.}
\label{fig:example-irs-relational}
\end{subfigure}

\vspace{-5pt}
\caption{Example of a graph schema and its induced relational schema.}
\label{fig:example-irs}
\vspace{-10pt}
\end{figure}

Consider the graph schema $\gschema$ shown in Figure~\ref{fig:example-irs-graph}. Its induced relational schema $\rschema$ is visualized in Figure~\ref{fig:example-irs-relational}.
\end{example}

\bfpara{Standard \trans.}
The standard \trans (\sdtname) is expressed in the same language as a general \trans from Section~\ref{sec:transformer}, which transforms instances of a graph schema $\gschema$ to instances of a relational schema $\rschema$. 
Intuitively, the standard database transformer for a  graph schema $\gschema$ converts each node and edge type to a separate table, and all occurrences of that element type in a graph database $\graph$ become  tuples in the corresponding table of the relational database. 
Specifically, as shown in the \textrm{Node} rule, for each node of type $\nodeType = (l, K_1, \ldots, K_n)$, we generate a formula $l(K_1, \ldots, K_n) \to R_l(K_1, \ldots, K_n)$ which transforms a predicate $R_l(v_1, \ldots, v_n)$ representing a graph element  to a predicate $R'_l(v_1, \ldots, v_n)$ representing a tuple in the relational database.
Similarly, we generate a formula $l(K_1, \ldots, K_m, \fk_s, \fk_t) \to R_l(K_1, \ldots, K_m, \fk_s, \fk_t)$ for each edge type in the graph schema.

\begin{example} \label{ex:sdt}
Consider the graph database $\graph$ visualized in Figure~\ref{fig:example-small-sdt-graph}. The \sdtname $\sdt$ transforms $\graph$ to the relational database shown in Figure~\ref{fig:example-small-sdt-relation}.
\begin{figure}[!t]
\centering
\begin{minipage}{\linewidth}
\centering
\begin{subfigure}{0.3\linewidth}
\centering
\normalsize
\begin{tikzpicture}[main/.style = {draw, circle}] 
\node[circle, draw=black] (1) at (0, 1){A};
\node[circle, draw=black] (2) at (0, 0){B};
\node[circle, draw=black, dashed] (3) at (2.3, 1){CS};
\node[circle, draw=black, dashed] (4) at (2.3, 0){EE};

\draw[->] (1) -- (3) node [midway, above, sloped] (cs1) {\scalebox{.7}{:WORK\_AT}};
\draw[->] (2) -- (3) node [midway, below, sloped] (cs2) {\scalebox{.7}{:WORK\_AT}};
\end{tikzpicture} 
\vspace{-5pt}
\caption{A graph database instance.}
\label{fig:example-small-sdt-graph}
\end{subfigure}
~
\centering
\begin{subfigure}{0.6\linewidth}
\centering
\begin{tabular}{|c|c|c|}
\multicolumn{2}{c}{\textbf{emp}} \\
\hline
id & name \\
\hline
1 & A \\
\hline
2 & B \\
\hline
\end{tabular}
~
\begin{tabular}{|c|c|c|c|}
\multicolumn{3}{c}{\textbf{work\_at}} \\
\hline
wid & SRC & TGT \\
\hline
10 & 1 & 1 \\
\hline
11 & 2 & 1 \\
\hline
\end{tabular}
~
\begin{tabular}{|c|c|c|}
\multicolumn{2}{c}{\textbf{dept}} \\
\hline
dnum & dname \\
\hline
1 & CS \\
\hline
2 & EE \\
\hline
\end{tabular}
\caption{The relational tables.}
\label{fig:example-small-sdt-relation}
\end{subfigure}
\end{minipage}
\vspace{-5pt}
\caption{Example graph database and its corresponding relational database over the induced relational schema.}
\vspace{-5pt}
\label{fig:example-small-sdt}
\end{figure}

\end{example}

\subsection{Syntax-Directed Transpilation}
Building upon the standard database transformer (SDT) introduced earlier, we now turn to the core task of translating Cypher queries into corresponding SQL queries over the induced schema. This process presents several challenges, as it involves mapping graph-based operations in Cypher to the relational model in SQL, while ensuring that the original query semantics are preserved. Specifically, Cypher includes features like pattern matching over subgraphs and optional matches, which do not have direct equivalents in SQL. Additionally, Cypher aggregates data over matched subgraphs, whereas SQL aggregates over grouped tuples. Finally, ensuring consistent mappings of graph nodes and edges across the query is critical to maintaining the integrity of references throughout the process.
Our syntax-directed transpilation approach addresses these challenges by converting Cypher queries into SQL queries over the induced relational schema. The key idea is that path patterns in Cypher can be represented by relational joins in SQL. For instance, Cypher's match clauses are translated into SQL inner joins, and optional match clauses map to outer joins. This approach ensures that the pattern matching semantics of Cypher queries are faithfully represented within the relational model, maintaining the integrity of the original graph-based operations.
We now describe a core subset of our syntax-directed transpilation rules, with further details available in Appendix~\ref{sec:transpilation-pred-expr}.

\begin{figure}[!t]
\scriptsize
\begin{mdframed}
\[
\hspace{-10pt}
\begin{array}{c}

\irulelabel
{\begin{array}{c}
\neg \mathsf{hasAgg}(\overline{E}) \quad
\sdt, \rschema \vdash C \transclause \Xset, Q \\
\sdt, \rschema \vdash E_i \transexpr E_i' \quad 1 \leq i \leq |\overline{E}| \\
\end{array}}
{\sdt, \rschema \vdash \sfReturn(C, \overline{E}, \overline{k}) \transquery \Pi_{\rename_{\overline{k}}(\overline{E'})}(Q)}
{\textrm{(Q-Ret)}}

\irulelabel
{\begin{array}{c}
\mathsf{hasAgg}(\overline{E}) \quad
\sdt, \rschema \vdash E_i \transexpr E_i' \quad
1 \leq i \leq |\overline{E}| \\
\sdt, \rschema \vdash C \transclause \Xset, Q \quad
\overline{A} = \mathsf{filter}(\lambda x. \neg \mathsf{IsAgg}(x), \overline{E'}) \\
\end{array}}
{\sdt, \rschema \vdash \sfReturn(C, \overline{E}, \overline{k}) \transquery \ttGroupBy(Q, \overline{A}, \rename_{\overline{k}}(\overline{E'}), \top)}
{\textrm{(Q-Agg)}}

\\ \ \\

\irulelabel
{\begin{array}{c}
\sdt, \rschema \vdash Q \transquery Q' \quad
\sdt, \rschema \vdash k \transexpr a \\
\end{array}}
{\sdt, \rschema \vdash \sfOrderBy(Q, k, b) \transquery \ttOrderBy(Q', a, b)}
{\textrm{(Q-OrderBy)}}

\\ \ \\

\irulelabel
{\begin{array}{c}
\sdt, \rschema \vdash Q_1 \transquery Q_1' \quad
\sdt, \rschema \vdash Q_2 \transquery Q_2' \\
\end{array}}
{\sdt, \rschema \vdash \sfUnion(Q_1, Q_2) \transquery Q_1' \cup Q_2'}
{\textrm{(Q-Union)}}

\irulelabel
{\begin{array}{c}
\sdt, \rschema \vdash Q_1 \transquery Q_1' \quad
\sdt, \rschema \vdash Q_2 \transquery Q_2' \\
\end{array}}
{\sdt, \rschema \vdash \sfUnionAll(Q_1, Q_2) \transquery Q_1' \uplus Q_2'}
{\textrm{(Q-UnionAll)}}

\end{array}
\]
\end{mdframed}
\vspace{-10pt}
\caption{Translation rules for queries.}
\vspace{-10pt}
\label{fig:trans-query}
\end{figure}

\bfpara{Translating queries.}
The translation rules for queries, illustrated in Figure~\ref{fig:trans-query}, use judgments of the form $\sdt, \rschema \vdash Q \transquery Q'$, where a Cypher query $Q$ maps to an SQL query $Q'$ given \sdtname $\sdt$ and induced \trans $\rschema$. Among these rules, handling $\sfReturn$ requires particular attention, as it involves distinguishing between cases with and without aggregate functions.
If there are no aggregation functions in $\overline{E}$, the \textrm{Q-Ret} rule produces a straightforward translation: the Cypher query $\sfReturn(C, \overline{E}, \overline{k})$ becomes a simple SQL projection $\proj_{\rename_{\overline{k}}(\overline{E'})}(Q)$. Here, $Q$ is the translated result of the Cypher clause $C$, and $\overline{E'}$ represents the translated expressions of $\overline{E}$, with all attributes renamed to $\overline{k}$.  In contrast, when aggregation functions appear in $\overline{E}$, the translation shifts to the \textrm{Q-Agg} rule, which requires generating a GroupBy query. This is necessary because SQL uses GroupBy to manage aggregation by partitioning rows based on non-aggregated columns. In the Cypher query $\sfReturn(C, \overline{E}, \overline{k})$, the non-aggregation expressions $\overline{A}$ act as grouping keys, while the aggregated expressions ensure the correct computation of results for each group.

\begin{example} \label{ex:query}
Consider the following Cypher query and the SDT from Example~\ref{ex:sdt}: 
\[
\sfReturn(\sfMatch([(\texttt{n}, \texttt{EMP}), (\texttt{e}, \texttt{WORK\_AT}, \rightarrow), (\texttt{m}, \texttt{DEPT})], \top), [\texttt{m.dname}, \sfCount(\texttt{n.id})], [\texttt{name}, \texttt{num}])
\]
Since there is a $\sfCount$ aggregation in the return query, we apply the \textrm{Q-Agg} rule to translate it to a \ttGroupBy query in SQL.
Specifically, we first apply the \textrm{C-Match1} rule to translate the match clause $\sfMatch([(\texttt{n}, \texttt{EMP}), (\texttt{e}, \texttt{WORK\_AT}, \rightarrow), (\texttt{m}, \texttt{DEPT})], \top)$ into a SQL query $Q$ (explained in Example~\ref{ex:clause}).
Then we translate the returned Cypher expressions \texttt{m.dname} and $\sfCount(\texttt{n.id})$ to their corresponding SQL expressions \texttt{m.dname} and $\ttCount(\texttt{n.id})$. Among these expressions, we find the one that does not contain aggregations, namely \texttt{m.dname}, and use it as the grouping key for \ttGroupBy. Since there is no filtering based on the aggregated results, the \ttHaving clause for \ttGroupBy is not generated.
Therefore, the translated SQL query is 
$\ttGroupBy(Q, [\mathtt{m.dname}], \rename_{[\mathtt{name}, \mathtt{num}]}([\mathtt{m.dname}, \mathtt{Count(n.id)}]), \top)$.
\end{example}

\begin{figure}[!t]
\scriptsize
\begin{mdframed}
\[
\hspace{-10pt}
\begin{array}{c}

\irulelabel
{\begin{array}{c}
\sdt, \rschema \vdash PP \transpattern \Xset, Q \quad
\sdt, \rschema \vdash \pred \transpred \pred' \\
\end{array}}
{\sdt, \rschema \vdash \sfMatch(PP, \pred) \transclause \Xset, \filter_{\pred'}(Q)}
{\textrm{(C-Match1)}}

\irulelabel
{\begin{array}{c}
\sdt, \rschema \vdash C \transclause \Xset, \proj_L(Q) \quad
\Xset' = \Xset \setminus \overline{Y} \cup \overline{Z} \\
\end{array}}
{\sdt, \rschema \vdash \sfWith(C, \overline{Y}, \overline{Z}) \transclause \Xset', \proj_{\rename_{L[\overline{Z}/\overline{Y}]}(L)}(Q)}
{\textrm{(C-With)}}

\\ \ \\

\irulelabel
{\begin{array}{c}
\sdt, \rschema \vdash C \transclause \Xset_1, Q_1 \quad
\sdt, \rschema \vdash PP \transpattern \Xset_2, Q_2 \quad
\sdt, \rschema \vdash \pred \transpred \pred' \quad
\fresh ~ T_1, T_2 \\
\pred'' = \pred' \land \land_{(X:l) \in \Xset_1 \cap \Xset_2} T_1.\PK(R_l) = T_2.\PK(R_l) \text{ where } l(\ldots) \to R_l(\ldots) \in \sdt \\
\end{array}}
{\sdt, \rschema \vdash \sfMatch(C, PP, \pred) \transclause \Xset_1 \cup \Xset_2, \rename_{T_1}(Q_1) \ijoin_{\pred''} \rename_{T_2}(Q_2)}
{\textrm{(C-Match2)}}

\\ \ \\

\irulelabel
{\begin{array}{c}
\sdt, \rschema \vdash C \transclause \Xset_1, Q_1 \quad
\sdt, \rschema \vdash PP \transpattern \Xset_2, Q_2 \quad
\sdt, \rschema \vdash \pred \transpred \pred' \quad
\fresh ~ T_1, T_2 \\
\pred'' = \pred' \land \land_{(X:l) \in \Xset_1 \cap \Xset_2} T_1.\PK(R_l) = T_2.\PK(R_l) \text{ where } l(\ldots,) \to R_l(\ldots) \in \sdt \\
\end{array}}
{\sdt, \rschema \vdash \sfOptMatch(C, PP, \pred) \transclause \Xset_1 \cup \Xset_2, \rename_{T_1}(Q_1) \ljoin_{\pred''} \rename_{T_2}(Q_2)}
{\textrm{(C-OptMatch)}}

\end{array}
\]
\end{mdframed}
\vspace{-10pt}
\caption{Translation rules for clauses.}
\vspace{-10pt}
\label{fig:trans-clause}
\end{figure}

\bfpara{Translating Clauses.}
Unlike translating entire queries, translating individual clauses requires tracking additional information about the node and edge variables used within the clauses. This is necessary to ensure that multiple occurrences of the same variable across different clauses are translated to refer to the same tuple in SQL. As shown in Figure~\ref{fig:trans-clause}, our translation judgments are of the form $\sdt, \rschema \vdash C \transclause \Xset, Q'$, meaning a Cypher clause $C$ is translated into a SQL query $Q'$, and $\Xset$ is the set of all used node and edge variables.

Our translation rules for the $\sfMatch$ and $\sfOptMatch$ clauses are based on the observation that graph pattern matching in Cypher can be emulated using sequences of join operations in SQL. Specifically, the join operations for $\sfMatch$ are inner joins, while those for $\sfOptionalMatch$ are left outer joins. Intuitively, a $\sfMatch$ clause returns no results if there is no matching pattern, mirroring the behavior of inner joins where unmatched tuples are discarded. In contrast, an $\sfOptionalMatch$ clause returns \textsf{null} for missing matches, similar to how outer joins include unmatched rows with \textsf{null} values.

For example, consider the clause $\sfMatch(C, PP, \pred)$. The translation rule \textrm{C-Match2} first translates the preceding clause $C$ into a subquery $Q_1$ and the path pattern $PP$ into another subquery $Q_2$. It also collects the sets of node and edge variables used in $C$ and $PP$, denoted as $\Xset_1$ and $\Xset_2$, respectively. For each common variable $X$ with label $l$, the translation generates a join predicate $T_1.K_1 = T_2.K_1$.
This ensures that occurrences of the same variable in different parts of the clause are correctly matched by joining on their primary keys, effectively referring to the same tuple in the SQL translation.

The \textrm{C-With} rule handles the $\sfWith$ clause by translating $\sfWith(C, \overline{Y}, \overline{Z})$ into a renaming operation in SQL. It generates a query $\proj_{\rename_{L[\overline{Z}/\overline{Y}]}(L)}(Q)$, which projects and renames columns, replacing the old names $\overline{Y}$ with the new names $\overline{Z}$.

\begin{example} \label{ex:clause}
Consider the Cypher clause $\sfMatch([(\texttt{n}, \texttt{EMP}), (\texttt{e}, \texttt{WORK\_AT}, \rightarrow), (\texttt{m}, \texttt{DEPT})], \top)$ and  the  $\sdt$ from Example~\ref{ex:sdt}. 
Based on \textrm{C-Match1}, we use the \textrm{PT-Path} rule to collect all node and edge variables in the pattern, namely $\Xset = \set{(\texttt{n}: \texttt{EMP}), (\texttt{e}: \texttt{WORK\_AT}), (\texttt{m}: \texttt{DEPT})}$, and translate it to a SQL query $\rename_{\mathtt{n}}(\mathtt{emp}) \ijoin_{\mathtt{n.id} = \mathtt{e.SRC}} \rename_{\mathtt{e}}(\mathtt{work\_at}) \ijoin_{\mathtt{e.TGT} = \mathtt{m.dnum}} \rename_{\mathtt{m}}(\mathtt{dept})$.
Thus, the Cypher clause is translated to $\filter_{\top}(\rename_{\mathtt{n}}(\mathtt{emp}) \ijoin_{\mathtt{n.id} = \mathtt{e.SRC}} \rename_{\mathtt{e}}(\mathtt{work\_at}) \ijoin_{\mathtt{e.TGT} = \mathtt{m.dnum}} \rename_{\mathtt{m}}(\mathtt{dept}))$.
\end{example}

\begin{example} \label{ex:clause-opt}
Consider the Cypher clause $\sfOptMatch(C, PP, \pred)$ where $C$ is the $\sfMatch$ clause in Example~\ref{ex:clause}, $PP = [(\texttt{m}, \texttt{DEPT})]$ and the  $\sdt$ from Example~\ref{ex:sdt}. Based on \textrm{C-Match1} and \textrm{C-OptMatch}, we know $\Xset_1 = \set{(\texttt{n}: \texttt{EMP}), (\texttt{e}: \texttt{WORK\_AT}), (\texttt{m}: \texttt{DEPT})}$ and $\Xset_2 = \set{(\texttt{m}: \texttt{DEPT})}$. $C$ and $PP$ are translated to $Q_1 = \filter_{\top}(\rename_{\mathtt{n}}(\mathtt{emp}) \ijoin_{\mathtt{n.id} = \mathtt{e.SRC}} \rename_{\mathtt{e}}(\mathtt{work\_at}) \ijoin_{\mathtt{e.TGT} = \mathtt{m.dnum}} \rename_{\mathtt{m}}(\mathtt{dept}))$ and $Q_2 = \filter_{\pred}(\rename_{\mathtt{m}}(\mathtt{dept}))$, respectively. Since there is a shared node $(\texttt{m}, \texttt{DEPT})$ between $C$ and $PP$, and the primary key of $\texttt{dept}$ is $\texttt{dnum}$, the Cypher clause is translated to $\rename_{T_1}(Q_1) \ljoin_{T_1.\mathtt{dnum} = T_2.\mathtt{dnum}} \rename_{T_2}(Q_2)$.
\end{example}

\begin{figure}[!t]
\scriptsize
\begin{mdframed}
\[
\hspace{-20pt}
\begin{array}{c}

\irulelabel
{\begin{array}{c}
l(K_1, \ldots, K_n) \to R_l(K_1, \ldots, K_n) \in \sdt
\end{array}}
{\sdt, \rschema \vdash (X, l) \transpattern \set{(X:l)}, \rename_{X}(R_l)}
{\textrm{(PT-Node)}}

\\ \ \\

\irulelabel
{\begin{array}{c}
\sdt, \rschema \vdash PP \transpattern \Xset, Q' \quad
(X_3, l_3) = \text{head}(PP) \quad
l_1(\ldots) \to R_{l_1}(\ldots) \in \sdt \\
l_2(\ldots, \fk_s, \fk_t) \to R_{l_2}(\ldots, \fk_s, \fk_t) \in \sdt \quad
l_3(\ldots) \to R_{l_3}(\ldots) \in \sdt \\
\pred = (R_{l_2}.\fk_s = \PK(R_{l_1})) \quad
\pred' = (R_{l_2}.\fk_t = \PK(R_{l_3})) \quad
\integrity(\rschema) \Rightarrow \pred \land \pred' \\
\end{array}}
{\sdt, \rschema \vdash (X_1, l_1), (X_2, l_2, d_2), PP \transpattern \set{(X_1:l_1), (X_2:l_2)} \cup \Xset, \rename_{X_1}(R_{l_1}) \ijoin_{\pred} \rename_{X_2}(R_{l_2}) \ijoin_{\pred'} Q'}
{\textrm{(PT-Path)}}

\end{array}
\]
\end{mdframed}
\vspace{-10pt}
\caption{Translation rules for path patterns.}
\vspace{-10pt}
\label{fig:trans-pattern}
\end{figure}

\bfpara{Translating patterns.}
Figure~\ref{fig:trans-pattern} illustrates the rules for translating patterns, using judgments of the form $\sdt, \rschema \vdash PP \transpattern \Xset, Q'$. This notation indicates that a Cypher pattern $PP$ translates to an SQL query $Q'$, with all node and edge variables in the pattern represented by $\Xset$.

The translation of patterns follows an inductive structure, with two key rules. The base case handles a single node pattern using the \textrm{PT-Node} rule, which maps the node variable $X$ to its corresponding table $R'_l$ in the relational schema derived from \sdtname $\sdt$, renaming it to $X$. The inductive case, represented by the \textrm{PT-Path} rule, addresses more complex patterns where a new node $(X_2, l_2)$ expands an existing sub-pattern $PP$. The rule identifies the connecting node $(X_3, l_3)$ in $PP$ and determines the appropriate joins. It locates the tables $R'_{l_1}, R'_{l_2}, R'_{l_3}$ corresponding to nodes $X_1$, $X_2$, and $X_3$, and constructs join predicates to connect the foreign keys $\fk_s$ and $\fk_t$ in the edge table $R'_{l_2}$ to the primary keys in the source and target node tables, respectively. This approach aligns with the observation that Cypher’s pattern matching naturally corresponds to a series of SQL join operations.


\begin{example}
Given the standard \trans $\transformer$ in Example~\ref{ex:sdt}, let us focus on the path pattern $[(\texttt{n}, \texttt{EMP}), (\texttt{e}, \texttt{WORK\_AT}, \rightarrow), (\texttt{m}, \texttt{DEPT})]$.
According to the \textrm{PT-Path} rule, we first need to apply the \textrm{PT-Node} rule to get the variables $\set{(\texttt{m} : \texttt{DEPT})}$ and query $\rename_{\texttt{m}}(\texttt{DEPT})$ from the node pattern $(\texttt{m}, \texttt{DEPT})$. Based on these results, we can use the \textrm{PT-Path} rule to collect all variables $\Xset = \set{(\texttt{n} : \texttt{EMP}), (\texttt{e} : \texttt{WORK\_AT}), (\texttt{m} : \texttt{DEPT})}$ and obtain the SQL query $\rename_{\mathtt{n}}(\mathtt{emp}) \ijoin_{\mathtt{n.id} = \mathtt{e.SRC}} \rename_{\mathtt{e}}(\mathtt{work\_at}) \ijoin_{\mathtt{e.TGT} = \mathtt{m.dnum}} \rename_{\mathtt{m}}(\mathtt{dept})$.
\end{example}

\begin{theorem}[Soundness of translation]\label{lem:soundness}
Let $\gschema$ be a graph schema and $Q$ be a Cypher query over $\gschema$. Let $\rschema$ be the induced relational schema of $\gschema$, and $\sdt$ be the standard \trans from $\gschema$ to $\rschema$. If $\sdt, \rschema \vdash Q \transquery Q'$, then $Q'$ is equivalent to $Q$ modulo $\sdt$, i.e., $Q \simeq_{\sdt} Q'$.
\end{theorem}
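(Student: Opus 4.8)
The plan is to prove a stronger statement by simultaneous induction on the derivations of the translation judgments for all syntactic categories --- queries ($\transquery$), clauses ($\transclause$), path patterns ($\transpattern$), predicates ($\transpred$), and expressions ($\transexpr$). The unifying device is a \emph{correspondence relation} between the operational artifacts Cypher manipulates (bags of variable bindings, i.e.\ assignments of graph nodes and edges to pattern variables) and the tuples of the relational result over $R = \sdt(G)$. As a preliminary lemma I would pin down exactly what $\sdt(G)$ is: unfolding the Herbrand-model semantics of the transformer DSL together with the {\sc InferSDT} rules (Figure~\ref{fig:infer-sdt}), the table $R_l$ for a node type $\nodeType = (l, K_1, \ldots, K_n)$ contains exactly one tuple $(K_1 \colon P(v,K_1), \ldots, K_n \colon P(v,K_n))$ per node $v$ with $T(v) = l$, and the table for an edge type contains one tuple holding each edge's property values together with $\fk_s, \fk_t$ set to the default-key values of its source and target nodes. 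The assumption that the default property key $K_1$ is globally unique is precisely what makes the map $v \mapsto (\text{its tuple})$ a \emph{bijection} and validates the primary-key/foreign-key constraints of $\rschema$; I would record this explicitly since it is used repeatedly.

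Next I would state, for each judgment, the appropriately strengthened induction hypothesis. For patterns, $\sdt, \rschema \vdash PP \transpattern \Xset, Q'$ implies a multiplicity-preserving bijection between the bag of matches of $PP$ in $G$ and the bag of tuples of $\denot{Q'}_R$ such that a binding $X \mapsto v$ for $(X\colon l)\in\Xset$ corresponds to the tuple whose $X$-renamed columns equal the $\sdt$-tuple of $v$. For clauses, $\sdt, \rschema \vdash C \transclause \Xset, Q'$ gives the analogous bijection, now allowing $\mathsf{Null}$-valued graph elements to correspond to tuples with $\mathsf{Null}$ in the relevant columns (needed for $\sfOptMatch$). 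For predicates and expressions, $\sdt, \rschema \vdash \pred \transpred \pred'$ and $\sdt, \rschema \vdash E \transexpr E'$ imply that $\pred, \pred'$ (resp.\ $E, E'$) evaluate identically on corresponding binding/tuple pairs. For queries, the hypothesis is exactly the theorem's conclusion $\denot{Q}_G \tbleq \denot{Q'}_R$, appealing to the Cypher denotational semantics for the left-hand sides.

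With these in place most cases are routine unfolding. The pattern cases are cleanest: \textrm{PT-Node} is the preliminary bijection, and \textrm{PT-Path} works because the join predicates $R_{l_2}.\fk_s = \PK(R_{l_1})$ and $R_{l_2}.\fk_t = \PK(R_{l_3})$ encode exactly ``the edge bound to $X_2$ runs from the node bound to $X_1$ to the node bound to $\mathsf{head}(PP)$'', so the join extends each match of $PP$ (by the IH) by precisely the legal one-step extensions --- the edge-direction parameter $d_2$ being handled by the symmetric rule variants in Appendix~\ref{sec:transpilation-pred-expr}. The predicate/expression cases are compositional, with base cases (property access $\leadsto$ attribute access, and $\sfExists$ invoking the pattern IH) discharged by the correspondence. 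For queries: \textrm{Q-Ret} follows from the clause and expression IHs plus the fact that $\tbleq$ ignores column names and order; \textrm{Q-Agg} additionally uses that Cypher's implicit grouping is exactly grouping by the non-aggregate \ttReturn expressions, matching \ttGroupBy with keys $\overline{A}$ and the aggregates recomputed per group; and \textrm{Q-OrderBy}, \textrm{Q-Union}, \textrm{Q-UnionAll} are immediate from the IHs (threading list semantics through for $\ttOrderBy$).

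The main obstacle is the clause cases \textrm{C-Match2} and \textrm{C-OptMatch}. I must show that Cypher's semantics of a \emph{subsequent} match --- which reuses the preceding clause's bindings, forces every shared variable to bind to the \emph{same} graph element, and (for $\sfOptMatch$) falls back to $\mathsf{Null}$ for the newly introduced variables when no extension exists --- coincides exactly with an inner (resp.\ left outer) join whose predicate conjoins $\pred'$ with $T_1.\PK(R_l) = T_2.\PK(R_l)$ for every shared $(X\colon l)$. The primary-key equalities do the right thing precisely because the default key is globally unique, so ``agree on $\PK(R_l)$'' is equivalent to ``same node/edge''; but making this airtight requires care about (i) bag multiplicities when distinct paths induce the same binding on the shared variables, (ii) the $\mathsf{Null}$ bookkeeping for $\sfOptMatch$ --- checking that the outer join nullifies exactly the columns of variables in $\Xset_2 \setminus \Xset_1$ and introduces no spurious matches, and (iii) the interaction with renaming through the fresh $T_1, T_2$. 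Stating the generalized correspondence invariant strongly enough to survive composition through \textrm{C-With} (pure projection/renaming) and the nested-match rules --- while remaining provable in the base cases --- is the real work; everything else is bookkeeping.
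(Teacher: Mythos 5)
Your proposal is correct and follows essentially the same route as the paper: a mutual structural induction over queries, clauses, patterns, expressions, and predicates, with the invariant that the transformer $\sdt$ maps the Cypher-side result (list of matched subgraphs) elementwise onto the SQL-side result (list of tuples) — which is exactly your multiplicity-preserving correspondence — and with the same case splits on aggregation in $\sfReturn$, on shared variables in $\sfMatch$/$\sfOptMatch$ (discharged via primary-key equalities justified by global uniqueness of default keys), and on the empty-extension/$\sfNullify$ branch of $\sfOptMatch$. The only cosmetic difference is that the paper leaves your preliminary characterization of $\sdt(G)$ implicit rather than stating it as a separate lemma.
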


\begin{theorem}[Completeness of translation] \label{thm:completeness}
Let $\gschema$ be a graph schema and $\rschema$ be the induced relational schema of $\gschema$. Given any Cypher query $Q$ over $\gschema$ accepted by the grammar shown in Figure~\ref{fig:cypher-syntax}, there exists a SQL query $Q'$ over $\rschema$ such that $\sdt, \rschema \vdash Q \transquery Q'$.
\end{theorem}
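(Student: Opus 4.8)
The plan is to prove this completeness statement by a single mutual structural induction over the syntactic categories of Featherweight Cypher from Figure~\ref{fig:cypher-syntax} --- expressions $E$, predicates $\pred$, path patterns $PP$ (together with the node and edge patterns $NP$, $EP$), clauses $C$, return queries $R$, and queries $Q$ --- showing that for every grammar production the corresponding translation rule is applicable and that each of its premises can be discharged, using the induction hypotheses together with the structural facts about the induced relational schema $\rschema$ and the standard transformer $\sdt$ guaranteed by {\sc InferSDT} (Figure~\ref{fig:infer-sdt}). Throughout I read ``Cypher query $Q$ over $\gschema$'' as ``$Q$ is well-formed with respect to $\gschema$'': every label occurring in $Q$ is a node or edge label of $\gschema$, every property key is a key of the corresponding type, and every path pattern is \emph{type-consistent}, i.e. each node--edge--node triple respects the source/target declaration of the edge type (up to the edge direction $d$). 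This restriction is necessary rather than cosmetic: without type-consistency the statement is false, since a path pattern whose edge type does not connect the adjacent node types cannot satisfy the side condition of \textrm{PT-Path}.

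I would begin with the leaves. For an expression $E$, the only non-routine base case is a property key $k$: since $k$ is a key of some node or edge type $t$ of $\gschema$, the Node or Edge rule of {\sc InferSDT} places $k$ among the attributes of the table associated with $t$, so $\sdt,\rschema \vdash k \transexpr a$ holds for that attribute $a$; the remaining cases $v$, $\sfCast(\pred)$, $\sfAgg(E)$, and $E \arithop E$ follow immediately from the induction hypotheses (the $\sfCast$ case invoking the predicate case). Predicates are then handled directly by induction: $\top$ and $\bot$ are base cases; $E \odot E$, $\sfIsNull(E)$, and $E \in \overline v$ reduce to the expression case; $\sfExists(PP)$ reduces to the pattern case treated next; and $\land$, $\lor$, $\neg$ are immediate from the induction hypotheses.

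The pattern case is the heart of the argument. The base case $NP = (X,l)$ uses \textrm{PT-Node}: since $l$ is a node label of $\gschema$, the Node rule puts $l(K_1,\ldots,K_n) \to R_l(K_1,\ldots,K_n)$ in $\sdt$, so the rule fires. The inductive case $(X_1,l_1),(X_2,l_2,d_2),PP$ uses \textrm{PT-Path} (or its symmetric variants for $d_2 \in \{\leftarrow,\leftrightarrow\}$ given in the appendix): the induction hypothesis yields $\sdt,\rschema \vdash PP \transpattern \Xset, Q'$ with $(X_3,l_3) = \mathrm{head}(PP)$; the membership premises for $l_1,l_2,l_3$ hold because these are labels of $\gschema$; and the crucial premise $\integrity(\rschema) \Rightarrow \pred \land \pred'$ holds because, by type-consistency, $l_1$ and $l_3$ are exactly the declared source and target types of the edge type $l_2$ (modulo $d_2$), so the constraints $R_{l_2}.\fk_s = \PK(R_{l_1})$ and $R_{l_2}.\fk_t = \PK(R_{l_3})$ are precisely the foreign-key conjuncts that the Edge rule introduces for $R_{l_2}$, hence they are entailed by $\integrity(\rschema)$. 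Discharging this entailment --- i.e., making precise that the foreign-key constraints of the induced schema mirror exactly the source/target declarations of the edge types --- is the main obstacle, and it is where the well-formedness hypothesis is consumed; everything else in the proof is bookkeeping. With patterns in hand, clauses follow by cases: \textrm{C-Match1} for $\sfMatch(PP,\pred)$, and \textrm{C-Match2} / \textrm{C-OptMatch} for $\sfMatch(C,PP,\pred)$ and $\sfOptMatch(C,PP,\pred)$, all go through using the induction hypotheses on $C$, $PP$, $\pred$, noting that fresh names $T_1,T_2$ are always available and that the join predicate $\pred''$ is always well-defined since for every shared variable $(X{:}l)$ the schema defines $\PK(R_l)$; for $\sfWith(C,\overline Y,\overline Z)$, rule \textrm{C-With} applies to the translation of $C$ once its output column list $L$ is exposed (which always exists), renaming $\overline Y$ to $\overline Z$ within it.

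Finally, for a return query $R = \sfReturn(C,\overline E,\overline k)$, exactly one of \textrm{Q-Ret} or \textrm{Q-Agg} applies according to the decidable test $\mathsf{hasAgg}(\overline E)$; in both cases the induction hypotheses supply the translation of $C$ and of each $E_i$, and the remaining side conditions (for instance $\overline A = \mathsf{filter}(\lambda x.\, \neg\mathsf{IsAgg}(x), \overline{E'})$) are purely computational. For queries: a bare $R$ is the previous case; $\sfOrderBy(R,k,b)$ uses \textrm{Q-OrderBy}, where $k$ translates to an attribute by the expression case; and $\sfUnion$, $\sfUnionAll$ use \textrm{Q-Union}, \textrm{Q-UnionAll} directly from the induction hypotheses. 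Assembling these cases produces a derivation $\sdt,\rschema \vdash Q \transquery Q'$ for every well-formed $Q$, which is the claim. In the written-up proof I would invest most of the detail in the precise correspondence between edge-type declarations in $\gschema$ and induced foreign-key constraints in $\rschema$, since that is the only genuinely delicate step.
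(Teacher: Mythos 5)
Your proposal follows essentially the same route as the paper's own proof: a structural induction over queries supported by analogous inductions on clauses, patterns, expressions, and predicates (the paper packages these as separate lemmas for each syntactic category rather than one mutual induction, but the content is identical, down to the case split on $\mathsf{hasAgg}(\overline{E})$ for $\sfReturn$). If anything you are more careful than the paper on the one genuinely delicate point---the entailment $\integrity(\rschema) \Rightarrow \pred \land \pred'$ in \textrm{PT-Path}---which the paper's proof of its pattern lemma does not explicitly discharge from the well-formedness (type-consistency) of the input query, whereas you correctly identify that the theorem is false without that hypothesis.
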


\subsection{Reduction to SQL Equivalence Checking}

\begin{figure}[!t]
\begin{algorithm}[H]
\caption{{Algorithm for inferring the residual of database transformers and \sdtname's}}
\label{algo:infer-rdt}
\begin{algorithmic}[1]
\Procedure{\textsc{ReduceToSQL}}{$\rschema, Q_R, \schema_{R}', Q_{R}', \transformer, \sdt$}
\vspace{2pt}
\Statex \textbf{Input:} Database transformer $\transformer$, standard \trans $\sdt$
\Statex \textbf{Output:} $\top$ for equivalence or $\bot$ indicating failure
\vspace{2pt}
\State $\sigma \gets \{ P_1 \mapsto P_0  \ | \  P_1(\ldots) \rightarrow P_0(\ldots) \in \sdt \} $
\State $\rdt \gets \transformer[\sigma] $
\State \Return $\mathsf{CheckSQL}(\rschema, Q_R, \schema_{R}', Q_{R}', \rdt) $

\EndProcedure
\end{algorithmic}
\end{algorithm}
\vspace{-20pt}
\end{figure}

The final step of our algorithm utilizes the transpiled query to reduce our original problem to checking equivalence between a pair of SQL queries. As shown in Algorithm~\ref{algo:infer-rdt}, the {\sc ReduceToSQL} procedure first infers the residual database transformer $\rdt$ through a simple syntactic substitution: Since every clause of the \sdtname is of the form $P_1(\ldots) \rightarrow P_0(\ldots)$, we can obtain the residual transformer simply by  substituting every occurrence of $P_1$ in $\transformer$ by $P_0$. Finally, since the residual transformer $\rdt$ specifies how to convert instances of the induced schema to those of the desired schema, we can use an existing tool for checking SQL equivalence by utilizing $\rdt$.  As stated by the following theorems, the original Cypher query is equivalent to the given SQL query if and only if $Q_R$ and $Q_R'$ are equivalent modulo $\rdt$. 

\begin{theorem}[Soundness]\label{thm:full-soundness}
Let $\textsf{CheckSQL}(\rschema, Q, \rschema', Q', \rdt)$ be a sound procedure for equivalence checking of SQL queries $Q, Q'$ over relational schemas $\rschema, \rschema'$ connected by \rdtname $\rdt$.
Given a Cypher query $Q_G$ over graph schema $\gschema$, a SQL query $Q_R$ over relational schema $\rschema$, and their \trans $\transformer$, if $\textsc{CheckEquivalence}(\gschema, Q_G, \rschema, Q_R, \transformer)$ returns $\top$, it holds that $Q_G \simeq_{\transformer} Q_R$.
\end{theorem}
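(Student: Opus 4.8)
The plan is to combine three ingredients: correctness of the syntax-directed transpilation (Theorem~\ref{lem:soundness}), soundness of the back-end checker $\mathsf{CheckSQL}$, and a \emph{composition lemma} for transformers stating that the residual transformer $\rdt$ computed by Algorithm~\ref{algo:infer-rdt} factors the user transformer through the standard one, i.e., $\sdt(G)\sim_{\rdt}R$ iff $G\sim_{\transformer}R$ for every $G\conform\gschema$ and $R\conform\rschema$. First I would unwind the hypothesis along Algorithms~\ref{algo:verify} and~\ref{algo:infer-rdt}: if $\textsc{CheckEquivalence}(\gschema, Q_G, \rschema, Q_R, \transformer)$ returns $\top$, then $(\sdt, \rschema') = \textsc{InferSDT}(\gschema)$, the transpilation judgment $\sdt, \rschema' \vdash Q_G \transquery Q_{R'}$ holds for $Q_{R'} = \textsc{Transpile}(Q_G, \sdt, \rschema')$, and $\mathsf{CheckSQL}(\rschema, Q_R, \rschema', Q_{R'}, \rdt)$ returns $\top$ for $\rdt = \transformer[\sigma]$ with $\sigma = \{ P_1 \mapsto P_0 \mid P_1(\dots) \to P_0(\dots) \in \sdt \}$. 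From Theorem~\ref{lem:soundness} I obtain $Q_G \simeq_{\sdt} Q_{R'}$, and from soundness of $\mathsf{CheckSQL}$ I obtain $Q_{R'} \simeq_{\rdt} Q_R$.

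\noindent\textbf{The composition lemma.} The point is that every rule of $\sdt$ is a pure ``copy'' rule $l(K_1,\dots) \to R_l(K_1,\dots)$ that only renames a node/edge label to the corresponding induced table name. I would first argue that $\textsc{InferSDT}$ associates to each $G \conform \gschema$ a well-formed induced instance $\sdt(G) \conform \rschema'$ (the $\PK$ constraints of $\rschema'$ hold since default keys are globally unique, and the $\FK$ constraints hold since every edge connects existing nodes), taken to be the canonical instance whose ground-fact encoding is the $\sigma$-image of that of $G$: $\convert(\sdt(G)) = \sigma(\convert(G))$. Since substitution commutes with the denotation ($\denot{\transformer[\sigma]} = \sigma(\denot{\transformer})$) and $\sigma$ is an injective renaming of predicate symbols that acts as the identity on all relation names on right-hand sides of $\transformer$ (the target tables, which are disjoint from the graph labels), applying $\sigma$ is a bijection on Herbrand interpretations and hence preserves entailment. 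Consequently, for any $R \conform \rschema$,
\[
\convert(\sdt(G)) \cup \convert(R) \models \denot{\rdt} \;\iff\; \sigma(\convert(G)) \cup \convert(R) \models \sigma(\denot{\transformer}) \;\iff\; \convert(G) \cup \convert(R) \models \denot{\transformer},
\]
i.e.\ $\sdt(G) \sim_{\rdt} R$ iff $G \sim_{\transformer} R$; in passing, $\convert(\sdt(G)) = \sigma(\convert(G))$ also yields $G \sim_{\sdt} \sdt(G)$.

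\noindent\textbf{Assembling the pieces.} Let $G, R$ be arbitrary with $G \conform \gschema$, $R \conform \rschema$, and $G \sim_{\transformer} R$, and set $R' := \sdt(G) \conform \rschema'$. From $G \sim_{\sdt} R'$ and $Q_G \simeq_{\sdt} Q_{R'}$ we get $\denot{Q_G}_G \tbleq \denot{Q_{R'}}_{R'}$; from the composition lemma $R' \sim_{\rdt} R$, so $Q_{R'} \simeq_{\rdt} Q_R$ gives $\denot{Q_{R'}}_{R'} \tbleq \denot{Q_R}_R$. Since table equivalence $\tbleq$ is an equivalence relation, transitivity yields $\denot{Q_G}_G \tbleq \denot{Q_R}_R$; as $G, R$ were arbitrary, $Q_G \simeq_{\transformer} Q_R$.

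\noindent\textbf{Main obstacle.} The step I expect to require the most care is the composition lemma---specifically making precise that $\convert$ intertwines the standard transformer with the syntactic substitution $\sigma$. Handling edges requires checking that the \textrm{Edge} rule of Figure~\ref{fig:infer-sdt} lays out the attributes of an induced edge table (property keys followed by the two foreign-key columns) in exactly the order in which $\convert$ serializes an edge fact, and the renaming-preserves-entailment claim rests on the fact that an injective renaming of predicate symbols is a bijection on Herbrand interpretations. The remaining reasoning is routine bookkeeping over Algorithms~\ref{algo:verify}--\ref{algo:infer-rdt} together with transitivity of $\tbleq$.
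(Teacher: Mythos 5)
Your proposal is correct and follows essentially the same route as the paper: soundness of the transpilation (Theorem~\ref{lem:soundness}) gives $Q_G \simeq_{\sdt} Q_{R'}$, soundness of $\mathsf{CheckSQL}$ gives $Q_{R'} \simeq_{\rdt} Q_R$, and the two are chained through the residual transformer and transitivity of $\tbleq$. Your ``composition lemma'' is exactly the paper's Lemma~\ref{lem:rst-soundness} (which the paper proves by a case analysis on node versus edge rules rather than your more abstract predicate-renaming/Herbrand-bijection argument), and your assembly of the pieces is, if anything, more careful about the quantifier structure than the paper's own write-up.
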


\begin{theorem}[Completeness]\label{thm:full-completeness}
Let $\textsf{CheckSQL}(\rschema, Q, \rschema', Q', \rdt)$ be a complete procedure for equivalence checking of SQL queries $Q, Q'$ over schemas $\rschema, \rschema'$ connected by \rdtname $\rdt$.
Given a Cypher query $Q_G$ over graph schema $\gschema$, a SQL query $Q_R$ over relational schema $\rschema$, and their \trans $\transformer$, if $Q_G \simeq_{\transformer} Q_R$, then $\textsc{CheckEquivalence}(\gschema, Q_G, \rschema, Q_R, \transformer)$ returns $\top$.
\end{theorem}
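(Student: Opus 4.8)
This statement is the dual of Theorem~\ref{thm:full-soundness}, and the plan is to reduce it to four ingredients: completeness of transpilation (Theorem~\ref{thm:completeness}), soundness of transpilation (Theorem~\ref{lem:soundness}), completeness of the black-box SQL checker $\mathsf{CheckSQL}$, and a \emph{composition lemma} linking $\transformer$, $\sdt$ and $\rdt$. Assume $Q_G \simeq_{\transformer} Q_R$. Inspecting Algorithm~\ref{algo:verify}: $\textsc{InferSDT}(\gschema)$ always succeeds (its rules in Figure~\ref{fig:infer-sdt} are syntax-directed and total); by Theorem~\ref{thm:completeness}, $\textsc{Transpile}(Q_G,\sdt,\rschema')$ succeeds, returning some $Q_R'$ with $\sdt,\rschema' \vdash Q_G \transquery Q_R'$; and the residual step of $\textsc{ReduceToSQL}$ (lines 1--2 of Algorithm~\ref{algo:infer-rdt}) is a purely syntactic substitution that never fails. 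Hence $\textsc{CheckEquivalence}$ returns exactly what $\mathsf{CheckSQL}(\rschema, Q_R, \rschema', Q_R', \rdt)$ returns, so by the completeness hypothesis on $\mathsf{CheckSQL}$ it suffices to prove that $Q_R'$ over $\rschema'$ and $Q_R$ over $\rschema$ are equivalent modulo the residual transformer $\rdt$, i.e.\ that $\denot{Q_R'}_{R'} \tbleq \denot{Q_R}_{R}$ whenever $R' \conform \rschema'$, $R \conform \rschema$, and $\rdt(R') = R$.

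Two facts carry the argument. \emph{(i) Surjectivity of $\sdt$:} every instance $R' \conform \rschema'$ of the induced schema equals $\sdt(G)$ for some $G \conform \gschema$. Intuitively, $\sdt$ copies each graph element verbatim---each node or edge type $l$ contributes exactly one rule $l(\ldots) \to R_l(\ldots)$---so $G$ is recovered from $R'$ by reading each tuple of a node-table as a node and each tuple of an edge-table (whose foreign keys point to the endpoint tuples) as an edge; the primary- and foreign-key constraints of $\rschema'$ are precisely what guarantees the recovered $G$ is a well-formed property graph with $\sdt(G) = R'$. \emph{(ii) Composition lemma:} $\transformer(G) = \rdt(\sdt(G))$ for all $G \conform \gschema$. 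The key observation is that $\convert(\sdt(G))$ is literally $\convert(G)$ with every graph-label predicate symbol $l$ renamed to the corresponding induced table name---that is, $\convert(\sdt(G)) = \sigma(\convert(G))$ where $\sigma$ is exactly the substitution computed in Algorithm~\ref{algo:infer-rdt}. Since $\rdt = \transformer[\sigma]$ and $\sigma$ injectively renames only graph-label symbols (which are disjoint from the target-table symbols occurring in $\convert(R)$), for any relational instance $R$ the set $\convert(G)\cup\convert(R)$ is a Herbrand model of $\denot{\transformer}$ iff $\convert(\sdt(G))\cup\convert(R) = \sigma(\convert(G)\cup\convert(R))$ is a Herbrand model of $\sigma(\denot{\transformer}) = \denot{\rdt}$; taking the (least) such model on both sides gives $\transformer(G) = \rdt(\sdt(G))$.

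The main argument is then short. Fix $R' \conform \rschema'$, $R \conform \rschema$ with $\rdt(R') = R$. By (i), choose $G \conform \gschema$ with $\sdt(G) = R'$; by (ii), $\transformer(G) = \rdt(\sdt(G)) = \rdt(R') = R$, so $G \sim_{\transformer} R$. Soundness of transpilation (Theorem~\ref{lem:soundness}) applied to $\sdt,\rschema' \vdash Q_G \transquery Q_R'$ gives $Q_G \simeq_{\sdt} Q_R'$, and since $G \sim_{\sdt} \sdt(G) = R'$ we obtain $\denot{Q_G}_G \tbleq \denot{Q_R'}_{R'}$. The hypothesis $Q_G \simeq_{\transformer} Q_R$ together with $G \sim_{\transformer} R$ gives $\denot{Q_G}_G \tbleq \denot{Q_R}_{R}$. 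Composing the two column bijections witnessing these two instances of $\tbleq$ yields $\denot{Q_R'}_{R'} \tbleq \denot{Q_R}_{R}$, which is exactly what we needed; hence $\mathsf{CheckSQL}$ returns $\top$, and therefore so does $\textsc{CheckEquivalence}$.

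The main obstacle is the composition lemma, and in particular making the transformer semantics precise enough to use it: one must commit to the transformer's output being the \emph{least} Herbrand model satisfying the entailment (so that $\transformer$, $\sdt$, $\rdt$ really are functions), verify that the arities produced by $\convert$ match the $\sdt$ rules for both nodes \emph{and} edges (the latter carrying two extra foreign-key positions), and check that the syntactic substitution $\sigma$ of Algorithm~\ref{algo:infer-rdt} commutes with both $\convert$ and $\denot{\cdot}$ because it renames a set of predicate symbols disjoint from those it leaves untouched. The surjectivity fact (i) is the one ingredient genuinely specific to completeness---the soundness proof of Theorem~\ref{thm:full-soundness} starts from a graph $G$ and never needs it---and its proof in turn leans on the induced schema's integrity constraints being exactly those that make every relational instance decode back to a legal property graph.
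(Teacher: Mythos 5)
Your proposal follows essentially the same route as the paper: reduce the theorem to showing $Q_{R'} \simeq_{\rdt} Q_R$ (the paper's Lemma~\ref{lem:rst-completeness}), establish that via the composition property $\transformer(G) = \rdt(\sdt(G))$ (the paper's Lemma~\ref{lem:rst-soundness}) together with soundness of transpilation, and then invoke completeness of $\mathsf{CheckSQL}$. The one substantive difference is that you make explicit the surjectivity of $\sdt$ onto instances of the induced schema, which is genuinely needed to quantify over \emph{all} pairs $(R', R)$ with $R' \sim_{\rdt} R$ rather than only those arising as $\sdt(G)$; the paper's proof of Lemma~\ref{lem:rst-completeness} silently assumes this when it asserts ``$\rdt(D') = D$ for any $D' \conform \schema_{R'}$'', so your version is the more careful of the two.
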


\section{Evaluation} \label{sec:eval}

In this section, we describe three experiments to evaluate \tool. Because  \tool's verification methodology reduces the Cypher-SQL equivalence checking problem to pure SQL,  our results depend on what backend is used for SQL equivalence checking. Thus, in our first experiment, we evaluate \tool using the \verieql~\cite{verieql-oopsla24} bounded model checker as its backend, and in our second experiment, we use a deductive verifier called \mediator~\cite{mediator-popl18} as the backend. Finally, we also evaluate the quality of \tool's transpilation results. 

\begin{table}[!t]
\vspace{-5pt}
\small
\centering
\caption{Statistics of Cypher and SQL queries in the benchmarks. Sizes are the number of AST nodes.}
\vspace{-10pt}
\begin{tabular}{c|c|cccc|cccc|cccc}
\toprule
\multirow{2}{*}{Dataset} &
\multirow{2}{*}{\#} &
\multicolumn{4}{c|}{SQL Size} &
\multicolumn{4}{c|}{Cypher Size} &
\multicolumn{4}{c}{{Transformer Size}} \\
\cline{3-14}
                         &  & Min   & Max   & Avg   & Med  & Min   & Max   & Avg   & Med & Min   & Max   & Avg   & Med\\
\midrule
\stackoverflow & 12 & 15    & 74     & 32.5      & 28     & 20      & 149      & 54.9      & 41    & 1 & 6 & 3.3 & 4 \\
\tutorial      & 26 & 5    & 76      & 25.8       & 22     & 12      & 77      & 31.2      & 28    & 1 & 17 & 8.3 & 5 \\
\academic      & 7  & 27    & 59      & 46.9      & 54     & 45      & 121      & 75.0      & 66    & 5 & 7 & 6.7 & 7 \\
\verieqlset      & 60  & 15    & 143      & 42.2      & 39     & 29      & 174      & 65.8      & 61    & 1 & 10 & 6.7 & 10 \\
{\mediatorset} & 100 & 9 & 63 & 18.6 & 13 & 20 & 114 & 33.8 & 28 & 1 & 11 & 5.1 & 4 \\
{\gpt} & 205 & 5 & 143 & 28.2 & 25 & 12 & 171 & 46.0 & 38 & 1 & 17 & 5.9 & 5 \\
\hline
{Total} & 410 & 5 & 143 & 28.2 & 25 & 12 & 174 & 45.7 & 38 & 1 & 17 & 5.9 & 5 \\
\bottomrule
\end{tabular}
\vspace{-10pt}
\label{tab:statistics}
\end{table}

\vspace{-3pt}
\bfpara{Benchmarks.}
We evaluate \tool on 410 pairs of SQL and Cypher queries (see Table~\ref{tab:statistics}) from the following sources:
\begin{itemize}[leftmargin=*]
\item \stackoverflow: 
We identified 12 StackOverflow posts where users inquire about translating a SQL query to Cypher or vice versa. All of these posts  contain a description of the schemas and SQL/Cypher queries that are intended to be semantically equivalent.

\item \tutorial.
We identified 26 tutorial examples, including from the official Neo4j guide~\cite{neo4j-tutorial}, that explain how a SQL query can be implemented using the Cypher query language. 
\item \academic.
We identified 7 examples from academic papers~\cite{Lin-arxiv16,boudaoud2022towards} that contain relational queries and their corresponding version in Cypher.
\item \verieqlset.
We collected 60 benchmarks from the \verieql paper~\cite{verieql-oopsla24} by randomly sampling 20 queries from each of its three datasets and asking various people with at least 3 months of Cypher experience  to  write an equivalent Cypher query.

\item \mediatorset. We collected 100 benchmarks from the \mediator evaluation set~\cite{mediator-popl18}. Each \mediator benchmark, consisting of an SQL query pair $(Q_1, Q_2)$ over schemas $\schema_1 $and $\schema_2$, was translated into pairs $(G_1, Q_2)$ and $(G_2, Q_1)$ where $G_1, G_2$ are Cypher queries over graph schemas $\schema'_1$ and $\schema'_2$, and $\schema_1$ and $\schema_2$ are the induced relational schemas for $\schema'_1$ and $\schema'_2$.

\item \gpt.
Given that large language models like GPT are increasingly used by people for coding and transpilation tasks, we included GPT-generated Cypher queries to assess \tool's ability to detect errors in automated translations. 
Specifically, we used GPT to transpile SQL queries from the previous five categories, yielding an additional 205 benchmarks.

\end{itemize}

\bfpara{Database transformers.}
Since the induced schema of graph databases may differ from the schema of relational databases, \tool requires a database transformer to describe the relationship between the graph and relational schemas. To evaluate \tool across all pairs of SQL and Cypher queries, one of the authors constructed a database transformer for each query pair based on their schema descriptions. We observe that writing these database transformers is not difficult. As shown in Table~\ref{tab:statistics}, each database transformer consists of an average of 5.9 rules, which takes approximately one minute to write.

\bfpara{Machine configuration.}
All of the experiments are conducted on a laptop with an Intel Core i7-8750H processor and 32GB physical memory running the Debian 12 operating system.

\subsection{Evaluation of \tool with BMC backend} \label{sec:eval-bmc}

In this section, we present the results of the evaluation in which we use the \verieql bounded model checker as \tool's SQL equivalence checking backend. 
\verieql is a bounded model checker that requires a hyperparameter specifying the size bound of symbolic tables. However, since it is difficult to estimate a suitable bound a priori, we set a 10-minute time limit and gradually increase the bound until either a counterexample is found or the time-limit is reached.  For each refuted benchmark, \tool uses the relational counterexamples produced by \verieql to construct a counterexample over the graph schema. 

The results of this evaluation are presented in Table~\ref{tab:eval-results}.  Here, the column labeled ``\# Non-Equiv'' shows the number of benchmarks proven to be \emph{not} equivalent, and the last column shows the average time to find a counterexample. The column labeled ``Avg Checked Bound'' shows the  average size of symbolic tables (measured in terms of the number of rows) when the 10 minute time limit is reached. As shown in Table~\ref{tab:eval-results}, \tool refutes equivalence for 34 out of the 410 benchmarks, taking an average of 23.4 seconds to find a counterexample. For the remaining 376 benchmarks, \tool performs bounded verification, demonstrating that there is no counterexample for database instances with symbolic tables of average size 19.6.  

\begin{table}[!t]
\small
\centering
\caption{Results of bounded equivalence checking.}
\vspace{-5pt}
\begin{tabular}{c|c|c|c|c}
\toprule
Dataset & \# & \# Non-Equiv & Avg Checked Bound & Avg Refutation Time (s) \\
\midrule
\stackoverflow & 12 & 1 & 9.2 & 0.6 \\
\tutorial & 26 & 1 & 7.7 & 56.2 \\
\academic & 7 & 1 & 2.5 & 5.4 \\
\verieqlset & 60 & 4 & 7.2 & 8.5 \\
\mediatorset & 100 & 0 & 33.2 & N/A \\
\gpt & 205 & 27 & 18.7 & 25.9 \\
\hline
Total & 410 & 34 & 19.6 & 23.4 \\
\bottomrule
\end{tabular}
\vspace{-10pt}
\label{tab:eval-results}
\end{table}

\bfpara{Uncovered bugs.}
We have manually inspected all 34 bugs uncovered by \tool and confirmed that all  counterexamples produced by the tool correspond to true positives.   As expected, GPT-generated queries have a higher probability of being incorrect compared to the human-written queries.  In particular, \tool finds a counterexample to equivalence for 13\% of the queries transpiled by GPT.
This experiment shows that GPT may introduce semantic bugs when converting SQL to Cypher, and \tool can effectively identify these bugs. As developers increasingly rely on large language models for assistance with coding-related tasks, we believe this demonstrates \tool’s practical value for developers relying on LLM-generated queries.

Perhaps more surprisingly, \tool also finds incorrect translations among benchmarks in the \texttt{StackOverflow, Tutorial, Academic}, and \texttt{VeriEQL} categories, all of which involve queries transpiled by experts. Most surprisingly, \tool  uncovers  a bug in an example from a Neo4j tutorial~\cite{neo4j-tutorial} that is intended to help developers familiar with SQL to learn Cypher.  This tutorial contains several pairs of SQL and Cypher queries that are intended to be equivalent, but, for one of these examples, \tool finds a counterexample on which the query results are actually different. We refer the interested reader to Appendix~\ref{sec:qual-analysis} for a case study explaining some of the uncovered bugs, including the example from the Neo4j tutorial.

\bfpara{False negatives.} Since \verieql is a bounded model checker, benchmarks that are not refuted by \tool within the 10-minute time limit \emph{may} still contain bugs.  To assess how frequently this occurs, we sampled {50} pairs of queries that were not refuted by \tool and manually inspected whether the translation was correct. 
For {48} of the {50} manually-inspected benchmarks, we found that the translation is indeed correct, but for 2 benchmarks (both from the \texttt{GPT-Translate} category), the translation is incorrect but \tool fails to find a counterexample within the 10 minute time limit.  Thus, while \tool with the \verieql backend does not provide theoretical soundness guarantees, we find that it is useful for finding bugs and has a low chance of missing incorrect translations ({4\%} according to our manual inspection results).

\textbox{
\textbf{Key finding:} 
Using a bounded model checker backend, \tool identifies {27} bugs among 205 SQL queries transpiled to Cypher using GPT. More surprisingly, among the 205 manually-written query pairs that are meant to be equivalent, \tool also identifies {7} inconsistencies, including 3 benchmarks from the wild and 4 benchmarks from manual translations. 
}

\subsection{Evaluation of \tool with Deductive Verifier}

In this section, we present the results of a second experiment wherein we evaluate \tool with \mediator as its backend. As mentioned earlier, \mediator is an SMT-based deductive verifier for reasoning about SQL applications over different schemas. Unlike \verieql, \mediator can perform full-fledged verification; however, it supports a  limited subset of SQL without aggregations or outer joins. Additionally, unlike \verieql, \mediator cannot disprove equivalence by generating counterexamples. Hence, when using \tool with \mediator as its backend, \tool can either prove equivalence or it returns ``Unknown''.
For performing this experiment, we also use  a time limit  of 10 minutes per benchmark.

\begin{table}[!t]
\small
\centering
\caption{Results of full equivalence verification.}
\label{tab:eval-full-verification}
\vspace{-10pt}
\begin{tabular}{c|c|c|c|c|c}
\toprule
Dataset & \# & \# Supported & \# Verified & \# Unknown & Avg Time (s) \\       
\midrule
\stackoverflow & 12 & 1 & 1 & 0 & 1.0 \\
\tutorial & 26 & 1& 1 & 0 &  0.2 \\
\academic & 7 & 0 & 0 & 0 &  N/A \\
\verieqlset & 60 & 0 & 0 & 0 &  N/A \\
\mediatorset & 100 & 100 & 77 & 23 & 20.5 \\
\gpt & 205 & 94 & 73 & 21 &  23.5 \\
\hline
Total & 410 & 196 & 152 & 44 &  16.8 \\
\bottomrule
\end{tabular}
\vspace{-15pt}
\end{table}


The results of this experiment are summarized in Table~\ref{tab:eval-full-verification}. 
As shown in the ``\# Supported'' column, about half of the benchmarks (196 out of 410) fall inside the fragment of SQL supported by  \mediator, so we conduct our evaluation on this subset. 
Overall, \tool can verify {77.6\%} of these 196 benchmarks, with an average running time of 16.8 seconds. Since the syntax-directed transpilation performed by \tool takes negligible time, most of the verification time is dominated by SMT queries for discharging the generated verification conditions.


\bfpara{Qualitative analysis.}
We manually inspected 44 benchmarks that cannot be verified by \tool with the \mediator backend. Among these 44 benchmarks, two of them are in fact \emph{refuted} by \tool with \verieql, so they should \emph{not} be verified. 
Among the 42 remaining benchmarks, \mediator fails to complete verification within the 10 minute time limit for 14 of these, and, for the final 28 benchmarks, it terminates but returns ``Unknown''. 
To gain insight about failure cases, note that \mediator needs to infer an \emph{inductive bisimulation invariant} between the two queries~\cite{mediator-popl18}. However, for queries involving long join chains, the corresponding bisimulation invariant can be complex. This can either lead to expensive SMT queries, thereby causing time-outs, or the required invariant might fall outside the inference capabilities of \mediator. 

\vspace{5pt}
\textbox{
\textbf{Key finding:}  Among the 196  SQL queries supported by \mediator, \tool can prove equivalence between roughly 80\% of (Cypher, SQL) query pairs using the \mediator backend. 
}

\subsection{Evaluation of Transpilation}



While the primary goal of this work is to enable checking equivalence between graph and relational queries, an additional benefit of our method is that it can  transpile graph database queries to relational queries over the induced schema.
To assess the practical effectiveness, we conduct an experiment to evaluate how well our method transpiles graph queries into \emph{efficient} SQL queries.

\begin{table}[!t]
\small
\centering
\caption{Execution time of transpiled and manually-written SQL queries.}
\label{tab:eval-transpilation}
\vspace{-10pt}
\resizebox{\columnwidth}{!}{
\begin{tabular}{c|c|cc|c|c|c|c}
\toprule
\multirow{2}{*}{Dataset} & \multirow{2}{*}{\#} & \multicolumn{2}{c|}{Avg Exec Time (s)} & \% Transpiled & \% Trans Slower & \% Trans Slower & \% Trans Slower \\ \cline{3-4}
& & Transpiled & Manual & Faster & (1x, 1.1x] & (1.1x, 1.2x] & (1.2x, +$\infty$) \\
\midrule
\stackoverflow & 12 & 1.9 & 1.8 & 41.7\% & 8.3\% & 50.0\% & 0.0\% \\
\tutorial & 26 & 2.3 & 1.7 & 19.2\% & 11.5\% & 46.2\% & 23.1\% \\
\academic & 7 & 2.4 & 3.2 & 71.4\% & 0.0\% & 14.3\% & 14.3\% \\
\hline
Total & 45 & 2.2 & 2.0 & 33.3\% & 8.9\% & 42.2\% & 15.6\% \\
\bottomrule
\end{tabular}
}
\vspace{-20pt}
\end{table}

\bfpara{Efficiency of transpilation.} First, we evaluate how long \tool takes to transpile each Cypher query to a SQL query over the induced schema.  \tool can successfully transpile all 410 queries, and  the average, median, and maximum transpilation times are 6.3,  3.0, and 180.2 milliseconds respectively. Hence, we can conclude that transpilation is very fast in practice.

\bfpara{Quality of transpiled queries.} Next, we also set out to evaluate the quality of the transpiled queries by comparing the execution time of manually written SQL queries against \tool's transpilation result. However, performing this evaluation is challenging for two reasons: First, we only have access to the ``ground truth'' Cypher and SQL queries for some benchmark categories.  Second,  we do not have database instances that these queries are meant to be executed on.  To deal with the first challenge, we conduct this evaluation only on those benchmarks from the \stackoverflow, \tutorial, and \academic categories for which we are given the original Cypher query and its SQL equivalent (or vice versa).
To deal with the second challenge, we generate mock database instances and assess query efficiency on them. For each benchmark with SQL query \( Q_R \) over schema \( \rschema \) and Cypher query \( Q_G \), we use \tool to transpile \( Q_G \) into SQL query \( Q'_R \) over the induced schema \( \rschema' \). We then create databases \( \reldb \) and \( \reldb' \) over \( \rschema \) and \( \rschema' \), respectively, ensuring \( \rdt(\reldb') = \reldb \).
To account for execution time variability, we start with 10,000 tuples in each table of \( \reldb \) and iteratively increase the table size by 10x, up to 1 million, choosing the largest size where manually written SQL queries run within 10 seconds. This approach results in 1 million tuples for 36 benchmarks and between 10,000 and 1 million for the remaining 9 benchmarks.
Finally, we measure the execution times of \( Q'_R \) on \( \reldb' \) and \( Q_R \) on \( \reldb \). Table~\ref{tab:eval-transpilation} summarizes the results: for 33.3\% of the benchmarks, the transpiled queries are faster than the manually written queries. For the remaining benchmarks, 8.9\% exhibit a slowdown of no more than $1.1$x, 42.2\% exhibit a slowdown between $1.1$x and $1.2$x, and 15.6\% exceed $1.2$x.
These results indicate that using \tool to perform automated transpilation could be useful in real-world scenarios that necessitate graph-to-relational query conversion, such as for legacy systems, resource-constrained environments, or data integration. 

\textbox{
\textbf{Key finding:} \tool can transpile Cypher queries to SQL queries in milliseconds. The execution time of transpiled SQL queries is faster than manually-written queries on 33.3\% of benchmarks and within 1.2x slowdown on 51.1\% of benchmarks.
}

\section{Related Work} \label{sec:related}

In this section, we discuss prior work that is mostly related to our techniques for equivalence verification between Cypher and SQL queries.

\bfpara{Automated reasoning for SQL.}
Despite the undecidability of checking equivalence between SQL queries \cite{trachtenbrot1950}, there has been much prior work on automated reasoning for relational queries. We can categorize existing work into three classes. The first class targets a decidable subset of SQL and proposes decision procedures for that subset. Examples of work in this category include \cite{aho-sagiv-ullman-1979,chandra-merlin-1977,green2011}. Approaches in the second category propose sound but incomplete algorithms for an undecidable subset of SQL; examples of work in this space include \cite{zhou19,zhou22,chu2017hottsql,chu2018axiomatic}. The third category performs bounded verification  to find bugs in SQL queries; examples include \cosette~\cite{cosette}, \qex~\cite{veans-qex-2010}, and \verieql~\cite{verieql-oopsla24}.  There is also prior work on verifying relational database applications that involve both queries and updates~\cite{mediator-popl18}.    Our proposed approach reduces the verification problem between graph and relational queries to checking equivalence between  a pair of relational queries; as such, it can leverage any future advances in this area. 

\bfpara{Migration between database instances.} 
There is prior work on migrating data between different schemas, including~\cite{jin2017foofah,yaghmazadeh2016synthesizing,feng2017component,martins2019trinity,wang2017synthesizing,yaghmazadeh2018automated,dynamite-vldb20}. The most related to this paper is \dynamite~\cite{dynamite-vldb20}, which automates data migration between graph and relational databases. However, \dynamite is only useful for migrating the \emph{contents} of the database and cannot be used for transpiling queries. There are also prior papers that address the query transpilation problem in the context of SQL~\cite{migrator-pldi19,prism-vldb08,prism-vldbj13}.
While our notion of \trans is inspired by prior work~\cite{dynamite-vldb20,clio-vldb00}, to the best of our knowledge, this paper is the first to formalize the transpilation procedure from Cypher to SQL queries and leverage it for formal equivalence checking.


\bfpara{Data representation refactoring.} 
There is a related line of work on \emph{data representation refactoring}, which aims to refactor programs or specifications from one data representation to another~\cite{pailoor2024semantic,chen2022synthesis,pailoor2021synthesizing,cheung2013optimizing,ge2012reconciling,katara-oopsla22,refactor-jscico13,atropos-pldi21,fiat-popl15}. For instance, Solidare~\cite{pailoor2024semantic} refactors smart contracts between different ADTs. QBS~\cite{chen2022synthesis} converts Java programs operating over collections to SQL queries.
Since graph and relational schemas can be viewed as different data representations, the query transpilation problem in this work can be viewed as a form of data representation refactoring.
However, none of the existing techniques addresses the query transpilation problem between graph and relational data. Additionally, this work can be viewed as presenting a novel methodology for verifying data representation refactoring: Rather than directly going from representation $R$ to representation $R'$, our idea is to introduce an auxiliary representation $R''$ that simplifies the problem by enabling syntax-directed translation.
To the best of our knowledge, such a methodology based on a layer of indirection has not previously been explored in this context.

\bfpara{Graph database query languages.}
There has been a long line of prior work on graph databases and semantic foundations of graph query languages~\cite{cypher-sigmod18,graphql-www18,pgql-grades16,Deutsch22,age-web24,gremlin-dbpl15,sparql-tapp11}. 
The unifying insight behind many of such works is that a graph database schema may be viewed as a graphical representation of the Entity-Relationship Diagram (ER Diagram) of a relational database schema.
In part due to this unifying insight, these graph query languages are both semantically and syntactically similar to Cypher. Because of this similarity, we believe our proposed methodology can be adapted fairly easily to graph database query languages other than Cypher.

\bfpara{Testing database queries.}
Another related line of related work focuses on testing database queries. Work in this space includes differential testing~\cite{mckeeman1998differential,lin2022gdsmith,zheng2022finding} and metamorphic testing~\cite{chen2020metamorphic,rigger2020finding,jiang2024detecting,mang2024testing} to detect bugs in database management systems, mutation-based testing to grade programming assignments involving SQL queries~\cite{xdata-vldbj15}, and provenance-based techniques for explaining wrong SQL queries~\cite{ratest-sigmod19}. In contrast, \tool transpiles graph database queries into equivalent SQL queries and uses existing automated reasoning tools for SQL equivalence checking. Thus, \tool is complementary and can benefit from advances in testing SQL queries. 

\bfpara{Transpiling Cypher queries.}
There are a few existing tools that can translate Cypher queries to queries in SQL-like languages~\cite{oct-web25, kuzu-web25}. 
The most relevant to this paper is \opentranspiler~\cite{oct-web25}, which first transforms a Cypher query into a logical plan and renders it as a relational query. However, it supports only a limited subset of Cypher and lacks soundness guarantees for translated queries.\footnote{The detailed evaluation of \opentranspiler can be found in Appendix~\ref{sec:eval-oct}.} In contrast, \tool ensures soundness during transpilation and supports a broader subset of Cypher queries. Another tool, \kuzu~\cite{kuzu-web25}, can execute graph queries on relational databases with a Cypher interface. It compiles Cypher queries into an intermediate representation similar to a relational database's logical plan. However, \kuzu does not transpile Cypher into SQL but instead directly executes the intermediate representation on the database.

\section{Limitation} \label{sec:limit}
The current version of \tool is focused on a specific subset of SQL and Cypher, which does not yet cover all modern features, such as variable-length pattern matching in Cypher. However, considering the lack of prior research on reasoning about equivalence between graph and relational database queries, we believe our selected fragments offer a strong foundation for advancing this area of study. While some SQL and Cypher queries lie outside the scope of our current subset, evaluations on a diverse set of benchmarks, including real-world queries, demonstrate that this subset is expressive enough for practical use cases. Future work can further extend the transpilation rules and backend equivalence verifiers to support additional features.

\section{Conclusion and Future Work} \label{sec:concl}

In this paper, we proposed automated reasoning techniques between graph and relational database queries. Specifically, we first proposed a formal definition of equivalence between graph and relational queries and used it as the basis of a correct-by-construction transpilation strategy for converting Cypher queries to SQL queries over a so-called \emph{induced relational schema}. We then showed how our translation approach can be used to check equivalence between graph and SQL queries over \emph{arbitrary} schema by leveraging existing automated reasoning techniques for SQL.
We have also evaluated our implementation, \tool, on equivalence checking tasks involving real-world Cypher and SQL queries and showed that \tool can be useful for (a) uncovering subtle bugs in Cypher queries that are meant to be equivalent to a reference SQL implementation, and (b) verifying full equivalence between Cypher and SQL queries.

Looking ahead, we plan to explore the development of a graphical interface for specifying database transformers between graph and relational databases, inspired by prior work on schema mapping visualization~\cite{visual-chi05}. This interface would aim to further reduce the manual effort required by users to verify equivalence between graph and relational queries, providing a more intuitive and user-friendly approach. We see this as a promising avenue for future research.

\bibliography{main}

\newpage
\appendix
\section{Semantics of Cypher Queries} \label{sec:cypher-semantics}

\begin{figure}[!h]
\scriptsize
\[
\begin{array}{l c l}
\multicolumn{3}{l}{\boxed{\denot{Q} :: \sfGraph \to \sfTable}} \\[5pt] 
\denot{\mathsf{OrderBy}(R, k, b)}_G & = & \sffoldl(\lambda xs. \lambda \_. xs \doubleplus [\mathsf{MinTuple}(E, b, \denot{R}_{G} - xs)], [], \denot{R}_G) ~\sfwhere \\
&& \hspace{3em} \mathsf{MinTuple}(E, b, xs) = \sffoldl(\lambda x. \lambda y. \sfite(\mathsf{Cmp}(E, b, x, y), x, y), \sfHead(xs), xs), \\ 
&& \hspace{3em} \mathsf{Cmp}(E, b, x_1, x_2) = (\mathsf{lookup}(x_1, k) < \mathsf{lookup}(x_2, k)) = b \\
\denot{\mathsf{Union}(Q_1, Q_2)}_G & = & \sfDedup(\denot{\mathsf{UnionAll}(Q_1, Q_2)}_G) \\ 
\denot{\mathsf{UnionAll}(Q_1, Q_2)}_G & = & \denot{Q_1}_G \doubleplus \denot{Q_2}_G \\ 
\\

\multicolumn{3}{l}{\boxed{\denot{R} :: \sfGraph \to \sfTable}} \\[5pt] 
\denot{\sfReturn(C, \overline{E}, \overline{k})}_G & = & \sfite(\neg \sfHasAgg(\overline{E}), \sfmap(\lambda g. \sfmap(\lambda (E, k). (k, \denot{E}_{G, [g]}), \sfzip(\overline{E}, \overline{k})), \denot{C}_{G}), \\
&& \hspace{3em} \sfmap(\lambda gs. \sfmap(\lambda (E, k), (k, \denot{E}_{G, gs}), \sfzip(\overline{E}, \overline{k})), \mathsf{Groups}) ) ~\sfwhere~ \\
&& \hspace{6em} \mathsf{Groups} = \sfmap(\lambda V.\sfFilter(\lambda g. \denot{\overline{A}}_{G, [g]} = V, \denot{C}_{G}), \sfDedup(\overline{V})), \\
&& \hspace{6em} \overline{A} = \sfFilter(\lambda E. \neg \mathsf{IsAgg}(E), \overline{E}), \\
&& \hspace{6em} \overline{V} = \sfmap(\lambda g. \denot{\overline{A}}_{G, [g]}, \denot{C}_{G})\\
\\

\multicolumn{3}{l}{\boxed{\denot{C} :: \sfGraph \to [\sfGraph]}}   \\[5pt]

\denot{\sfMatch(\pattern, \pred)}_G & = & \sfFilter(\lambda g. \denot{\phi}_{G, [g]} = \top, \denot{\pattern}_{G})\\
\denot{\sfMatch(C, \pattern, \pred)}_G & = & \sfFilter(\lambda g. \denot{\pred}_{G, [g]} = \top, \sfmap(\lambda g_1. \sfmap(\lambda g_2. \sfMerge(g_1, g_2),  \denot{\pattern}_{G}), \denot{C}_{G}))\\ 

\denot{\sfOptMatch(C, PP, \pred)}_G & = & \sffoldl(\lambda gs. \lambda g. gs \doubleplus \sfite(|v_1(g)| = 0, v_2(g), v_1(g)), [], \denot{C}_G) ~\sfwhere~ \\ 
&& \hspace{3em} v_1(g) = \sfFilter(\lambda x. \denot{\pred}_{G, [x]} = \top, \\
&& \hspace{6em} \sfmap(\lambda g'. \sfMerge(g, g'), \sfFilter(\lambda g''. g \cap g'' \neq \emptyset, \denot{\pattern}_{G}))), \\
&& \hspace{3em} v_2(g) = [\sfMerge(g, \sfNullify(\sfHead(\denot{\pattern}_{G})))] \\
\denot{\mathsf{With}(C, \overline{X}, \overline{Y})}_G & = & \sfmap(\lambda (N, E, P, T). (N, E, P[\overline{X} \mapsto \overline{Y}], T[\overline{X} \mapsto \overline{Y}]), \denot{C}_{G}) \\
\\ 

\multicolumn{3}{l}{\boxed{\denot{\pattern} :: \sfGraph \to [\sfGraph]}} \\[5pt]

\denot{\nodepattern}_{G} & = & \sfmap(\lambda n. (\set{n}, \emptyset, \set{(X, k) \mapsto P(n, k) ~|~ k \in \keys(T(n)) }, \set{X \mapsto (l, \keys(T(n)))}), \\
&& \hspace{3em} [n \in N ~|~ \lbl(T(n)) = l]) ~\sfwhere~ (N, E, P, T) = G, (X, l) = \nodepattern \\
\denot{\nodepattern, \edgepattern, \pattern'}_{G} & = & \sffoldl(\lambda gs. \lambda g'. gs \doubleplus \sfmap(\lambda g. \sfMerge(g, g'), \\
&& \hspace{3em} \sfFilter(\lambda g''. g' \cap g'' \neq \emptyset, \mathsf{Subgraphs}(G, [\nodepattern, \edgepattern, \nodepattern']))), [], \denot{\pattern'}_{G}) \\
&& \hspace{6em} \sfwhere~ \nodepattern' = \sfHead(\pattern') \\
\\

\multicolumn{3}{l}{\boxed{\denot{E} :: \sfGraph \to [\sfGraph] \to \mathsf{Value}}} \\[5pt]

\denot{k}_{G, gs} & = & \mathsf{lookup}(\sfHead(gs), k) \\
\denot{v}_{G, gs} & = & v \\
\denot{\sfCast(\pred)}_{G, gs} & = & \sfite(\denot{\pred}_{G, gs} = \sfNull, \sfNull, \sfite(\denot{\pred}_{G, gs} = \top, 1, 0)) \\
\denot{\sfCount(E)}_{G, gs} & = & \sfite(\bigwedge_{g \in gs} \denot{E}_{G, [g]} = \sfNull, \sfNull, \sffoldl(+, 0, \sfmap( \lambda g. \sfite(\denot{E}_{G, [g]} = \sfNull, 0, 1), gs))) \\
\denot{\sfSum(E)}_{G, gs} & = & \sfite(\bigwedge_{g \in gs} \denot{E}_{G, [g]} = \sfNull, \sfNull, \\
&& \hspace{3em} \sffoldl(+, 0, \sfmap(\lambda g. \sfite(\denot{E}_{G, [g]} = \sfNull, 0, \denot{E}_{G, [g]}), gs))) \\
\denot{\sfAvg(E)}_{G, gs} & = & \denot{\sfSum(E)}_{G, gs}/\denot{\sfCount(E)}_{G, gs} \\
\denot{\sfMin(E)}_{G, gs} & = & \sfite(\bigwedge_{g \in gs} \denot{E}_{G, [g]} = \sfNull, \sfNull, \\
&& \hspace{3em} \sffoldl(\mathbf{min}, + \infty, \sfmap(\lambda g. \sfite(\denot{E}_{G, [g]} = \sfNull, + \infty, \denot{E}_{G, [g]}), gs))) \\
\denot{\sfMax(E)}_{G, gs} & = & \sfite(\bigwedge_{g \in gs} \denot{E}_{G, [g]} = \sfNull, \sfNull, \\
&& \hspace{3em} \sffoldl(\mathbf{max}, - \infty, \sfmap(\lambda g. \sfite(\denot{E}_{G, [g]} = \sfNull, - \infty, \denot{E}_{G, [g]}), gs))) \\
\denot{E_1 \arithop E_2}_{G, gs} & = & \denot{E_1}_{G, gs} \arithop \denot{E_1}_{G, gs} \\
\\

\multicolumn{3}{l}{\boxed{\denot{\pred} :: \sfGraph \to [\sfGraph] \to \mathsf{Bool} \cup \set{\sfNull}}} \\[5pt]

\denot{\top}_{G, gs} &=& \top \\
\denot{\bot}_{G, gs} &=& \bot \\
\denot{E_1 \logicop E_2}_{G, gs} & = & \denot{E_1}_{G, gs} \logicop \denot{E_1}_{G, gs} \\
\denot{\sfIsNull(E)}_{G, gs} & = & \sfite(\denot{E}_{G, gs} = \sfNull, \top, \bot) \\
\denot{E \in \overline{v}}_{G, gs} & = & \bigvee_{v \in \overline{v}} \denot{E}_{G, gs} = v \\
\denot{\sfExists(\pattern)}_{G, gs} & = & \bigvee_{g \in \denot{\pattern}_{G}} \bigwedge_{K \in \overline{K}} \denot{K}_{G, gs} = \denot{K}_{G, [g]} ~\sfwhere~ \overline{K} = [\mathsf{PK}(\sfHead(\pattern)), \mathsf{PK}(\sfLast(\pattern))] \\
\denot{\pred_1 \land \pred_2}_{G, gs} & = & \denot{\pred_1}_{G, gs} \land \denot{\pred_2}_{G, gs} \\
\denot{\pred_1 \lor \pred_2}_{G, gs} & = & \denot{\pred_1}_{G, gs} \lor \denot{\pred_2}_{G, gs} \\
\denot{\neg \pred}_{G, gs} & = & \neg \denot{\pred}_{G, gs} \\

\end{array}
\]


\caption{Semantics of Cypher queries. Here, $\sfIte(\pred, E_1, E_2)$ denotes the if-then-else expression that evaluates to $E_1$ when $\pred$ is true and $E_2$ otherwise. $\sfDedup(L)$ removes all duplicated elements from list $L$. $\sfHasAgg(\overline{E})$ checks if there is an aggregation expression in $\overline{E}$. $\sfMerge(G_1, G_2)$ merges graphs $G_1$ and $G_2$ into one graph. $\sfNodes(G)$ returns all nodes of graph $G$. $\sfNullify(G)$ yields a graph isomorphic to $G$ with all values set to be $\sfNull$. $\mathsf{Subgraphs}(G, [\nodepattern_1, \edgepattern, \nodepattern_2])$ returns all subgraphs of $G$ that match the simple pattern $\nodepattern_1, \edgepattern, \nodepattern_2$.}
\label{fig:cypher-semantics}
\vspace{-15pt}
\end{figure}

\bfpara{Query semantics.}
Figure~\ref{fig:cypher-semantics} presents the formal semantics of featherweight Cypher using standard functional combinators such as $\sfMap$, $\sfFilter$, $\sfFoldl$, $\sfHead$, and $\sfZip$.
In accordance with the standard Cypher semantics~\cite{cypher-sigmod18},  we view a query $Q$ as a function from a graph database instance to a table; hence, $\denot{Q}$ takes as input a property graph $G$ and outputs a table.  As shown in Figure~\ref{fig:cypher-semantics}, the semantics of $\sfUnion$, $\sfUnionAll$, and $\sfOrderBy$ in Cypher are analogous to their SQL counterparts.
The semantics of a $\sfReturn(C, \overline{E}, \overline{k})$ query is denoted  $\denot{\sfReturn(C, \overline{E}, \overline{k})}_G$: Given a property graph $G$, it first obtains a list of subgraphs by evaluating clause $C$ and then transforms the result into a table using arguments $\overline{E}$ and  $\overline{k}$. Here, $\overline{E}$ specifies a list of expressions to be evaluated against the subgraphs, each of which forms a row of the table. $\overline{k}$ specifies the corresponding list of  column names of the return value.

\bfpara{Clause semantics.}
The semantics of a Cypher clause $C$ is denoted as $\denot{C}$: A clause transforms an input property graph $G$ into a list of subgraphs, where each subgraph corresponds to the  result of pattern matching against the input graph $G$. There are four different types of clauses in Cypher: First, the clause $\sfMatch(\pattern, \phi)$ returns all subgraphs of $G$ that match the pattern $\pattern$ and that satisfy predicate $\phi$. Second, the clause $\sfMatch(C, \pattern, \phi)$ evaluates the clause $C$ and the match clause $\sfMatch(\pattern, \phi)$ on the input graph $G$ and merges their results. 
Third, the optional match clause $\sfOptMatch$ is similar to the previous match clause, except that it will  use a null value for missing parts of the pattern. Finally, the fourth type of clause, namely $\sfWith(C, \overline{X}, \overline{Y})$, first applies the nested clause $C$ to the input graph $G$ and then renames nodes and edges as specified by the old and new names $\overline{X}$ and $\overline{Y}$ respectively. We further illustrate the semantics of $\sfOptMatch$ through an example.

\begin{example}
{Figure~\ref{fig:optmatch-example-graph} shows a graph database instance, where employee A works at the CS department, but employee B has no department. Consider the following Cypher query
\[
\begin{tabular}{l}
$\sfReturn(\sfOptMatch(\sfMatch([(\mathsf{n}, \mathsf{EMP})], \top), ~ [(\mathsf{n}, \mathsf{EMP}), (\mathsf{e}, \mathsf{WORK\_AT}, \rightarrow), (\mathsf{m}, \mathsf{DEPT})], \top),$ \\
\hspace{3em} $[\mathtt{n.name}, \mathtt{m.dname}], [\mathtt{n.name}, \mathtt{m.dname}])$ \\
\end{tabular}
\]
Running this query on the graph database yields the result in Figure~\ref{fig:optmatch-example-output}. The idea is that since there is no outgoing edge from the node representing employee B, its corresponding "department" entry in the result table is $\sfNull$.}

\begin{figure}[t]
\centering
\begin{minipage}{.4\linewidth}
\centering
\begin{subfigure}{\linewidth}
\centering
\normalsize
\begin{tikzpicture}[main/.style = {draw, circle}] 
\node[circle, draw=black] (1) at (0, 1){A};
\node[circle, draw=black] (2) at (0, 0){B};
\node[circle, draw=black, dashed] (3) at (2.3, 1){CS};

\draw[->] (1) -- (3) node [midway, above, sloped] (cs1) {\scalebox{.7}{:WORK\_AT}};
\end{tikzpicture} 
\vspace{-5pt}
\caption{The input property graph.}
\label{fig:optmatch-example-graph}
\end{subfigure}
\end{minipage}
~
\begin{minipage}{.4\linewidth}
\centering
\begin{subfigure}{\linewidth}
\centering
\begin{tabular}{|c|c|c|}
\hline
n.name & m.dname \\
\hline
A & CS \\
\hline
B & Null \\
\hline
\end{tabular}
\caption{The output of a Cypher query.}
\label{fig:optmatch-example-output}
\end{subfigure}
\end{minipage}
\vspace{-10pt}
\caption{An input property graph and the output of the example Cypher query.}
\label{fig:optmatch-example}
\end{figure}

\end{example}

\bfpara{Semantics of path patterns.}
Given a graph database instance $G$ and path pattern $\pattern$, $\denot{\pattern}_G$ returns a list of subgraphs of $G$ that match the given pattern. A path pattern can have two forms. A \emph{node pattern} $NP$ is a pair $(X, l)$ where $l$ is the type of the node and $X$ is a variable for the matched node that can be used in the rest of the query. In this case, the output of pattern matching is a list of single-node subgraphs, with each subgraph containing a node of type $l$. A path pattern $\pattern$ can also be of the form $\nodepattern,\edgepattern,\pattern$ where $\edgepattern$ is an edge pattern $(X, l, d)$ where $l$ is the type of the edge, $d$ is its direction, and $X$ is a variable for the matched edge that can be used in the rest of the query.  For example, the  pattern $\nodepattern_1, \edgepattern, \nodepattern_2$  would match all edges that match the edge pattern $\edgepattern$ and where additionally the source and target nodes match the node patterns $\nodepattern_1$ and $\nodepattern_2$.

\bfpara{Semantics of expressions.}
The semantics of an expression $E$ is denoted as $\denot{E}_{G,gs}$ where $G$ is an input graph database instance and $gs$ is a list of subgraphs (of $G$) denoting the result of performing a path pattern matching on $G$. $\denot{E}_{G,gs}$ evaluates the expression $E$ on the subgraphs $gs$ and yields a value. 
An expression $E$ can be a property key $k$, a value $v$, aggregation expressions, a $\mathsf{Cast}(\phi)$ expression, and arithmetics $E \arithop E$. For example, a property key $k$ is evaluated by looking up the value of this key in the first subgraph of the list of matched subgraphs $gs$.\footnote{Our semantics guarantees that $gs$ is a list of length 1 when passed to $\denot{k}$.}  
\emph{Aggregation expressions} involve the $\mathsf{Count,Sum,Avg,Min,Max}$ operators and their semantics are similar to SQL, where an aggregation operator is used to calculate an output value over the return values of the sub-expression. 
Finally, the $\mathsf{Cast(\phi)}$ expression simply casts  a predicate $\phi$ to 0, 1, or $\sfNull$.

\bfpara{Semantics of predicates.}
Given a predicate $\phi$, property graph $G$, and list of subgraphs $gs$, $\denot{\phi}_{G,gs}$ defines how to evaluate $\phi$ on $G, gs$. Similar to SQL, we take the three-valued logic interpretation of a boolean predicate $\phi$, meaning that the evaluation result can contain $\mathsf{Null}$, and we can perform boolean arithmetics involving $\top, \bot, \mathsf{Null}$ as terms. In particular,  $\bot \land \sfNull = \bot$ and $\top \lor \sfNull = \top$; otherwise, the  evaluation result is $\sfNull$ if any of its arguments are $\sfNull$.

\section{Transpilation of Cypher Predicates and Expressions} \label{sec:transpilation-pred-expr}

\begin{figure}[!t]
\scriptsize
\begin{mdframed}
\[
\begin{array}{c}

\irulelabel
{\begin{array}{c}
\end{array}}
{\sdt, \rschema \vdash k \transexpr k}
{\textrm{(E-Prop)}}

\irulelabel
{\begin{array}{c}
\end{array}}
{\sdt, \rschema \vdash v \transexpr v}
{\textrm{(E-Value)}}

\irulelabel
{\begin{array}{c}
\sdt, \rschema \vdash \pred \transpred \pred'
\end{array}}
{\sdt, \rschema \vdash \sfCast(\pred) \transexpr \ttCast(\pred')}
{\textrm{(E-Pred)}}

\\ \ \\

\irulelabel
{\begin{array}{c}
\sdt, \rschema \vdash E \transexpr E'
\end{array}}
{\sdt, \rschema \vdash \sfAgg(E) \transexpr \ttAgg(E')}
{\textrm{(E-Agg)}}

\irulelabel
{\begin{array}{c}
\sdt, \rschema \vdash E_1 \transexpr E'_1 \quad
\sdt, \rschema \vdash E_2 \transexpr E'_2 \\
\end{array}}
{\sdt, \rschema \vdash E_1 \arithop E_2 \transexpr E'_1 \arithop E'_2}
{\textrm{(E-Arith)}}

\end{array}
\]
\end{mdframed}
\vspace{-10pt}
\caption{Translation rules for expressions.}
\vspace{-10pt}
\label{fig:trans-expr}
\end{figure}

\begin{figure}[!t]
\scriptsize
\begin{mdframed}
\[
\begin{array}{c}

\irulelabel
{\begin{array}{c}
\end{array}}
{\sdt, \rschema \vdash \top \transpred \top}
{\textrm{(P-True)}}

\irulelabel
{\begin{array}{c}
\end{array}}
{\sdt, \rschema \vdash \bot \transpred \bot}
{\textrm{(P-False)}}

\irulelabel
{\begin{array}{c}
\sdt, \rschema \vdash E_1 \transexpr E_1' \quad
\sdt, \rschema \vdash E_2 \transexpr E_2' \\
\end{array}}
{\sdt, \rschema \vdash E_1 \logicop E_2 \transpred E_1' \logicop E_2'}
{\textrm{(P-Logic)}}

\\ \ \\

\irulelabel
{\begin{array}{c}
\sdt, \rschema \vdash E \transexpr E' \\
\end{array}}
{\sdt, \rschema \vdash \sfIsNull(E) \transpred \ttIsNull(E')}
{\textrm{(P-IsNull)}}

\irulelabel
{\begin{array}{c}
\sdt, \rschema \vdash E \transexpr E' \\
\end{array}}
{\sdt, \rschema \vdash E \in \overline{v} \transpred E' \in \overline{v}}
{\textrm{(P-In)}}

\irulelabel
{\begin{array}{c}
\sdt, \rschema \vdash \pred \transpred \pred' \\
\end{array}}
{\sdt, \rschema \vdash \neg \pred \transpred \neg \pred'}
{\textrm{(P-Not)}}

\\ \ \\

\irulelabel
{\begin{array}{c}
\circ \in \set{\land, \lor} \\
\sdt, \rschema \vdash \pred_1 \transpred \pred'_1 \\
\sdt, \rschema \vdash \pred_2 \transpred \pred'_2 \\
\end{array}}
{\sdt, \rschema \vdash \pred_1 \circ \pred_2 \transpred \pred_1' \circ \pred_2'}
{\textrm{(P-AndOr)}}

\irulelabel
{\begin{array}{c}
(X_1, l_1) = \text{head}(PP) \quad
(X_2, l_2) = \text{last}(PP) \\
l_1(\ldots) \to R_{l_1}(\ldots) \in \sdt \quad
l_2(\ldots) \to R_{l_2}(\ldots) \in \sdt \\
\sdt, \rschema \vdash PP \transpattern \Xset, Q \quad
\overline{a} = [\PK(R_{l_1}), \PK(R_{l_2})] \\
\end{array}}
{\sdt, \rschema \vdash \text{Exists}(PP) \transpred \overline{a} \in \Pi_{\overline{a}}(Q)}
{\textrm{(P-Exists)}}

\end{array}
\]
\end{mdframed}
\vspace{-10pt}
\caption{Translation rules for predicates.} 
\vspace{-10pt}
\label{fig:trans-pred}
\end{figure}

As shown in Figure~\ref{fig:trans-expr} and Figure~\ref{fig:trans-pred}, most of the rules for translating expressions and predicates are straightforward. The only rule that is tricky is \textrm{P-Exists} for translating $\sfExists$ predicates. Specifically, given a predicate $\sfExists(PP)$ in Cypher, the \textrm{P-Exists} rule first translates the pattern $PP$ to a SQL query $Q$ and then translates $\sfExists(PP)$ to an \boldIn predicate in SQL, namely $\overline{a} \in \proj_{\overline{a}}(Q)$. Recall from the Cypher semantics that $\sfExists(PP)$ checks if there exists a match of pattern $PP$ in the graph, so the translated SQL query uses an \boldIn predicate to perform the existence checking. Here, the attribute list $\overline{a}$ is obtained from the primary keys of relations corresponding to the first and last nodes in the path pattern. The predicate $\overline{a} \in \proj_{\overline{a}}(Q)$ checks that $\overline{a}$ used in the parent query is in the set of results returned by the subquery $Q$.

\begin{example}
Given the standard database transformer $\sdt$ in Example~\ref{ex:sdt}, let us consider the Cypher predicate $\sfExists([(\texttt{n}, \texttt{EMP}), (\texttt{e}, \texttt{WORK\_AT}, \rightarrow), (\texttt{m}, \texttt{DEPT})])$.
Based on the \textrm{P-Exists} rule, it is translated to a SQL predicate
\[
[\texttt{n.id}, \texttt{m.dnum}] \in \proj_{[\texttt{n.id}, \texttt{m.dnum}]}(\rename_{\mathtt{n}}(\mathtt{emp}) \ijoin_{\mathtt{n.id} = \mathtt{e.SRC}} \rename_{\mathtt{e}}(\mathtt{work\_at}) \ijoin_{\mathtt{e.TGT} = \mathtt{m.dnum}} \rename_{\mathtt{m}}(\mathtt{dept}))
\]
\end{example}

\section{An Equivalent Cypher Query of Motivating Example} \label{sec:correct-cypher}

As discussed in Section~\ref{sec:overview}, the Cypher query in Figure~\ref{fig:example-cypher-query} is proven non-equivalent to the SQL query in Figure~\ref{fig:example-sql-query}. The issue arises because the original query filters \texttt{s:SENTENCE} using the \boldWith clause, but this does not enforce filtering at the level of individual matches in the second \boldMatch clause. As a result, the second \boldMatch explores more paths than intended, leading to double counting. The correct Cypher query shown below moves the filtering into an \boldExists condition, ensuring each \texttt{SENTENCE} node is considered only if it satisfies the first pattern, which prevents the discrepancy.

\begin{center}
\small
\begin{tabular}{l}
\boldMatch \texttt{(s:SENTENCE)<-[r3:SP]-(p2:PA)<-[r4:CS]-[c2:CONCEPT]} \\
\boldWhere \boldExists \{ \boldMatch \texttt{(c1:CONCEPT \{CID: 1\})-[r1:CS]->(p1:PA)-[r2:SP]->(s:SENTENCE)} \} \\
\boldReturn \texttt{c2.CID}, \boldCount(*)
\end{tabular}
\end{center}

\section{Qualitative Analysis of Manually-Written Buggy Queries} \label{sec:qual-analysis}

We manually inspected the buggy queries in Section~\ref{sec:eval-bmc} to gain intuition about common problems.
At a high level, there are several root causes of bugs we identified in the manually-written benchmarks. These include:
\begin{enumerate}[leftmargin=*]
\item Using nested match instead of an existential pattern, as shown in Section~\ref{sec:overview}.

\item Misusing path patterns for \boldOptMatch.
For example, let us look at the following benchmark adapted from the Neo4j tutorial\footnote{\url{https://neo4j.com/docs/getting-started/cypher-intro/cypher-sql}}. We removed the OrderBy clauses and renamed node and edge labels to avoid confusion. Given the SQL query
\begin{center}
\footnotesize
\begin{tabular}{l}
\boldSelect \texttt{P.ProductName,} \boldSum \texttt{(OD.UnitPrice * OD.Quantity)} \boldAs \texttt{Volume} \boldFrom \texttt{Customers} \boldAs \texttt{C} \\
\hspace{2em} \boldLeftJoin \texttt{Orders} \boldAs \texttt{O} \boldOn \texttt{C.CustomerID = O.CustomerID} \\
\hspace{2em} \underline{\boldLeftJoin \texttt{OrderDetails} \boldAs \texttt{OD} \boldOn \texttt{O.OrderID = OD.OrderID}} \\
\hspace{2em} \underline{\boldLeftJoin \texttt{Products} \boldAs \texttt{P} \boldOn \texttt{OD.ProductID = P.ProductID}} \\
\boldWhere \texttt{C.CompanyName = 'Drachenblut Delikatessen'} \boldGroupBy \texttt{P.ProductName} \\
\end{tabular}
\end{center}

The tutorial provides a corresponding Cypher query
\begin{center}
\footnotesize
\begin{tabular}{l}
\boldMatch \texttt{(C:Customer \{CompanyName:'Drachenblut Delikatessen'\})} \\
\boldOptMatch \underline{\texttt{(P:Product)<-[OD:OrderDetails]-(:Order)<-[:Purchased]-(C)}} \\
\boldReturn \texttt{P.ProductName,} \boldSum \texttt{(OD.UnitPrice * OD.Quantity}) \boldAs \texttt{Volume} \\
\end{tabular}
\end{center}

However, this Cypher query is not equivalent to the SQL query. The underlined optional match clause uses a path pattern containing three nodes and two edges, which does not consider the missing \texttt{Product} as queried by a left outer join in SQL. The difference is more straightforward to understand if we inspect the transpiled SQL query

{
\hspace{-12pt}
\footnotesize
\begin{tabular}{l}
\boldWith \texttt{T0} \boldAs (\boldSelect \texttt{C.CustomerID} \boldAs \texttt{C_CustomerID} \\
\hspace{4em} \boldFrom \texttt{Customer} \boldAs \texttt{C} \boldWhere \texttt{C.CompanyName = 'Drachenblut Delikatessen'}), \\
\hspace{2em}  \texttt{T1} \boldAs (\boldSelect \texttt{P.ProductName} \boldAs \texttt{P_ProductName}, \texttt{OD.Quantity} \boldAs \texttt{OD_Quantity}, \\
\hspace{6em}  \texttt{OD.UnitPrice} \boldAs \texttt{OD_UnitPrice}, \texttt{C.CustomerID} \boldAs \texttt{C_CustomerID}, \\
\hspace{4em}  \boldFrom \underline{\texttt{Product} \boldAs \texttt{P} \boldJoin \texttt{OrderDetails} \boldAs \texttt{OD} \boldJoin \texttt{Purchased} \boldAs \texttt{PD} \boldJoin \texttt{Customer}} \\
\hspace{6em}  \boldAs \texttt{C} \boldOn \texttt{P.ProductID = OD.TGT} \boldAnd \texttt{OD.SRC = PD.TGT} \boldAnd \texttt{PD.SRC = C.CustomerID}), \\
\hspace{2em} \texttt{T2} \boldAs (\boldSelect \texttt{T1.P_ProductName} \boldAs \texttt{ProductName}, \texttt{T1.O_Quantity} \boldAs \texttt{Quantity}, \texttt{T1.O_UnitPrice} \boldAs \texttt{UnitPrice} \\
\hspace{4em} \boldFrom \texttt{T0} \boldLeftJoin \texttt{T1} \boldOn \texttt{T0.C_CustomerID = T1.C_CustomerID}) \\
\boldSelect \texttt{ProductName}, \boldSum(\texttt{UnitPrice * Quantity}) \boldAs \texttt{Volume} \boldFrom \texttt{T2} \boldGroupBy \texttt{ProductName}; \\
\end{tabular}
}

Here, the underlined clause uses inner joins instead of left outer joins in the original SQL query.

\item Misusing different nodes or edges with the same label.
For instance, consider the following SQL query collected from the \verieqlset dataset
\begin{center}
\footnotesize
\begin{tabular}{l}
\boldSelect \texttt{t0.EmpNo}, \texttt{t0.DeptNo}, \texttt{t1.DeptNo} \boldAs \texttt{DeptNo0}, \boldFrom ( \\
\hspace{4em} \boldSelect \texttt{EmpNo}, \texttt{EName}, \texttt{DeptNo}, \texttt{DeptNo} + \texttt{EmpNo} \boldAs \texttt{f9} \boldFrom \texttt{EMP} \boldWhere \texttt{EmpNo = 10} \\
\hspace{2em} ) \boldAs \texttt{t0} \boldJoin (\boldSelect \texttt{DeptNo}, \texttt{Name}, \texttt{DeptNo} + 5 \boldAs \texttt{f2} \boldFrom \texttt{DEPT} ) \boldAs \texttt{t1} \\
\boldOn \texttt{t0.EmpNo = t1.DeptNo} \boldAnd \texttt{t0.f9 = t1.f2}
\end{tabular}
\end{center}
A graduate student wrote a corresponding Cypher query as follows:
\begin{center}
\footnotesize
\begin{tabular}{l}
\boldMatch \texttt{(t0:EMP \{EmpNo: 10\})-[:WORK_AT]->(t1:DEPT)}  \\
\boldWhere \texttt{t1.DeptNo + t0.EmpNo = t1.DeptNo + 5} \\
\boldReturn \texttt{t0.EmpNo}, \texttt{t1.DeptNo}, \texttt{t1.DeptNo} \boldAs \texttt{DeptNo0}
\end{tabular}
\end{center}

However, this Cypher query is not equivalent to the SQL query because it fails to introduce a \texttt{DEPT} node different from \texttt{t1}. Consequently, the query matches an incorrect \texttt{t1} and uses an incorrect filtering predicate.
\tool disproves the equivalence between the SQL and Cypher queries and provides a counterexample as shown in Figure~\ref{fig:example-qa3}. The counterexample contains a relational and a graph database holding the same data in different formats. Running the SQL query on the relational database in Figure~\ref{fig:example-qa3-relation} returns a table with one row \texttt{(10,5,10)}, whereas running the Cypher query on the graph database in Figure~\ref{fig:example-qa3-graph} returns empty result.

\begin{figure}[t]
\small
\centering

\begin{subfigure}{0.5\linewidth}
\centering
\begin{tabular}{|c|c|c|}
\multicolumn{3}{c}{\texttt{EMP}} \\
\hline
\texttt{EmpNo} & \texttt{EName} & \texttt{DeptNo} \\
\hline
0 & A & 10 \\
\hline
10 & B & 5 \\
\hline
\end{tabular}
~
\begin{tabular}{|c|c|c|}
\multicolumn{2}{c}{\texttt{DEPT}} \\
\hline
\texttt{DeptNo} & \texttt{DName} \\
\hline
5 & C \\
\hline
10 & D \\
\hline
\end{tabular}
\caption{Relational database.}
\label{fig:example-qa3-relation}
\end{subfigure}
\begin{subfigure}{0.4\linewidth}
\centering
\normalsize
\begin{tikzpicture}[main/.style = {draw, circle}] 
\node[circle, draw=black] (1) at (0, 1){A};
\node[circle, draw=black] (2) at (0, 0){B};
\node[circle, draw=black, dashed] (3) at (2.3, 1){D};
\node[circle, draw=black, dashed] (4) at (2.3, 0){C};

\draw[->] (1) -- (3) node [midway, above, sloped] (cs1) {};
\draw[->] (2) -- (4) node [midway, below, sloped] (cs2) {};
\end{tikzpicture} 
\vspace{-5pt}
\caption{Graph database. Solid nodes have label \texttt{EMP}. Dashed nodes have label \texttt{DEPT}.}
\label{fig:example-qa3-graph}
\end{subfigure}

\vspace{-5pt}
\caption{A counterexample that demonstrates non-equivalence.}
\vspace{-10pt}
\label{fig:example-qa3}
\end{figure}
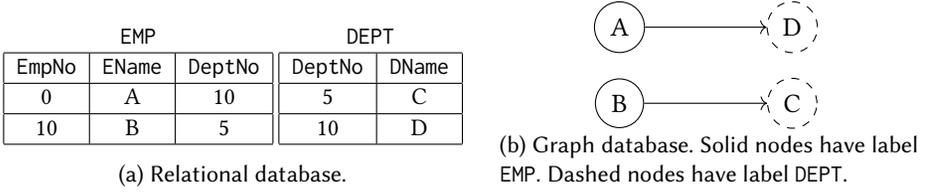

\end{enumerate}
\section{Comparing \tool's Transpiler with \opentranspiler} \label{sec:eval-oct}



In this section, we compare the transpiler of \tool with \opentranspiler~\cite{oct-web25} on all 410 benchmarks.
The results of running \opentranspiler on these benchmarks are presented in Table~\ref{tab:eval-oct-results}. As shown in the table, 284 out of 410 (69\%) queries fall outside the Cypher fragment supported by \opentranspiler. Among the remaining 126 benchmarks, \opentranspiler ill-translate 2 Cypher queries into syntactically invalid SQL queries. Furthermore, a manual inspection reveals that 2 other Cypher queries are transpiled into semantically incorrect SQL queries. These findings demonstrate that \opentranspiler serves as a ``best-effort'' tool without any soundness guarantees about the result of transpilation and cannot preserve semantic equivalence during transpilation.
In contrast, \tool can correctly transpile all 410 Cypher queries used in our evaluation.


\begin{table}[!t]
\small
\centering
\caption{Transpilation Results of \opentranspiler.}
\label{tab:eval-oct-results}
\vspace{-10pt}
\begin{tabular}{c|c|c|c|c|c}
\toprule
Dataset & \# & \# Unsupported & \# SynErr  & \# Incorrect  & \# Correct \\       
\midrule
\stackoverflow & 12 & 8 & 0 & 0 & 4 \\
\tutorial & 26 & 14 & 0 & 0 & 12 \\
\academic & 7 & 6 & 0 & 0 & 1 \\
\verieqlset & 60 & 44 & 1 & 0 & 15 \\
\mediatorset & 100 & 100 & 0 & 0 & 0 \\
\gpt & 205 & 112 & 1 & 2 & 90 \\
\hline
Total & 410 & 284 & 2 & 2 & 122 \\
\bottomrule
\end{tabular}
\vspace{-10pt}
\end{table}

Here are three sample Cypher queries that \opentranspiler cannot transpile correctly.

\begin{enumerate}[leftmargin=*]
\item A Cypher query that cannot be translated by \opentranspiler.

\begin{center}
\normalsize
\begin{tabular}{l}
\boldMatch \texttt{(N:PERSON)} \\
\boldWith \texttt{N.ZIPCODE} \boldAs \texttt{ZIP}, \highlight{\texttt{\textbf{Count}}(*)} \boldAs \texttt{POPULATION} \\
\boldWhere \texttt{POPULATION} > 3 \\
\boldReturn \texttt{ZIP}, \texttt{POPULATION} \\
\end{tabular}
\end{center}
\opentranspiler does not support expressions such as \texttt{\textbf{Count}}(*) or \texttt{\textbf{Avg}}(*).

\item A Cypher query where \opentranspiler generates ill-formed SQL query with undefined variables.

\begin{center}
\normalsize
\begin{tabular}{l}
\boldMatch \texttt{(X:USR)}, \texttt{(U:PIC)}, \texttt{(V:PIC}, \texttt{(W:PIC)} \\
\boldWhere \texttt{X.USRUID = U.PICUID} \boldAnd \texttt{X.USRUID = V.PICUID} \boldAnd \texttt{X.USRUID = W.PICUID} \boldAnd \\
\hspace{2em} \texttt{W.PRICSIZE = V.PRICSIZE} \boldAnd \texttt{U} \boldIsNotNull \boldAnd \texttt{V} \boldIsNull \\
\boldReturn \boldDistinct \texttt{X.USRUID} \boldAs \texttt{XID}, \texttt{X.USRNAME} \boldAs \texttt{XNAME} \\
\end{tabular}
\end{center}
\opentranspiler translates the above Cypher query to the following:
\begin{center}
\normalsize
\begin{tabular}{l}
\boldSelect \boldDistinct \texttt{__X_USRUID} \boldAs \texttt{XID}, \ldots~ \boldFrom ( \\
\hspace{2em} \ldots~ \boldWhere ~\ldots~ \boldAnd ((\highlight{\texttt{U}}) \boldIsNotNull) \boldAnd ~\ldots \\
) \boldAs \texttt{_proj}

\end{tabular}
\end{center}

However, this SQL query uses an undefined table name ``\texttt{U}''.

\item A Cypher query where \opentranspiler produces a semantically incorrect transpilation result.

\begin{center}
\normalsize
\begin{tabular}{l}
\boldMatch \texttt{(e1:EMPLOYEES)} \\
\boldOptMatch \texttt{(e2:EMPLOYEEUNI)-[:IS]->(e1)} \\
\boldReturn \texttt{e2.UNIQUE_ID}, \texttt{e1.NAME}
\end{tabular}
\end{center}
\opentranspiler translates the above Cypher query to the following:
\begin{center}
\normalsize
\begin{tabular}{l}
\boldSelect \texttt{UNIQUE_ID}, \texttt{NAME} \boldFrom ( \boldSelect * \boldFrom \texttt{EMPLOYEEUNI} \boldJoin \texttt{IS} \boldOn \ldots ) \boldAs \texttt{T1} \\
\boldLeftJoin EMPLOYEES \boldAs \texttt{T2} \boldOn \ldots \\
\end{tabular}
\end{center}
The correct SQL query should be as follows:
\begin{center}
\normalsize
\begin{tabular}{l}
\boldSelect \texttt{UNIQUE_ID}, \texttt{NAME} \boldFrom ( \boldSelect * \boldFrom \texttt{EMPLOYEEUNI} \boldJoin \texttt{IS} \boldOn \ldots ) \boldAs \texttt{T1} \\
\highlight{\boldRightJoin} EMPLOYEES \boldAs \texttt{T2} \boldOn \ldots \\
\end{tabular}
\end{center}

\end{enumerate}

\section{Proofs} \label{sec:proofs}

\begin{theorem}[Soundness of translation (\ref{lem:soundness})]~\label{lem:query}
Let $\gschema$ be a graph schema and $Q$ be a Cypher query over $\gschema$. Let $\rschema = (S, \integrity)$ be the induced relational schema of $\gschema$, and $\sdt$ be the standard \trans from $\gschema$ to $\rschema$. If $\sdt, \rschema \vdash Q \transquery Q'$, then $Q'$ is equivalent to $Q$ modulo $\sdt$, i.e., $Q \simeq_{\sdt} Q'$.
\end{theorem}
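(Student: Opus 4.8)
The plan is to prove the theorem by structural induction on the derivation of $\sdt, \rschema \vdash Q \transquery Q'$, strengthening it with analogous lemmas for the auxiliary judgments $\transclause$, $\transpattern$, $\transexpr$, and $\transpred$. Fix a graph instance $G \conform \gschema$ and a relational instance $R \conform \rschema$ with $\sdt(G) = R$; by unfolding the definitions of $\simeq_{\sdt}$ (Definition~\ref{def:queryEquivalence}) and table equivalence (Definition~\ref{def:tableEquivalence}), it suffices to exhibit, for the final $Q$, a column bijection under which $\denot{Q}_G$ and $\denot{Q'}_R$ have the same tuples with the same multiplicities. The central device is a \emph{correspondence relation} $g \approx r$ between a matched sub-property-graph $g$ (an element of some $\denot{PP}_G$ or $\denot{C}_G$, which carries a binding of variables to the nodes/edges in the domain of its property map) and a relational tuple $r$: $g \approx r$ holds iff for every variable $X$ of label $l$ bound in $g$ and every key $k \in \keys(l)$ we have $\mathsf{lookup}(g, X.k) = r.(X.k)$, and for every edge variable $X$ of label $l$ bound in $g$ the attributes $r.\fk_s$ and $r.\fk_t$ are the default-key values of $g$'s source and target nodes for that edge. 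Since $\sdt(G) = R$, the \textrm{Node} and \textrm{Edge} clauses of the standard transformer place each graph element into exactly one tuple of the corresponding base table; this is the base-level instance of $\approx$, and it is a bijection between graph elements and base-table tuples.

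I would then establish three lemmas, each by induction on the relevant derivation and each relative to $\sdt(G) = R$. (1) \emph{Patterns}: if $\sdt, \rschema \vdash PP \transpattern \Xset, Q'$ then there is a multiset bijection $\pi$ from $\denot{PP}_G$ to $\denot{Q'}_R$ with $g \approx \pi(g)$ for all $g$, and the variables bound in each $g$ coincide with $\Xset$ and with the columns of $\pi(g)$. The base case \textrm{PT-Node} is the node part of $\approx$; the inductive case \textrm{PT-Path} uses the edge part together with the side condition $\integrity(\rschema) \Rightarrow \pred \land \pred'$ and the induced foreign-key constraints to show that the additional joins on $\fk_s$/$\fk_t$ enumerate exactly the one-step extensions of a matched path, mirroring $\mathsf{Subgraphs}$ in Figure~\ref{fig:cypher-semantics}. (2) \emph{Clauses}: if $\sdt, \rschema \vdash C \transclause \Xset, Q'$ then $\denot{C}_G$ and $\denot{Q'}_R$ are in $\approx$-respecting multiset bijection with variable set $\Xset$; \textrm{C-Match1} combines lemma (1) with lemma (3) below, \textrm{C-Match2} uses the fact that $\sfMerge$ of two subgraphs agreeing exactly on the common variables corresponds to the inner join whose extra predicates equate the primary keys $\PK(R_l)$ of those variables (the primary-key constraints guarantee such an equality holds iff the two tuples came from the same graph element), \textrm{C-OptMatch} is the same statement with $\sfMerge$-or-$\sfNullify$ matched by the null padding of the left outer join, and \textrm{C-With} is a renaming that commutes with $\approx$. (3) \emph{Expressions and predicates}: if $\sdt, \rschema \vdash E \transexpr E'$ (resp. $\pred \transpred \pred'$) and $g \approx r$ — or, in the aggregate case, a list $gs$ whose elements correspond pointwise to a group of tuples — then $\denot{E}_{G,[g]} = \denot{E'}_{r}$ (resp. the two predicates agree in three-valued logic); the only nontrivial case is $\sfExists$, which reduces to lemma (1) and the semantics of the $\in$-subquery predicate.

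With these lemmas, the main induction is routine: \textrm{Q-Ret} applies lemmas (2) and (3) so that the renaming by $\overline{k}$ gives the required column bijection; \textrm{Q-Agg} additionally argues that the Cypher grouping — partitioning $\denot{C}_G$ into $\mathsf{Groups}$ by the non-aggregate expressions $\overline{A}$ — coincides with the SQL $\ttGroupBy$ over the translated columns $\overline{A}$ (again by lemma (3) applied to $\overline{A}$), so the per-group aggregates agree; and \textrm{Q-OrderBy}, \textrm{Q-Union}, \textrm{Q-UnionAll} follow because these operators have the same semantics in Cypher and SQL and preserve $\tbleq$. The step I expect to be the main obstacle is lemma (2) for \textrm{C-OptMatch}, together with the handling of repeated variables in \textrm{C-Match2}/\textrm{C-OptMatch}: the $\sfOptMatch$ semantics keeps every surviving extension of a subgraph when at least one exists and otherwise nullifies exactly one copy of the pattern, all under a post-filter by $\pred$, whereas the left outer join pads with a single null row only when no joined tuple satisfies $\pred''$. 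Matching these precisely requires verifying that the shared-variable condition "$g \cap g'' \neq \emptyset$" in the Cypher semantics lines up with the primary-key equi-joins added to $\pred''$, that those extra equalities are redundant on non-null rows but still discard spurious matches, and that nullification touches exactly the newly introduced columns; throughout, the lemmas must be phrased as multiset bijections rather than set equalities so that the multiplicity clause of Definition~\ref{def:tableEquivalence} is maintained.
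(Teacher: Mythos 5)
Your proposal is correct and follows essentially the same route as the paper: structural induction on the translation derivation, strengthened by auxiliary lemmas for patterns, clauses, expressions, and predicates, with the same case analysis (Q-Ret/Q-Agg splitting on \textsf{hasAgg}, shared-variable primary-key joins for \textrm{C-Match2}, null-padding for \textrm{C-OptMatch}). Your explicit correspondence relation $g \approx r$ and the insistence on multiset bijections are just a more carefully spelled-out version of the paper's convention of applying $\sdt$ elementwise to lists of matched subgraphs (i.e., $\sdt(g_i) = t_i$).
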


\begin{proof}
Prove by structural induction on $Q$. 

\begin{enumerate}

\item Base case: $Q = \sfReturn(C, \overline{E}, \overline{k})$.

By the inductive hypothesis, we have $\sdt(\denot{C}_{G}) = \denot{Q'}_{R}$ where $\sdt, \rschema \vdash C \transclause \Xset, Q'$. 

Suppose that $|\overline{E}| = m$ and $|\denot{Q}_{R}| = |\denot{C}_{G}| = n$.
Let us discuss the predicate $\neg \sfHasAgg(\overline{E})$ in two cases.

\begin{enumerate}
\item[(a)]
If $\neg \sfHasAgg(\overline{E}) = \top$, then, for any expression $e \in \overline{E}$, $\mathsf{IsAgg}(e) = \bot$.
By Figure~\ref{fig:cypher-semantics}, we know 
\[
\begin{array}{rcl}
\denot{\sfReturn(C, \overline{E}, \overline{k})}_{G}
&=& \sfmap(\lambda g. \sfmap(\lambda (E, k). (k, \denot{E}_{G, [g]}), \sfzip(\overline{E}, \overline{k})), \denot{C}_{G}) \\
&=& [ \\
&& \hspace{3em} [(k_1, \denot{E_1}_{G, [g_1]}), \ldots, (k_m, \denot{E_m}_{G, [g_1]})], \\
&& \hspace{3em} \ldots, \\
&& \hspace{3em} [(k_1, \denot{E_1}_{G, [g_n]}), \ldots, (k_m, \denot{E_m}_{G, [g_n]})] \\
&& ] \\
\end{array}
\]
Also, by the SQL semantics~\cite{verieql-oopsla24}, we know 
\[
\begin{array}{rcl}
\denot{\proj_{\rho_{\overline{k}}(\overline{E'})}(Q')}_{R}
&=& \sfmap(\lambda t. \sfmap(\lambda (E', k). (k, \denot{E'}_{R, [t]}), \sfzip(\overline{E'}, \overline{k})), \denot{Q'}_{R}) \\
&=& [ \\
&& \hspace{3em} (k_1: \denot{E_1'}_{R, [t_1]}, \ldots, k_m: \denot{E_m'}_{R, [t_1]}), \\
&& \hspace{3em} \ldots, \\
&& \hspace{3em} (k_1: \denot{E_1'}_{R, [t_n]}, \ldots, k_m: \denot{E_m'}_{R, [t_n]}), \\
&& ] \\
\end{array}
\]
There exists a bijective mapping $\pi = \{E_j \mapsto E_j' ~|~ 1 \leq j \leq m \}$ and $\pi^{-1} = \{E_j' \mapsto E_j ~|~ 1 \leq j \leq m \}$. 
Let $g_i \in \denot{C}_{G}$ be a graph derived from $G$ and $t_i \in \denot{Q'}_{R}$ be a tuple derived from $R$ s.t. $\sdt(g_i) = t_i$.
Then, we know $\denot{E_j}_{G, [g_i]} = \denot{E_j'}_{R, [t_i]}$ by Lemma~\ref{lem:expression}.
This case is proved because the following formula holds
\[
\forall G, R. \ G \conform \gschema \land R \conform \rschema \land G \sim_{\sdt} R \ \Rightarrow  \ \sdt(\denot{C}_{G}) = \denot{Q}_{R}
\]

\item[(b)]
If $\neg \sfHasAgg(\overline{E}) = \bot$, then, for some expression $E \in \overline{E}$, $\mathsf{IsAgg}(E) = \top$.

Suppose that $n_v = |\sfDedup(\overline{V})| \leq n = |\overline{V}|$. Then, by the definition of $\mathsf{Groups}$ in Figure~\ref{fig:cypher-semantics}, we know
\[
\begin{array}{rcl}
\denot{\sfReturn(C, \overline{E}, \overline{k})}_{G}
&=& \sfmap(\lambda gs. \sfmap(\lambda (E, k). (k, \denot{E}_{G, gs}), \sfzip(\overline{E}, \overline{k})), \mathsf{Groups})) \\
\end{array}
\]
where $\mathsf{Groups} = [gs_1, \ldots, gs_{n_v}]$ and $gs_i$ is a list of graphs consisting of at least one graph derived from $G$.

Also, by the SQL semantics~\cite{verieql-oopsla24}, we known
\[
\begin{array}{rcl}
\denot{\ttGroupBy(Q', \overline{A'}, \rho_{\overline{k}}(\overline{E'}), \top)}
&=& \sfmap(\lambda xs. \sfmap(\lambda (E, k). (k, \denot{E'}_{R, xs}), \sfzip(\overline{E}, \overline{k})), \\
&& \hspace{3em} \sfFilter(\lambda xs. \denot{\top}_{R, xs} = \top, \mathsf{Ts}))) \\
&=& \sfmap(\lambda xs. \sfmap(\lambda (E, k). (k, \denot{E'}_{R, xs}), \sfzip(\overline{E}, \overline{k})), \mathsf{Ts})) \\
\end{array}
\]
where $\mathsf{Ts}$ denote the grouping list derived by $Q'$.

There exists a bijective mapping $\pi = \{E_j \mapsto E_j' ~|~ 1 \leq j \leq m \}$ and $\pi^{-1} = \{E_j' \mapsto E_j ~|~ 1 \leq j \leq m \}$. Then, $\pi(\overline{E}) = \pi([E_1, \ldots, E_m]) = [E_1', \ldots, E_m'] = \overline{E'}$ and $\pi(\overline{A}) = \sfFilter(\lambda E. \neg \sfHasAgg(E), \pi(\overline{E})) = \sfFilter(\lambda E. \neg \sfHasAgg(E), \overline{E}) = \overline{A'}$.
Since $\sdt(\denot{C}_{G}) = \denot{Q}_{R}$ and $\pi(\overline{A}) = \overline{A'}$, we always find $n_v$ unique results while evaluating $\overline{A}$ on $\denot{C}_{G}$ and $\overline{A'}$ on $\denot{Q}_{R}$ (namely, $|\mathsf{Groups}| = |\mathsf{Ts}| = n_v$). 
Let $\mathsf{Ts} = [ts_1, \ldots, ts_m]$ and $ts_i$ corresponds to $gs_i \in \mathsf{Groups}$. Then $\sdt(\mathsf{Groups}) = \mathsf{Ts}$ and, therefore,  $\sdt(\denot{\sfReturn(C, \overline{E}, \overline{k})}_{G}) = \denot{\ttGroupBy(Q', \overline{A'}, \rho_{\overline{k}}(\overline{E'}), \top)}$.

\end{enumerate}

Thus, Theorem~\ref{lem:query} in this case is proved.

\item Inductive case: $Q = \sfOrderBy(R_1, k, b)$.

Specifically, $R_1$ denotes the $\sfReturn$ statement of Cypher.
By the inductive hypothesis, we have $\denot{R_1}_{G} = \denot{Q'}_{R}$ where $\sdt, \rschema \vdash R_1 \transquery Q'$.
Also, by Lemma~\ref{lem:expression}, we have $\denot{k}_{G, gs} = \denot{k}_{R, \tuples}$ where $\sdt, \rschema \vdash k \transexpr k$ and $k$ is property key.
By Figure~\ref{fig:cypher-semantics}, we know
\[
\begin{array}{rcl}
\denot{\sfOrderBy(R_1, k, b)}_{G} 
&=& \sfFoldl(\lambda xs. \lambda \_. xs \doubleplus [\mathsf{MinTuple}(k, b, \denot{R_1}_{G} - xs)], [], \denot{R_1}_{G}) \\
\end{array}
\]
Also, by the SQL semantics~\cite{verieql-oopsla24}, we known
\[
\begin{array}{rcl}
\denot{\ttOrderBy(Q', k, b)}_{G}
&=& \sfFoldl(\lambda xs. \lambda \_. xs \doubleplus [\mathtt{MinTuple}(k, b, \denot{Q'}_{\tuples} - xs)], [], \denot{Q'}_{\tuples}) \\
\end{array}
\]
Since $\sfOrderBy$ in Cypher and $\ttOrderBy$ in SQL will not change the order of columns, we can reuse the bijective mapping $\pi$ from the inductive hypothesis for this case.
Further, as $\sdt(\denot{R_1}_{G}) = \denot{Q'}_{\tuples}$ and the functions  $\mathsf{MinTuple}$ and $\mathtt{MinTuple}$ are equivalent by their definitions, $\sdt(\denot{\sfOrderBy(R_1, k, b)}_{G}) = \denot{\ttOrderBy(Q', k, b)}_{G}$

\item Inductive case: $Q = \mathsf{Union}(Q_1, Q_2)$.

By the inductive hypothesis, we have $\denot{Q_1}_{G, gs} \simeq_{\sdt} \denot{Q_1'}_{R}$ and $\denot{Q_2}_{G, gs} \simeq_{\sdt} \denot{Q_2'}_{R}$ where $\sdt, \rschema \vdash Q_1 \transquery Q_1'$ and $\sdt, \rschema \vdash Q_2 \transquery Q_2'$.
Similarly, since the order of columns will not be changed in this case, we can inherit the bijective mapping $\pi$ from the inductive hypothesis for this case.
By Figure~\ref{fig:cypher-semantics} and the definitions of $\sfDedup$ and $\texttt{Distinct}$, $\denot{\mathsf{Union}(Q_1, Q_2)}_{G} = \sfDedup(\denot{\mathsf{UnionAll}(Q_1, Q_2)}_{G}) \simeq_{\sdt} \texttt{Distinct}(\denot{Q_1' \uplus Q_2'}_{R}) = \denot{Q_1' \cup Q_2'}_{R}$.

\item Inductive case: $Q = \mathsf{UnionAll}(Q_1, Q_2)$.

By the inductive hypothesis, we have $\denot{Q_1}_{G, gs} \simeq_{\sdt} \denot{Q_1'}_{R}$ and $\denot{Q_2}_{G, gs} \simeq_{\sdt} \denot{Q_2'}_{R}$ where $\sdt, \rschema \vdash Q_1 \transquery Q_1'$ and $\sdt, \rschema \vdash Q_2 \transquery Q_2'$.
Also, since the order of columns will not be changed in this case, we can inherit the bijective mapping $\pi$ from the inductive hypothesis for this case.
By Figure~\ref{fig:cypher-semantics}, $\denot{\mathsf{UnionAll}(Q_1, Q_2)}_{G} = \denot{Q_1}_{G} \doubleplus \denot{Q_2}_{G} \simeq_{\sdt} \denot{Q_1'}_{R} \doubleplus \denot{Q_2'}_{R} = \denot{Q_1' \uplus Q_2'}_{R}$.

\end{enumerate}

\end{proof}

\begin{lemma}~\label{lem:clause}
Let $\gschema$ be a graph schema and $C$ be a Cypher clause. 
Let $\rschema$ be a relational schema and $\sdt$ be the standard \trans from $\gschema$ to $\rschema$. 
If $\sdt, \rschema \vdash C \transclause \Xset, Q$, it holds that~\footnote{We slightly abuse $\sdt$ over lists to apply $\sdt$ to each element in list $\denot{C}_{G}$.}
\[
\forall G, R. \ G \conform \gschema \land R \conform \rschema \land G \sim_{\sdt} R \ \Rightarrow  \ \sdt(\denot{C}_{G}) = \denot{Q}_{R}
\]
\end{lemma}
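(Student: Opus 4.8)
The natural approach is structural induction on the clause $C$, following the four translation rules of Figure~\ref{fig:trans-clause}. In every case I would unfold the Cypher clause semantics of Figure~\ref{fig:cypher-semantics} and the SQL semantics of the translated query, and then exhibit the bijection witnessing $\sdt(\denot{C}_G) = \denot{Q}_R$. The induction relies on two companion facts that are already implicit in the proof of Theorem~\ref{lem:query}: (i) a \emph{pattern lemma} stating $\sdt(\denot{PP}_G) = \denot{Q_P}_R$ whenever $\sdt,\rschema \vdash PP \transpattern \Xset, Q_P$ — provable by induction on $PP$ using \textrm{PT-Node} (which maps a node variable to its table) and \textrm{PT-Path} (which realizes the edge step as two inner joins on $\fk_s,\fk_t$, whose equality with the incident primary keys is guaranteed by the side condition $\integrity(\rschema) \Rightarrow \pred \land \pred'$); and (ii) the predicate counterpart of Lemma~\ref{lem:expression}: if $\sdt(g) = t$ and $\sdt,\rschema \vdash \pred \transpred \pred'$, then $\denot{\pred}_{G,[g]} = \denot{\pred'}_{R,[t]}$, respecting three-valued logic.

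For the base case $C = \sfMatch(PP,\pred)$ (rule \textrm{C-Match1}, so $Q = \filter_{\pred'}(Q_P)$), unfolding gives $\denot{\sfMatch(PP,\pred)}_G = \sfFilter(\lambda g.\ \denot{\pred}_{G,[g]} = \top,\ \denot{PP}_G)$ and $\denot{\filter_{\pred'}(Q_P)}_R = \sfFilter(\lambda t.\ \denot{\pred'}_{R,[t]} = \top,\ \denot{Q_P}_R)$; the pattern lemma matches the two underlying lists up to $\sdt$, and the predicate lemma shows $g$ survives iff $\sdt(g)$ survives, so the filtered results correspond. For $C = \sfWith(C_1,\overline{Y},\overline{Z})$ (rule \textrm{C-With}), the Cypher semantics merely renames the variables $\overline{Y}$ to $\overline{Z}$ inside each resulting graph, which commutes with $\sdt$; since the translated query performs exactly the matching attribute renaming on $\proj_L(Q')$, the induction hypothesis $\sdt(\denot{C_1}_G) = \denot{\proj_L(Q')}_R$ lifts directly, using the standing assumption that property-key and attribute names are unique within a schema.

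The substantive cases are $C = \sfMatch(C_1,PP,\pred)$ and $C = \sfOptMatch(C_1,PP,\pred)$. Rule \textrm{C-Match2} (resp. \textrm{C-OptMatch}) produces $\rename_{T_1}(Q_1) \ijoin_{\pred''} \rename_{T_2}(Q_2)$ (resp. with $\ljoin$), where $\pred'' = \pred' \land \bigwedge_{(X:l)\in \Xset_1 \cap \Xset_2} T_1.\PK(R_l) = T_2.\PK(R_l)$. By the induction hypothesis $\sdt(\denot{C_1}_G) = \denot{Q_1}_R$ and by the pattern lemma $\sdt(\denot{PP}_G) = \denot{Q_2}_R$. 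The Cypher side forms $\sfMerge(g_1,g_2)$ over all $g_1 \in \denot{C_1}_G$, $g_2 \in \denot{PP}_G$, keeping those that merge consistently and satisfy $\pred$. The key observation is that $\sdt$ sends a node or edge of label $l$ to a row of $R_l$ whose primary key $\PK(R_l)$ is the default property key, which carries a globally unique value; hence two fragments agree on the binding of a shared variable $(X:l)$ exactly when the corresponding $R_l$-rows agree on $\PK(R_l)$. So the "consistent merge" condition is precisely the conjunction of equalities in $\pred''$, and the residual predicate is $\pred'$ by the predicate lemma, giving $\sdt(\denot{\sfMatch(C_1,PP,\pred)}_G) = \denot{\rename_{T_1}(Q_1) \ijoin_{\pred''} \rename_{T_2}(Q_2)}_R$. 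For $\sfOptMatch$, the extra point is the dangling case in $\sffoldl(\lambda gs.\lambda g.\ gs \doubleplus \sfite(|v_1(g)|=0, v_2(g), v_1(g)), \ldots)$: when no $g_2$ merges consistently with $g_1$ and satisfies $\pred$, Cypher emits one $\sfMerge(g_1, \sfNullify(\ldots))$, which is exactly the null-padded row $\ljoin_{\pred''}$ emits when no right tuple satisfies $\pred''$; otherwise the matched rows coincide as in the $\sfMatch$ case.

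The main obstacle is precisely this correspondence between $\sfMerge$/$\sfNullify$ on property graphs and $\ijoin_{\pred''}$/$\ljoin_{\pred''}$ on bags of tuples: one must (1) pin down when a merge of two fragments is well-defined, (2) use uniqueness of default property keys to show that consistency of shared-variable bindings is captured exactly by the primary-key equalities in $\pred''$, (3) track multiplicities so that the identity is one of bags, not merely of sets, and (4) for the optional-match case reconcile the null-padding of $\ljoin$ with $\sfNullify$, including the three-valued evaluation of $\pred'$ when it occurs inside the join condition. The remaining bookkeeping — freshness of $T_1,T_2$, absence of name clashes, flattening in the nested $\sfmap$ of the clause semantics — is routine.
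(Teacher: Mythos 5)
Your proposal is correct and follows essentially the same route as the paper: structural induction on $C$ over the four rules of Figure~\ref{fig:trans-clause}, supported by exactly the two companion lemmas the paper isolates (its Lemma on patterns, $\sdt(\denot{\pattern}_G)=\denot{Q}_R$, and its Lemma on predicates, the counterpart of Lemma~\ref{lem:expression}), with the same key observations in the hard cases — shared-variable consistency under $\sfMerge$ reducing to the primary-key equalities in $\pred''$ via uniqueness of default property keys, and the $\sfNullify$/null-padding correspondence for $\sfOptMatch$. The only cosmetic difference is that the paper splits the $\sfMatch(C_1,PP,\pred)$ case on whether $\Xset_1\cap\Xset_2$ is empty, which you handle uniformly.
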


\begin{proof}
Prove by structural induction on $C$. 

\begin{enumerate}
\item Base case: $C = \sfMatch(\pattern, \pred)$.

By Lemma~\ref{lem:pattern}, we have $\sdt(\denot{\pattern}_{G}) = \denot{Q}_{R}$ where $\sdt, \rschema \vdash \pattern \transpattern \Xset, Q$.
Also, by Lemma~\ref{lem:predicate}, we have $\denot{\pred}_{G, gs} = \denot{\pred'}_{R, \tuples}$ where $\sdt, \rschema \vdash \pred \transpred \pred'$, $gs$ is a list of graph derived from $G$ and $\tuples$ is a list of tuples derived from $R$.
By Figure~\ref{fig:cypher-semantics}, 
\[
\begin{array}{rcl}
\sdt(\denot{\sfMatch(\pattern, \pred)}_{G}) 
&=& \sdt(\sfFilter(\lambda g. \denot{\pred}_{G, [g]} = \top, \denot{\pattern}_{G})) \\
&=& \sfFilter(\lambda t. \denot{\pred'}_{\sdt(G), [t]} = \top, \sdt(\denot{\pattern}_{G})) \\
&=& \sfFilter(\lambda t. \denot{\pred'}_{R, [t]} = \top, \denot{Q}_{R}) \\
&=& \denot{\filter_{\pred'}(Q)}_{R} \\
\end{array}
\]

\item Inductive case: $C = \ttMatch(C_1, \pattern, \pred)$.

By the inductive hypothesis, we have $\sdt(\denot{C_1}_{G}) = \denot{Q_1}_{R}$ where $\sdt, \rschema \vdash C_1 \transclause \Xset_1, Q_1$.
By Lemma~\ref{lem:pattern}, we have $\sdt(\denot{\pattern}_{G}) = \denot{Q_2}_{R}$ where $\sdt, \rschema \vdash \pattern \transpattern \Xset_2, Q_2$.
Also, by Lemma~\ref{lem:predicate}, we have $\denot{\pred}_{G, gs} = \denot{\pred'}_{R, \tuples}$ where $\sdt, \rschema \vdash \pred \transpred \pred'$, $gs$ is a list of graph derived from $G$ and $\tuples$ is a list of tuples derived from $R$. 

By the definition of $\sfMerge$ for graph, we know, for any two graphs $g_1$ and $g_2$, there is no duplicate node, edge, property key in $\sfMerge(g_1, g_2)$.
Similarly, by the definition of $\ttMerge$ for tuple, we know the function $\ttMerge$ has the same functionality as $\sfMerge$.
Therefore, $\sdt(\sfMerge(g_1, g_2)) = \ttMerge(t_1, t_2)$ iff $\sdt(g_1) = t_1 \land \sdt(g_2) = t_2$.

For any Cypher clause $C_1$ and pattern $\pattern$, let us discuss $\Xset_1 \cap \Xset_2$ in two cases.

\begin{enumerate}
\item[(a)] If $\Xset_1 \cap \Xset_2 = \emptyset$, then we know $\pred'' = \pred'$ by Figure~\ref{fig:trans-clause} and
\[
\centering
\begin{array}{rcl}
\sdt(\denot{\ttMatch(C_1, \pattern, \pred)}_{G}) 
&=& \sdt(\sfFilter(\lambda g. \denot{\pred}_{G, [g]} = \top, \sfMap(\lambda g_1. \sfMap(\lambda g_2. \sfMerge(g_1, g_2), \\
&& \hspace{3em} \denot{\pattern}_{G}), \denot{C_1}_{G}))) \\
&=& \sfFilter(\lambda t. \denot{\pred'}_{\sdt(G), [t]} = \top, \sfMap(\lambda t_1. \sfMap(\lambda t_2. \ttMerge(t_1, t_2), \\
&& \hspace{3em} \sdt(\denot{\pattern}_{G})), \sdt(\denot{C_1}_{G}))) \\
&=& \sfFilter(\lambda t. \denot{\pred'}_{R, [t]} = \top, \sfMap(\lambda t_1. \sfMap(\lambda t_2. \ttMerge(t_1, t_2), \\
&& \hspace{3em} \denot{Q_1}_{R}), \denot{Q_2}_{R})) \\
&=& \sfFilter(\lambda t. \denot{\pred'}_{R, [t]} = \top, \denot{Q_1 \times Q_2}_{R}) \\
&=& \sfFilter(\lambda t. \denot{\pred'}_{R, [t]} = \top, \denot{\rho_{T_1}(Q_1) \times \rho_{T_2}(Q_2)}_{R}) \\
&=& \denot{\filter_{\pred'}(\rho_{T_1}(Q_1) \times \rho_{T_2}(Q_2))}_{R} \\
&=& \denot{\rho_{T_1}(Q_1) \ijoin_{\pred'} \rho_{T_2}(Q_2)}_{R} \\
&=& \denot{\rho_{T_1}(Q_1) \ijoin_{\pred''} \rho_{T_2}(Q_2)}_{R} \\
\end{array}
\]
by Figure~\ref{fig:cypher-semantics} where $T_1$ and $T_2$ are fresh names.

\item[(b)] If $\Xset_1 \cap \Xset_2 \neq \emptyset$, then we know $\pred'' = \pred' \land \land_{(X:l) \in \Xset_1 \cap \Xset_2} T_1.\integrity_{pk}(\smapping(l)) = T_2.\integrity_{pk}(\smapping(l))$ for merging two overlapping graphs $T_1$ and $T_2$ by Figure~\ref{fig:trans-clause} and
\[
\begin{array}{rcl}
\sdt(\denot{\ttMatch(C_1, \pattern, \pred)}_{G}) 
&=& \sdt(\sfFilter(\lambda g. \denot{\pred}_{G, [g]} = \top, \sfMap(\lambda g_1. \sfMap(\lambda g_2. \sfMerge(g_1, g_2), \\
&& \hspace{3em} \denot{\pattern}_{G}), \denot{C_1}_{G}))) \\
&=& \sfFilter(\lambda t. \denot{\pred'}_{\sdt(G), [g]} = \top, \sfMap(\lambda t_1. \sfMap(\lambda t_2. \ttMerge(t_1, t_2), \\
&& \hspace{3em} \sdt(\denot{\pattern}_{G})), \sdt(\denot{C_1}_{G}))) \\
&=& \sfFilter(\lambda t. \denot{\pred'}_{R, [g]} = \top, \sfMap(\lambda t_1. \sfMap(\lambda t_2. \ttMerge(t_1, t_2), \\
&& \hspace{3em} \denot{Q_1}_{R}), \denot{Q_2}_{R})) \\
&=& \sfFilter(\lambda t. \denot{\pred'}_{R, [t]} = \top, \\
&& \hspace{3em} \denot{\rho_{T_1}(Q_1) \ijoin_{\land_{(X:l) \in \Xset_1 \cap \Xset_2} T_1.\integrity_{pk}(\smapping(l)) = T_2.\integrity_{pk}(\smapping(l))} \rho_{T_2}(Q_2)}_{R}) \\
&=& \denot{\filter_{\pred'}(\rho_{T_1}(Q_1) \ijoin_{\land_{(X:l) \in \Xset_1 \cap \Xset_2} T_1.\integrity_{pk}(\smapping(l)) = T_2.\integrity_{pk}(\smapping(l))} \rho_{T_2}(Q_2))}_{R} \\
&=& \denot{\rho_{T_1}(Q_1) \ijoin_{\pred' \land \land_{(X:l) \in \Xset_1 \cap \Xset_2} T_1.\integrity_{pk}(\smapping(l)) = T_2.\integrity_{pk}(\smapping(l))} \rho_{T_2}(Q_2)}_{R} \\
&=& \denot{\rho_{T_1}(Q_1) \ijoin_{\pred''} \rho_{T_2}(Q_2)}_{R} \\
\end{array}
\]
by Figure~\ref{fig:cypher-semantics} where $T_1$ and $T_2$ are fresh names.

\end{enumerate}

Thus, Lemma~\ref{lem:clause} in this case is proved.

\item Inductive case: $C = \sfOptMatch(C_1, \pattern, \pred)$.

By the inductive hypothesis, we have $\sdt(\denot{C_1}_{G}) = \denot{Q_1}_{R}$ where $\sdt, \rschema \vdash C_1 \transclause \Xset_1, Q_1$.
By Lemma~\ref{lem:pattern}, we have $\sdt(\denot{\pattern}_{G}) = \denot{Q_2}_{R}$ where $\sdt, \rschema \vdash \pattern \transpattern \Xset_2, Q_2$.
Also, by Lemma~\ref{lem:predicate}, we have $\denot{\pred}_{G, gs} = \denot{\pred'}_{R, \tuples}$ where $\sdt, \rschema \vdash \pred \transpred \pred'$, $gs$ is a list of graph derived from $G$ and $\tuples$ is a list of tuples derived from $R$. 
Further, by the definition of $\sfNullify$, we know $\sfNullify(\sfHead(\denot{\pattern}_{G}))$ return a graph which is isomorphic to the first graph of $\denot{\pattern}_{G}$ but with $\sfNull$ values for all its property keys. Therefore, $\sdt(\sfNullify(\sfHead(\denot{\pattern}_{G})))$ is equal to a unique tuple $T_{\sfNull}$ used in the semantics of SQL~\cite{verieql-oopsla24}.

For any graph $g_i \in \denot{C}_{G}$, let us discuss the predicate $|v_1(g_i)| = 0$ in two cases.

\begin{enumerate}
\item[(a)]
If the predicate $|v_1(g_i)| = 0$ holds, then, for any graph $g_j' \in \denot{\pattern}_{G}$, $\denot{\pred}_{G, [\sfMerge(g_i, g_j')]} = \bot$ does not hold.
By Figure~\ref{fig:cypher-semantics}, 
\[
\begin{array}{rcl}
\sdt(\sfite(|v_1(g_i)| = 0, v_2(g_i), v_1(g_i))) 
&=& \sdt([v_2(g_i)]) \\
&=& [\ttMerge(\sdt(g_i), \sdt(\sfNullify(\sfHead(\denot{\pattern}_{G}))) )] \\
&=& [\ttMerge(t_i, T_{\sfNull})] \\
&=& [t_i] \times [T_{\sfNull}] \\
\end{array}
\]
where $t_i \in \tuples$ is a corresponding tuple of $g_i$ s.t. $\sdt(g_i) = t_i$.

\item[(b)]
If the predicate $|v_1(g_i)| = 0$ does not hold, then, there exists a graph $g_j' \in \denot{\pattern}_{G}$ s.t. $\denot{\pred}_{G, [\sfMerge(g_i, g_j')]} = \top$.
By Figure~\ref{fig:cypher-semantics}, 
\[
\begin{array}{rcl}
\sdt(\sfite(|v_1(g_i)| = 0, v_2(g_i), v_1(g_i))) 
&=& \sdt(v_1(g_i)) \\
&=& \sfFilter(\lambda x. \denot{\pred'}_{\sdt(G), [x]} = \top, \sfmap(\lambda t'. \ttMerge(t_i, t'), \\
&& \hspace{3em} \sfFilter(\lambda t''. t \cap t''  \neq \emptyset, \sdt(\denot{\pattern}_{G}))) \\
&=& \sfFilter(\lambda x. \denot{\pred'}_{R, [x]} = \top, \sfmap(\lambda t'. \ttMerge(t_i, t'), \\
&& \hspace{3em} \sfFilter(\lambda t''. t \cap t'' \neq \emptyset, \denot{Q_2}_{R})) \\
&=& \filter_{\pred'}([t_i] \ijoin_{\pred''} \denot{Q_2}_{R}) \\
&=& [t_i] \ijoin_{\pred' \land \pred''} \denot{Q_2}_{R} \\
\end{array}
\]
where $t_i \in \tuples$ is a corresponding tuple of $g_i$ s.t. $\sdt(g_i) = t_i$, and $\pred'' = \land_{(X:l) \in \Xset_1 \cap \Xset_2} \\ T_1.\integrity_{pk}(\smapping(l)) = T_2.\integrity_{pk}(\smapping(l))$.

\end{enumerate}

Thus, $\sdt(\sfite(|v_1(g_i)| = 0, v_2(g_i), v_1(g_i))) = [t_i] \ljoin_{\pred' \land \land_{(X:l) \in \Xset_1 \cap \Xset_2} T_1.\integrity_{pk}(\smapping(l)) = T_2.\integrity_{pk}(\smapping(l))} \denot{Q_2}_{R}$ and
\[
\begin{array}{rcl}
\sdt(\denot{\sfOptMatch(C_1, \pattern, \pred)}_{G})
&=& \sdt(\sfFoldl(\lambda gs. \lambda g. gs \doubleplus \sfite(|v_1(g_i)| = 0, v_2(g_i), v_1(g_i)), \\
&& \hspace{3em} [], \denot{C_1}_{G})) \\
&=& \sfFoldl(\lambda xs. \lambda x. xs \doubleplus \sdt(\sfite(|v_1(g_i)| = 0, v_2(g_i), v_1(g_i))), \\
&& \hspace{3em} [], \sdt(\denot{C_1}_{G})) \\
&=& \sfFoldl(\lambda xs. \lambda x. xs \doubleplus [t_i] \ljoin_{\pred' \land \pred''} \denot{Q_2}_{R}, [], \denot{Q_1}_{R}) \\
&=& \denot{Q_1 \ljoin_{\pred' \land \pred''} Q_2}_{R} \\
&=& \denot{\rho_{T_1}(Q_1) \ljoin_{\pred' \land \pred''} \rho_{T_2}(Q_2)}_{R} \\
\end{array}
\]

\item Inductive case: $C = \sfWith(C_1, \overline{Y}, \overline{Z})$.

By the inductive hypothesis, we have $\sdt(\denot{C_1}_{G}) = \denot{\proj_{L}(Q)}_{R}$ where $\sdt, \rschema \vdash C_1 \transclause \Xset, \proj_{L}(Q)$ and $L$ is an alias list.
By the definition of $P[\overline{Y} \mapsto \overline{Z}]$ and $T[\overline{Y} \mapsto \overline{Z}]$, we know, for any variable $(Y_i, l_i) \in \Xset$ and $Y_i \in \overline{Y}$, $P(Y_i, k) = P(Z_i, k)$ and $T(Y_i) = T(Z_i)$ where $k$ is a property key from the node or edge of the label $l$, i.e., $k \in \mathsf{Keys}(l)$. Therefore, this process is equal to the renaming operation in SQL, i.e., $\rho_{L[\overline{Z}/\overline{Y}]}$.
By Figure~\ref{fig:cypher-semantics},
\[
\begin{array}{rcl}
\sdt(\denot{\sfWith(C_1, \overline{Y}, \overline{Z})}_{G}) 
&=& \sdt(\sfmap(\lambda (N, E, P, T). (N, E, P[\overline{X} \mapsto \overline{Y}], T[\overline{X} \mapsto \overline{Y}]), \denot{C_1}_{G})) \\
&=& \denot{\proj_{\rho_{L[\overline{Z}/\overline{Y}]}}(Q)}_{R} \\
\end{array}
\]

\end{enumerate}

\end{proof}

\begin{lemma}~\label{lem:pattern}
Let $\gschema$ be a graph schema and $\pattern$ be a Cypher pattern. 
Let $\rschema$ be a relational schema and $\sdt$ be the standard \trans from $\gschema$ to $\rschema$. 
If $\sdt, \rschema \vdash \pattern \transpattern \Xset, Q$, it holds that~\footnote{We slightly abuse $\sdt$ over lists to apply $\sdt$ to each element in list $\denot{\pattern}_{G}$.}
\[
\forall G, R. \ G \conform \gschema \land R \conform \rschema \land G \sim_{\sdt} R \ \Rightarrow  \ \sdt(\denot{\pattern}_{G}) = \denot{Q}_{R}
\]
\end{lemma}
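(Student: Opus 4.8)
The plan is to prove Lemma~\ref{lem:pattern} by structural induction on the path pattern $\pattern$, following the two translation rules \textrm{PT-Node} and \textrm{PT-Path} of Figure~\ref{fig:trans-pattern}. Throughout I fix instances $G \conform \gschema$ and $R \conform \rschema$ with $G \sim_{\sdt} R$, i.e. $\sdt(G) = R$; by the semantics of database transformers of Section~\ref{sec:transformer} this means $\convert(G) \cup \convert(R) \models \denot{\sdt}$, so for each node/edge label $l$ the content of the table $R_l$ in $R$ is \emph{exactly} the image under $\sdt$ of the $l$-labeled nodes (resp. edges) of $G$. Two observations will do most of the work: (i) $\sdt$ restricts to a multiplicity-preserving bijection between atomic graph elements of $G$ and tuples of the matching tables of $R$; and (ii) $\sdt$ commutes with merging, i.e. $\sdt(\sfMerge(g_1, g_2)) = \ttMerge(\sdt(g_1), \sdt(g_2))$ whenever the merge of the two subgraphs is defined --- the same fact already exploited in the proof of Lemma~\ref{lem:clause} (and which I would justify from the fact that $\sfMerge$ and $\ttMerge$ produce no spurious duplicates of nodes, edges, or property keys).

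For the base case $\pattern = (X, l)$ (rule \textrm{PT-Node}), we have $Q = \rename_X(R_l)$ with $l(K_1, \ldots, K_n) \to R_l(K_1, \ldots, K_n) \in \sdt$. Unfolding the Cypher semantics of Figure~\ref{fig:cypher-semantics}, $\denot{(X, l)}_G$ is the list of one-node property graphs, one per node $n$ with $\lbl(T(n)) = l$, recording the property values of $n$ under the name $X$. Applying $\sdt$ to each yields precisely the tuple $R_l(P(n, K_1), \ldots, P(n, K_n))$ renamed by $X$, and by observation (i) the resulting bag is exactly the content of $R_l$ in $R$, which is $\denot{\rename_X(R_l)}_R$; multiplicities agree because the bijection is multiplicity-preserving.

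For the inductive case $\pattern = (X_1, l_1), (X_2, l_2, d_2), PP'$ (rule \textrm{PT-Path}), let $(X_3, l_3) = \text{head}(PP')$ and let $\sdt, \rschema \vdash PP' \transpattern \Xset, Q'$ be the premise, so by the induction hypothesis $\sdt(\denot{PP'}_G) = \denot{Q'}_R$. The translated query is $\rename_{X_1}(R_{l_1}) \ijoin_{\pred} \rename_{X_2}(R_{l_2}) \ijoin_{\pred'} Q'$ with $\pred \equiv (R_{l_2}.\fk_s = \PK(R_{l_1}))$ and $\pred' \equiv (R_{l_2}.\fk_t = \PK(R_{l_3}))$. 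I would argue in two steps. First, using the \textrm{Node}/\textrm{Edge} rules of \textsc{InferSDT} --- which place the source-node primary key in the $\fk_s$ attribute of each $l_2$-edge tuple and the target-node primary key in $\fk_t$ --- together with the foreign-key integrity guaranteed by $\integrity(\rschema) \Rightarrow \pred \land \pred'$, the inner join $\rename_{X_1}(R_{l_1}) \ijoin_{\pred} \rename_{X_2}(R_{l_2})$ computes exactly $\sdt$ of the bag of (source $l_1$-node, $l_2$-edge) pairs occurring in $G$; joining this once more on $\pred'$ against the leftmost (head-node) factor of $Q'$ therefore reconstructs $\sdt(\mathsf{Subgraphs}(G, [(X_1, l_1), (X_2, l_2, d_2), (X_3, l_3)]))$. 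Second, I connect this to $Q'$: by the induction hypothesis each row of $Q'$ is $\sdt$ of a subgraph $g'$ matching $PP'$, whose head node $(X_3, l_3)$ is stored in the leftmost block of columns of $Q'$ --- a column-layout invariant that I would state and prove as a small auxiliary lemma by induction on \textrm{PT-Node}/\textrm{PT-Path}. Then $\pred'$, equating the edge's target foreign key with $\PK(R_{l_3})$, is precisely the relational image of the side condition ``$g' \cap g'' \neq \emptyset$ on the shared head node'' in the Cypher semantics of $\denot{\nodepattern, \edgepattern, \pattern'}_G$, while the tupling performed by the SQL join is the relational image of $\sfMerge$. Combining the two steps via observation (ii), the $\sffoldl$/$\sfmap$/$\sfFilter$ computation defining $\denot{\pattern}_G$ is transported element by element onto the bag produced by the nested join, yielding $\sdt(\denot{\pattern}_G) = \denot{Q}_R$.

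The hard part will be this second step: faithfully matching the Cypher-level ``share a node'' filter and the subsequent $\sfMerge$ against the SQL join predicate $\pred'$ and the tupling it performs, while keeping bag multiplicities exactly aligned and while tracking which column of $Q'$ holds the primary key of the connecting node $(X_3, l_3)$. This is where the auxiliary column-layout invariant is essential; once it is in place, the remainder of the inductive case is routine bookkeeping over the definitions of $\mathsf{Subgraphs}$, $\sfMerge$, and the SQL join semantics.
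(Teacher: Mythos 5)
Your proposal follows essentially the same route as the paper's proof: structural induction over the \textrm{PT-Node}/\textrm{PT-Path} rules, with the base case mapping $l$-labeled nodes bijectively onto rows of $\rename_X(R_l)$, and the inductive case decomposing $\denot{\nodepattern,\edgepattern,\pattern'}_G$ into $\mathsf{Subgraphs}$ of the three-element prefix, the shared-head-node filter, and $\sfMerge$, then matching these respectively against the three-way join, the join predicate on the head node's primary key, and the tupling performed by the SQL join (the paper likewise relies on $\sdt$ commuting with $\sfMerge$/$\ttMerge$, established in its proof of Lemma~\ref{lem:clause}). The one point where you go beyond the paper is in explicitly isolating the column-layout invariant tracking where $\PK(R_{l_3})$ lives in $Q'$; the paper's proof uses this fact implicitly when it joins against the singleton $[t']$ with the predicate $\mathsf{link}(\integrity,\smapping(l_3),\smapping(l_3))$, so stating it as an auxiliary lemma is a reasonable tightening rather than a different approach.
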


\begin{proof}
Prove by structural induction on $\pattern$. 

\begin{enumerate}
\item Base case: $\pattern = \nodepattern$.

Suppose that $G = (N, E, P, T)$ and $NP = (X, l)$. 
By Figure~\ref{fig:cypher-semantics}, there exists a list of graphs $\denot{\nodepattern}_{G} = \sfMap(\lambda n. (\set{n}, \emptyset, \set{(X, k) \mapsto P(n, k) ~|~ k \in \mathsf{keys}(T(n))}, \set{X \mapsto (l, \mathsf{keys}(T(n)))}, [n \in N ~|~ \lbl(T(n)) = l])$ and a list of tuples $\tuples = \rho_{X}(R(\smapping(l)))$ s.t. $\forall i. \sdt(g_i) = t_i$ where $g_i \in \denot{\nodepattern}_{G}$ is a graph derived from $G$ and $t_i \in \tuples$ is a tuple derived from the relational table $\smapping(l)$.
Therefore,  $\sdt(\denot{\nodepattern}_{G}) = \rho_{X}(R(\smapping(l))) = \denot{\rho_{X}(\smapping(l))}_{R}$.

\item Inductive case: $\pattern = \nodepattern_1, \edgepattern_2, \pattern'$.

Suppose that $\nodepattern = (X_1, l_1)$, $\edgepattern = (X_2, l_2, d_2)$, $\sfHead(\pattern') = \nodepattern_3 = (X_3, l_3)$.
By the inductive hypothesis, we know $\sdt(\denot{\nodepattern_1}_{G}) = \denot{\rho_{X_1}(\smapping(l_1))}_{R}$ and $\sdt(\denot{\nodepattern_3}_{G}) = \denot{\rho_{X_3}(\smapping(l_3))_{R}}$, and $\sdt(\denot{\pattern'}_{G}) = \denot{Q'}_{R}$ where $\sdt, \rschema \vdash \pattern' \transpattern \Xset, Q'$.

Let us first discuss $\mathsf{SubGraphs}(G, [\nodepattern_1, \edgepattern_2, \nodepattern_3])$. By its definition, we are supposed to get a list of subgraphs of $G$ matching the pattern $[\nodepattern_1, \edgepattern_2, \nodepattern_3]$, and $\sdt(\mathsf{SubGraphs}(G, [\nodepattern_1, \\ \edgepattern_2, \nodepattern_3])) = \denot{\rho_{X_1}(\smapping(l_1)) \ijoin_{\pred_1} \rho_{X_2}(\smapping(l_2)) \ijoin_{\pred_2} \rho_{X_3}(\smapping(l_3))}_{R}$ where $\pred_1 = \mathsf{link}(\integrity, \smapping(l_1), \\ \smapping(l_2))$ and $\pred_2 = \mathsf{link}(\integrity, \smapping(l_2), \smapping(l_3))$ by the definition of $\mathsf{link}$ in Figure~\ref{fig:trans-pattern}.

We assume that variables in $\nodepattern$ and $\edgepattern$ are unique. Therefore, for any graphs $g''$ derived from $\nodepattern_1, \edgepattern_2, \nodepattern_3$ and $g'$ derived from $\pattern'$, we are only able to merge them to a graph without isolated nodes and the overlapping node for $g''$ and $g'$ must be the node derived from $\nodepattern_3$; otherwise, we cannot return such a merged graph as an element of $\mathsf{SubGraphs}(G, [\nodepattern_1, \edgepattern_2, \nodepattern_3])$.
Furthermore, for any graph $g' \in \denot{\pattern'}_{G}$, we have 
\[
\begin{array}{l}
\sdt(\sfmap(\lambda g. \sfMerge(g, g'), \sfFilter(\lambda g'', g'' \cap g' \neq \emptyset, \mathsf{Subgraphs}(G, [\nodepattern_1, \edgepattern_2, \nodepattern_3])))) \\
= \sdt(\sfmap(\lambda g. \sfMerge(g, g'), \sfFilter(\lambda g'', g'' \cap g' \neq \emptyset, [g_1'', \ldots, g_n'']))) \\
= \sfmap(\lambda t. \ttMerge(t, t'), \sfFilter(\lambda t''. \denot{\pred_3}_{D, [t''] \times [t']}, [t_1'', \ldots, t_n''])) \\
= \denot{[t_1'', \ldots, t_n''] \ijoin_{\pred_3} [t']}_{R} \\
= \denot{\rho_{X_1}(\smapping(l_1)) \ijoin_{\pred_1} \rho_{X_2}(\smapping(l_2)) \ijoin_{\pred_2} \rho_{X_3}(\smapping(l_3)) \ijoin_{\pred_3} [t']}_{R} \\
= \denot{\rho_{X_1}(\smapping(l_1)) \ijoin_{\pred_1} \rho_{X_2}(\smapping(l_2)) \ijoin_{\pred_2} [t']}_{R} \\
\end{array}
\]
where $\sdt(g') = t'$ and $\pred_3 = \mathsf{link}(\integrity, \smapping(l_3), \smapping(l_3))$.
\[
\begin{array}{rcl}
\sdt(\denot{\nodepattern_1, \edgepattern_2, \pattern'}_{G})
&=& \sdt(\sfFoldl(\lambda gs. \lambda g'. gs \doubleplus \sfmap(\lambda g. \sfMerge(g, g'), \sfFilter(\lambda g''.  \\
&& \hspace{3em} g' \cap g'' \neq \emptyset, \mathsf{Subgraphs}(G, [\nodepattern_1, \edgepattern_2, \nodepattern_3]))), [], \denot{\pattern'}_{G})) \\
&=& \sfFoldl(\lambda xs. \lambda t'. xs \doubleplus \denot{\rho_{X_1}(\smapping(l_1)) \ijoin_{\pred_1} \rho_{X_2}(\smapping(l_2)) \ijoin_{\pred_2} [t']}_{R}, \\
&& \hspace{3em} [], \denot{Q'}_{R}) \\
&=& \denot{\rho_{X_1}(\smapping(l_1)) \ijoin_{\pred_1} \rho_{X_2}(\smapping(l_2)) \ijoin_{\pred_2} Q'}_{R} \\
\end{array}
\]

\end{enumerate}

\end{proof}

\begin{lemma}~\label{lem:expression}
Let $\gschema$ be a graph schema and $E$ be a Cypher expression. 
Let $\rschema$ be a relational schema and $\sdt$ be the standard \trans from $\gschema$ to $\rschema$. 
If $\sdt, \rschema \vdash E \transexpr E'$, it holds that
\[
\forall G, R. \ G \conform \gschema \land R \conform \rschema \land G \sim_{\sdt} R \land (\forall i. \sdt(g_i) = t_i) \ \Rightarrow  \ \denot{E}_{G, gs} = \denot{E'}_{R, \tuples}
\] where $g_i \in gs$ is a graph derived from $G$, and $t_i \in \tuples$ is a tuple derived from $R$.
\end{lemma}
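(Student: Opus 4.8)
The plan is to prove Lemma~\ref{lem:expression} by structural induction on the Cypher expression $E$, carried out simultaneously as a mutual induction with the companion statement for predicates (Lemma~\ref{lem:predicate}). The mutual phrasing is forced because the two syntactic categories are interleaved: an expression $\sfCast(\phi)$ embeds a predicate, while a predicate $E \logicop E$ (as well as $\sfIsNull(E)$ and $E \in \overline{v}$) embeds expressions. The induction is well-founded since every rule for $\transexpr$ and $\transpred$ except \textrm{P-Exists} has strictly smaller subterms in its premises; \textrm{P-Exists} instead recurses on a path pattern and is discharged via Lemma~\ref{lem:pattern} rather than an induction hypothesis. Throughout, the hypothesis $\forall i.\ \sdt(g_i) = t_i$ supplies an index-preserving correspondence between $gs = [g_1,\ldots,g_n]$ and $\tuples = [t_1,\ldots,t_n]$, which is what makes the list combinators on the two sides line up.

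First I would handle the base cases. Rule \textrm{E-Value} ($v \transexpr v$) is immediate since both denotations are the constant $v$, independently of $G,R,gs,\tuples$. Rule \textrm{E-Prop} ($k \transexpr k$ for a property key $k$) is the heart of the argument: the Cypher semantics forces $gs$ to be a singleton $[g]$, so by hypothesis $\tuples = [t]$ with $\sdt(g) = t$, and $\denot{k}_{G,[g]} = \mathsf{lookup}(g,k) = P(g,k)$. I would then unwind the construction of the standard transformer: by the \textrm{Node} and \textrm{Edge} rules of {\sc InferSDT} (Figure~\ref{fig:infer-sdt}), the unique rule of $\sdt$ mentioning $g$'s label $l$ copies each property-key slot $K_i$ of $g$ verbatim into the attribute named $K_i$ of table $R_l$ (edges additionally carry $\fk_s,\fk_t$, which are irrelevant to property-key lookup). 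Chasing this through the definitions of $\convert$ and the Herbrand model of $\denot{\sdt}$ shows $t.k = P(g,k)$, hence $\denot{k}_{G,[g]} = \denot{k}_{R,[t]}$. This ``on-the-nose'' value preservation by $\sdt$ is the main obstacle; everything downstream is a congruence argument.

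For the inductive cases I would proceed rule by rule. \textrm{E-Arith} ($E_1 \arithop E_2$): the induction hypothesis gives $\denot{E_i}_{G,gs} = \denot{E_i'}_{R,\tuples}$ for $i = 1,2$, and since both semantics interpret $\arithop$ by the same arithmetic operator on values (with the same $\sfNull$ propagation), the compound values agree. \textrm{E-Pred} ($\sfCast(\phi)$): the mutual hypothesis for predicates (Lemma~\ref{lem:predicate}) gives $\denot{\phi}_{G,gs} = \denot{\phi'}_{R,\tuples}$, and since $\sfCast$ and $\ttCast$ implement the same three-way map $\sfNull \mapsto \sfNull$, $\top \mapsto 1$, $\bot \mapsto 0$, the casts agree. \textrm{E-Agg} ($\sfAgg(E_0)$): I would unfold the aggregate clauses of Figure~\ref{fig:cypher-semantics} and use the elementwise correspondence between $gs$ and $\tuples$, together with the induction hypothesis applied at each singleton $[g_i]$/$[t_i]$ (so $\denot{E_0}_{G,[g_i]} = \denot{E_0'}_{R,[t_i]}$), to check that (i) the all-$\sfNull$ guard on the left matches its SQL counterpart and (ii) the underlying $\sffoldl$/$\sfmap$ computations for $\sfCount$, $\sfSum$, $\sfMin$, $\sfMax$ (and $\sfAvg$ as $\sfSum/\sfCount$) produce identical results. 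Once the \textrm{E-Prop} invariant is in hand, each of these reduces to the observation that a total function applied to equal arguments yields equal outputs, so no further subtlety arises.
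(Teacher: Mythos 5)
Your proposal is correct and follows essentially the same route as the paper: structural induction on $E$, with the constant and property-key cases as the base, congruence arguments for $\arithop$ and the aggregates via the elementwise correspondence $\sdt(g_i)=t_i$, and an appeal to the predicate lemma for $\sfCast$. You are somewhat more careful than the paper in two places — making the mutual induction with Lemma~\ref{lem:predicate} explicit and well-founded, and actually unwinding the \textrm{Node}/\textrm{Edge} rules of the standard transformer to justify $t.k = P(g,k)$ in the \textrm{E-Prop} case, which the paper simply asserts as $\mathsf{lookup}(\sfHead(gs),k)=\mathsf{lookup}(\sfHead(\tuples),k)$ — but these are refinements of the same argument rather than a different one.
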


\begin{proof}
Prove by structural induction on $E$. 

\begin{enumerate}
\item Base case: $E = k$.

Lemma~\ref{lem:expression} in this case is proved because $\denot{k}_{G, gs} = \mathsf{lookup}(\sfHead(gs), k) = \mathsf{lookup}\\(\sfHead(\tuples), k) = \denot{k}_{R, \tuples}$ by Figure~\ref{fig:cypher-semantics}.

\item Base case: $E = v$.

Lemma~\ref{lem:expression} in this case is proved because $\denot{v}_{G, gs} = v = \denot{v}_{R, \tuples}$ by Figure~\ref{fig:cypher-semantics}.

\item Inductive case: $E = \sfCast(\pred)$.

By Lemma~\ref{lem:predicate}, we have $\denot{\pred}_{G, gs} = \denot{\pred'}_{R, \tuples}$ where $\sdt, \rschema \vdash \pred \transpred \pred'$.
By Figure~\ref{fig:cypher-semantics}, $\denot{\sfCast(\pred)}_{G, gs} = \sfite(\denot{\pred}_{G, gs} = \sfNull, \sfNull, \sfite(\denot{\pred}_{G, gs} = \top, 1, 0)) = \sfite(\denot{\pred'}_{R, \tuples} = \sfNull, \sfNull, \\ \sfite(\denot{\pred'}_{R, \tuples} = \top, 1, 0)) = \denot{\sfCast(\pred')}_{R, \tuples}$.

\item Inductive case: $E = \sfCount(E_1)$.

By the inductive hypothesis, we have $\denot{E_1}_{G, gs} = \denot{E_1'}_{R, \tuples}$ where $\sdt, \rschema \vdash E_1 \transexpr E_1'$.
By Figure~\ref{fig:cypher-semantics}, 
\[
\begin{array}{rcl}
\denot{\sfCount(E_1)}_{G, gs} &=& \sfite(\land_{g \in gs} \denot{E_1}_{G, [g]} = \sfNull, \sfNull, \\ 
&& \hspace{5em} \sfFoldl(+, 0, \sfMap(\lambda g. \sfite(\denot{E_1}_{G, [g]} = \sfNull, 0, 1), gs))) \\
&=& \sfite(\land_{t \in \tuples} \denot{E_1'}_{R, [t]} = \sfNull, \sfNull, \\ 
&& \hspace{5em} \sfFoldl(+, 0, \sfMap(\lambda t. \sfite(\denot{E_1'}_{R, [t]} = \sfNull, 0, 1), \tuples))) \\
&=& \denot{\ttCount(E_1')}_{R, \tuples} \\
\end{array}
\]

\item Inductive case: $E = \sfSum(E_1)$.

By the inductive hypothesis, we have $\denot{E_1}_{G, gs} = \denot{E_1'}_{R, \tuples}$ where $\sdt, \rschema \vdash E_1 \transexpr E_1'$.
By Figure~\ref{fig:cypher-semantics}, 
\[
\begin{array}{rcl}
\denot{\sfSum(E_1)}_{G, gs} &=& \sfite(\land_{g \in gs} \denot{E_1}_{G, [g]} = \sfNull, \sfNull, \\ 
&& \hspace{5em} \sfFoldl(+, 0, \sfMap(\lambda g. \sfite(\denot{E_1}_{G, [g]} = \sfNull, 0, \denot{E_1}_{G, [g]}), gs))) \\
&=& \sfite(\land_{t \in \tuples} \denot{E_1'}_{R, [t]} = \sfNull, \sfNull, \\ 
&& \hspace{5em} \sfFoldl(+, 0, \sfMap(\lambda t. \sfite(\denot{E_1'}_{R, [t]} = \sfNull, 0, \denot{E_1'}_{R, [t]}), \tuples))) \\
&=& \denot{\ttSum(E_1')}_{R, \tuples} \\
\end{array}
\]

\item Inductive case: $E = \sfAvg(E_1)$.

By the inductive hypothesis, we have $\denot{\sfSum(E_1)}_{G, gs} = \denot{\ttSum(E_1')}_{R, \tuples}$ and $\denot{\sfCount(E_1)}_{G, gs} = \denot{\ttCount(E_1')}_{R, \tuples}$ where $\sdt, \rschema \vdash E \transexpr E'$.
By Figure~\ref{fig:cypher-semantics}, 
\[
\begin{array}{rcl}
\denot{\sfSum(E_1)}_{G, gs} &=& \denot{\sfSum(E_1)}_{G, gs} / \denot{\sfCount(E_1)}_{G, gs} \\
&=& \denot{\ttSum(E_1')}_{R, \tuples} / \denot{\ttCount(E_1')}_{R, \tuples} \\
&=& \denot{\ttSum(E_1')}_{R, \tuples} \\
\end{array}
\]

\item Inductive case: $E = \sfMin(E_1)$.

By the inductive hypothesis, we have $\denot{E_1}_{G, gs} = \denot{E_1'}_{R, \tuples}$ where $\sdt, \rschema \vdash E_1 \transexpr E_1'$.
By Figure~\ref{fig:cypher-semantics}, 
\[
\begin{array}{rcl}
\denot{\sfMin(E_1)}_{G, gs} &=& \sfite(\land_{g \in gs} \denot{E_1}_{G, [g]} = \sfNull, \sfNull, \\ 
&& \hspace{3em} \sfFoldl(\mathsf{min}, + \infty, \sfMap(\lambda g. \sfite(\denot{E_1}_{G, [g]} = \sfNull, + \infty, \denot{E_1}_{G, [g]}), gs))) \\
&=& \sfite(\land_{t \in \tuples} \denot{E_1'}_{R, [t]} = \sfNull, \sfNull, \\ 
&& \hspace{3em} \sfFoldl(\mathsf{min}, + \infty, \sfMap(\lambda t. \sfite(\denot{E_1'}_{R, [t]} = \sfNull, + \infty, \denot{E_1'}_{R, [t]}), \tuples))) \\
&=& \denot{\ttMin(E_1')}_{R, \tuples} \\
\end{array}
\]

\item Inductive case: $E = \sfMax(E_1)$.

By the inductive hypothesis, we have $\denot{E_1}_{G, gs} = \denot{E_1'}_{R, \tuples}$ where $\sdt, \rschema \vdash E_1 \transexpr E_1'$.
By Figure~\ref{fig:cypher-semantics}, 
\[
\begin{array}{rcl}
\denot{\sfMax(E_1)}_{G, gs} &=& \sfite(\land_{g \in gs} \denot{E_1}_{G, [g]} = \sfNull, \sfNull, \\ 
&& \hspace{3em} \sfFoldl(\mathsf{max}, - \infty, \sfMap(\lambda g. \sfite(\denot{E_1}_{G, [g]} = \sfNull, - \infty, \denot{E_1}_{G, [g]}), gs))) \\
&=& \sfite(\land_{t \in \tuples} \denot{E_1'}_{R, [t]} = \sfNull, \sfNull, \\ 
&& \hspace{3em} \sfFoldl(\mathsf{max}, - \infty, \sfMap(\lambda t. \sfite(\denot{E_1'}_{R, [t]} = \sfNull, - \infty, \denot{E_1'}_{R, [t]}), \tuples))) \\
&=& \denot{\ttMax(E_1')}_{R, \tuples} \\
\end{array}
\]

\item Inductive case: $E = E_1 \arithop E_2$.

By the inductive hypothesis, we have $\denot{E_1}_{G, gs} = \denot{E_1'}_{R, \tuples}$ and $\denot{E_2}_{G, gs} = \denot{E_2'}_{R, \tuples}$ where $\sdt, \rschema \vdash E_1 \transexpr E_1'$ and $\sdt, \rschema \vdash E_2 \transexpr E_2'$.
By Figure~\ref{fig:cypher-semantics}, $\denot{E_1 \arithop E_2}_{G, gs} = \denot{E_1}_{G, gs} \arithop \denot{E_2}_{G, gs} = \denot{E_1'}_{R, \tuples} \arithop \denot{E_2'}_{R, \tuples} = \denot{E_1' \arithop E_2'}_{R, \tuples}$.

\end{enumerate}

\end{proof}

\begin{lemma}~\label{lem:predicate}
Let $\gschema$ be a graph schema and $\pred$ be a Cypher predicate. 
Let $\rschema$ be a relational schema and $\sdt$ be the standard \trans from $\gschema$ to $\rschema$. 
If $\sdt, \rschema \vdash \pred \transpred \pred'$, it holds that
\[
\forall G, R. \ G \conform \gschema \land R \conform \rschema \land G \sim_{\sdt} R \land (\forall i. \sdt(g_i) = t_i) \ \Rightarrow  \ \denot{\pred}_{G, gs} = \denot{\pred'}_{R, \tuples}
\]
where $g_i \in gs$ is a graph derived from $G$, and $t_i \in \tuples$ is a tuple derived from $R$.
\end{lemma}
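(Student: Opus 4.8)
The plan is to prove Lemma~\ref{lem:predicate} by structural induction on the Cypher predicate $\pred$, mirroring the proofs of Lemmas~\ref{lem:clause}, \ref{lem:pattern}, and \ref{lem:expression}. Since rule \textrm{E-Pred} recurses from expressions into predicates while rules \textrm{P-Logic}, \textrm{P-IsNull}, and \textrm{P-In} recurse from predicates into expressions, this lemma is in fact proved by a simultaneous induction with Lemma~\ref{lem:expression}, well-founded on the combined syntactic size of Cypher expressions and predicates. For the base cases $\pred = \top$ and $\pred = \bot$, rules \textrm{P-True}/\textrm{P-False} give $\pred' = \pred$, and the claim is immediate from Figure~\ref{fig:cypher-semantics}, since $\denot{\top}_{G,gs} = \top = \denot{\top}_{R,\tuples}$ and symmetrically for $\bot$.

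For $\pred = E_1 \logicop E_2$ (rule \textrm{P-Logic}), $\pred = \sfIsNull(E)$ (rule \textrm{P-IsNull}), and $\pred = E \in \overline{v}$ (rule \textrm{P-In}), I would apply Lemma~\ref{lem:expression} to each translated subexpression to obtain $\denot{E_i}_{G,gs} = \denot{E_i'}_{R,\tuples}$, and then unfold the semantics: for instance $\denot{E \in \overline{v}}_{G,gs} = \bigvee_{v \in \overline{v}} (\denot{E}_{G,gs} = v) = \bigvee_{v \in \overline{v}} (\denot{E'}_{R,\tuples} = v) = \denot{E' \in \overline{v}}_{R,\tuples}$. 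For the connectives $\pred = \neg \pred_1$ and $\pred = \pred_1 \circ \pred_2$ with $\circ \in \{\land, \lor\}$ (rules \textrm{P-Not}, \textrm{P-AndOr}), I would apply the inductive hypothesis to the subpredicates and invoke the fact, noted in the Cypher semantics, that both languages use the same three-valued interpretation (with $\bot \land \sfNull = \bot$, $\top \lor \sfNull = \top$, and $\sfNull$ otherwise); hence $\denot{\neg \pred_1}_{G,gs} = \neg\denot{\pred_1}_{G,gs} = \neg\denot{\pred_1'}_{R,\tuples} = \denot{\neg\pred_1'}_{R,\tuples}$, and similarly for $\land$ and $\lor$.

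The main obstacle is the case $\pred = \sfExists(\pattern)$, governed by rule \textrm{P-Exists}, which translates $\sfExists(\pattern)$ into the SQL membership predicate $\overline{a} \in \proj_{\overline{a}}(Q)$, where $\sdt, \rschema \vdash \pattern \transpattern \Xset, Q$ and $\overline{a} = [\PK(R_{l_1}), \PK(R_{l_2})]$ for the labels $l_1, l_2$ of $\sfHead(\pattern)$ and $\sfLast(\pattern)$. I would first invoke Lemma~\ref{lem:pattern} to get $\sdt(\denot{\pattern}_G) = \denot{Q}_R$. Unfolding the Cypher side, $\denot{\sfExists(\pattern)}_{G,gs} = \bigvee_{g \in \denot{\pattern}_G} \bigwedge_{K \in \overline{K}} (\denot{K}_{G,gs} = \denot{K}_{G,[g]})$ with $\overline{K} = [\PK(\sfHead(\pattern)), \PK(\sfLast(\pattern))]$. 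The crux is to show that, because $\sdt$ is a bijection between the matched subgraphs $g \in \denot{\pattern}_G$ and the tuples $t \in \denot{Q}_R$ that preserves the default-property-key (i.e.\ primary-key) values of the head and last nodes, the projection $\proj_{\overline{a}}(Q)$ evaluated on $R$ contains exactly the pairs of head/last primary-key values realized by matches of $\pattern$; thus the Cypher disjunction-of-conjunctions coincides with the SQL $\overline{a} \in \proj_{\overline{a}}(Q)$ test. The premise $\forall i.\ \sdt(g_i) = t_i$ then guarantees that the compared values $\denot{K}_{G,gs}$ on the outer side equal the corresponding attribute values $\denot{a}_{R,\tuples}$, so the two evaluations agree (including agreement on $\sfNull$, via the three-valued semantics of $\in$). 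I expect the delicate parts to be: (i) tracking, through the standard transformer and the \textrm{PT-Node}/\textrm{PT-Path} rules, that the first and last columns of $Q$ really are the primary keys of $R_{l_1}$ and $R_{l_2}$; and (ii) confirming via the definition of $\convert$ that $\proj_{\overline{a}}(Q)$ has no spurious or missing rows relative to $\denot{\pattern}_G$. The remaining cases reduce to routine applications of Lemmas~\ref{lem:expression} and \ref{lem:pattern}.
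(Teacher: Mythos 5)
Your proposal is correct and follows essentially the same route as the paper's proof: structural induction on $\pred$, with the base cases immediate, the logical and comparison cases discharged by Lemma~\ref{lem:expression} and the inductive hypothesis, and the $\sfExists$ case handled by invoking Lemma~\ref{lem:pattern} to identify $\sdt(\denot{\pattern}_G)$ with $\denot{Q}_R$ and then matching the Cypher disjunction over subgraphs against the SQL membership test on the projected primary keys. Your explicit observation that Lemmas~\ref{lem:expression} and~\ref{lem:predicate} must be proved by a simultaneous induction (because of the mutual recursion through \textrm{E-Pred} and \textrm{P-Logic}/\textrm{P-IsNull}/\textrm{P-In}) is a point the paper leaves implicit, and is a welcome clarification rather than a departure.
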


\begin{proof}
Prove by structural induction on $\pred$. 

\begin{enumerate}
\item Base case: $\pred = \top$.

Lemma~\ref{lem:predicate} in this case is proved because $\denot{\top}_{G, gs} = \top = \denot{\top}_{R, \tuples}$ by Figure~\ref{fig:cypher-semantics}.

\item Base case: $\pred = \bot$.

Lemma~\ref{lem:predicate} in this case is proved because $\denot{\bot}_{G, gs} = \bot = \denot{\bot}_{R, \tuples}$ by Figure~\ref{fig:cypher-semantics}.

\item Inductive case: $\pred = E_1 \logicop E_2$.

By Lemma~\ref{lem:expression}, we have $\denot{E_1}_{G, gs} = \denot{E_1'}_{R, \tuples}$ and $\denot{E_2}_{G, gs} = \denot{E_2'}_{R, \tuples}$ where $\sdt, \rschema \vdash E_1 \transexpr E_1'$ and $\sdt, \rschema \vdash E_2 \transexpr E_2'$.
By Figure~\ref{fig:cypher-semantics}, $\denot{E_1 \logicop E_2}_{G, gs} = \denot{E_1}_{G, gs} \logicop \denot{E_2}_{G, gs} = \denot{E_1'}_{R, \tuples} \logicop \denot{E_2'}_{R, \tuples} = \denot{E_1' \logicop E_2'}_{R, \tuples}$.

\item Inductive case: $\pred = \sfIsNull(E)$.

By Lemma~\ref{lem:expression}, we have $\denot{E}_{G, gs} = \denot{E}_{R, \tuples}$ where $\sdt, \rschema \vdash E \transexpr E'$.
By Figure~\ref{fig:cypher-semantics}, $\denot{\sfIsNull(E)}_{G, gs} = \sfite(\denot{E}_{G, gs} = \sfNull, \top, \bot) = \sfite(\denot{E'}_{R, \tuples} = \sfNull, \top, \bot) = \denot{\ttIsNull(E')}_{R, \tuples}$.

\item Inductive case: $\pred = E \in \overline{v}$.

By Lemma~\ref{lem:expression}, we have $\denot{E}_{G, gs} = \denot{E'}_{R, \tuples}$ where $\sdt, \rschema \vdash E \transexpr E'$.
By Figure~\ref{fig:cypher-semantics}, $\denot{E \in \overline{v}}_{G, gs} = \lor_{v \in \overline{v}} \denot{E}_{G, gs} = \lor_{v \in \overline{v}} \denot{E'}_{R, \tuples} = v = \denot{E' \in \overline{v}}_{R, \tuples}$.

\item Inductive case: $\pred = \sfExists(\pattern)$.

By Lemma~\ref{lem:pattern},  we have $\sdt(\denot{\pattern}_{G}) = \denot{Q}_{R}$ where $\sdt, \rschema \vdash \pattern \transpattern \Xset, Q$.
Also, by Lemaa~\ref{lem:expression}, we have $\denot{K}_{G, gs} = \denot{K}_{R, \tuples}$ where $\sdt, \rschema \vdash K \transexpr K$ and $K$ is property key.
Suppose that $\sfHead(\pattern) = (X_1, l_1)$ and $\sfLast(\pattern) = (X_2, l_2)$. 
By the definition of $\mathsf{PKs}$ and $\integrity_{pk}$, $\denot{\overline{K}}_{G, gs} = \denot{[\mathsf{PK}(\sfHead(\pattern)), \mathsf{PK}(\sfLast(\pattern))]}_{G, gs} = \denot{[\integrity_{pk}(\smapping(l_1)), \integrity_{pk}(\smapping(l_2))]}_{R, \tuples} = \denot{\overline{K}}_{R, \tuples}$ iff $\forall i.\sdt(g_i) = t_i$ for any $g_i \in gs$ and $t_i \in \tuples$.
Furthermore, $\sfmap(\lambda g.\denot{\overline{K}}_{G, [g]}, \denot{\pattern}_{G}) = [\denot{\overline{K}}_{G, [g_1]}, \ldots, \denot{\overline{K}}_{G, [g_n]}] = [\denot{\overline{K}}_{R, [t_1]}, \ldots, \denot{\overline{K}}_{R, [t_n]}] = \sfmap(\lambda t. \denot{\overline{K}}_{R, [t]}, \denot{Q}_{R}) = \\ \denot{\proj_{\overline{K}}(Q)}_{R}$ where $n = |\denot{\pattern}_{G}| = |\denot{Q}_{R}|$.

By Figure~\ref{fig:cypher-semantics}, 
\[
\begin{array}{rcl}
\denot{\sfExists(\pattern)}_{G, gs} 
&=& \lor_{g \in \denot{\pattern}_{G}} \land_{K \in \overline{K}}\denot{K}_{G, gs} = \denot{K}_{G, [g]} \\
&=& \lor_{g \in \denot{\pattern}_{G}} \land_{K \in \overline{K}}\denot{K}_{R, \tuples} = \denot{K}_{G, [g]} \\
&=& \lor_{g \in \denot{\pattern}_{G}} \denot{\overline{K} \in \denot{\overline{K}}_{G, [g]}}_{R, \tuples} \\
&=& \lor_{t \in \sfmap(\lambda g.\denot{\overline{K}}_{G, [g]}, \denot{\pattern}_{G})} \denot{\overline{K} \in t}_{R, \tuples} \\
&=& \sfFoldl(\lambda ys. \lambda y. ys \lor \denot{\overline{K} \in y}_{R, \tuples}, \bot, \sfmap(\lambda g. \denot{\overline{K}}_{G, [g]}, \denot{\pattern}_{G})) \\
&=& \sfFoldl(\lambda ys. \lambda y. ys \lor \denot{\overline{K} \in y}_{R, \tuples}, \bot, \denot{\proj_{\overline{K}}(Q)}_{R}) \\
&=& \denot{\overline{K} \in \proj_{\overline{K}}(Q)}_{R, \tuples} \\
\end{array}
\]

\item Inductive case: $\pred = \pred_1 \land \pred_1$.

By the inductive hypothesis, we have $\denot{\pred_1}_{G, gs} = \denot{\pred_1'}_{R, \tuples}$ and $\denot{\pred_2}_{G, gs} = \denot{\pred_2'}_{R, \tuples}$ where $\sdt, \rschema \vdash \pred_1 \transpred \pred_1'$ and $\sdt, \rschema \vdash \pred_2 \transpred \pred_2'$.
By Figure~\ref{fig:cypher-semantics}, $\denot{\pred_1 \land \pred_1}_{G, gs} = \denot{\pred_1}_{G, gs} \land \denot{\pred_2}_{G, gs} = \denot{\pred_1'}_{R, \tuples} \land \denot{\pred_2}_{R, \tuples} = \denot{\pred_1' \land \pred_2'}_{R, \tuples}$.

\item Inductive case: $\pred = \pred_1 \lor \pred_1$.

By the inductive hypothesis, we have $\denot{\pred_1}_{G, gs} = \denot{\pred_1'}_{R, \tuples}$ and $\denot{\pred_2}_{G, gs} = \denot{\pred_2'}_{R, \tuples}$ where $\sdt, \rschema \vdash \pred_1 \transpred \pred_1'$ and $\sdt, \rschema \vdash \pred_2 \transpred \pred_2'$.
By Figure~\ref{fig:cypher-semantics}, $\denot{\pred_1 \lor \pred_1}_{G, gs} = \denot{\pred_1}_{G, gs} \lor \denot{\pred_2}_{G, gs} = \denot{\pred_1'}_{R, \tuples} \lor \denot{\pred_2}_{R, \tuples} = \denot{\pred_1' \lor \pred_2'}_{R, \tuples}$.

\item Inductive case: $\pred = \neg \pred_1$.

By the inductive hypothesis, we have $\denot{\pred_1}_{G, gs} = \denot{\pred_1'}_{R, \tuples}$ where $\sdt, \rschema \vdash \pred_1 \transpred \pred_1'$.
By Figure~\ref{fig:cypher-semantics}, $\denot{\neg \pred_1}_{G, gs} = \neg \denot{\pred_1}_{G, gs} = \neg \denot{\pred_1'}_{R, \tuples} = \denot{\neg \pred_1'}_{R, \tuples}$.

\end{enumerate}

\end{proof}

\begin{theorem}[Completeness of translation (\ref{thm:completeness})]
Let $\gschema$ be a graph schema and $\rschema$ be the induced relational schema of $\gschema$. Given any Cypher query $Q$ over $\gschema$ accepted by the grammar shown in Figure~\ref{fig:cypher-syntax}, there exists a SQL query $Q'$ over $\rschema$ such that $\sdt, \rschema \vdash Q \transquery Q'$.
\end{theorem}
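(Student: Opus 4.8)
The statement is purely syntactic — it asserts the \emph{existence} of a derivation of $\sdt, \rschema \vdash Q \transquery Q'$ — so, unlike the soundness theorem, the argument never touches the denotational semantics of Figure~\ref{fig:cypher-semantics}. The plan is a mutual structural induction over all the syntactic categories of featherweight Cypher in Figure~\ref{fig:cypher-syntax}: queries $Q$ (and return queries $R$), clauses $C$, path patterns $\pattern$ (together with node and edge patterns), expressions $E$, and predicates $\pred$. For each production of the grammar I would exhibit a translation rule from Figures~\ref{fig:trans-query}--\ref{fig:trans-pattern} (and Figures~\ref{fig:trans-expr}--\ref{fig:trans-pred}) whose conclusion matches that production, discharge its premises by the induction hypothesis applied to strictly smaller subterms, and verify that its side conditions hold. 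Since the rule sets are syntax-directed and exhaustive over the grammar — $\sfReturn$ is handled by \textrm{Q-Ret}/\textrm{Q-Agg}, $\sfOrderBy$ by \textrm{Q-OrderBy}, $\sfUnion$/$\sfUnionAll$ by \textrm{Q-Union}/\textrm{Q-UnionAll}; $\sfMatch$ by \textrm{C-Match1}/\textrm{C-Match2}, $\sfOptMatch$ by \textrm{C-OptMatch}, $\sfWith$ by \textrm{C-With}; single node patterns by \textrm{PT-Node} and $\nodepattern,\edgepattern,\pattern$ by \textrm{PT-Path}; and analogously for expressions and predicates — the only real work is in the side conditions.

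The side conditions all hinge on the fact that $\rschema = (S,\integrity)$ is the \emph{induced} schema of $\gschema$ and $\sdt$ is the \emph{standard} transformer produced by \textsc{InferSDT}. Two consequences are needed. First, by the \textrm{Node} and \textrm{Edge} rules of Figure~\ref{fig:infer-sdt}, $\sdt$ contains exactly one clause $l(K_1,\ldots) \to R_l(K_1,\ldots)$ for every node or edge type of $\gschema$; since $Q$ is a query over $\gschema$, every node/edge label occurring in $Q$ is the label of such a type, so the membership premises ``$l(\ldots) \to R_l(\ldots) \in \sdt$'' in \textrm{PT-Node}, \textrm{PT-Path}, \textrm{C-Match2}, \textrm{C-OptMatch}, and \textrm{P-Exists} are always satisfiable (and the primary keys invoked by \textrm{P-Exists} exist because every node type has a default property key). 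Second, the implication side condition $\integrity(\rschema) \Rightarrow \pred \land \pred'$ of \textrm{PT-Path} holds because, for the edge type $l_2$ connecting $X_1$ and the head node $X_3$ of the tail sub-pattern, the \textrm{Edge} rule put $\FK(R_{l_2}.\fk_s) = \PK(R_{l_1}) \land \FK(R_{l_2}.\fk_t) = \PK(R_{l_3})$ into $\integrity$ — using here that the pattern is well-typed against $\gschema$, so $l_1$ and $l_3$ really are the declared source and target types of $l_2$. Finally, $\sfHasAgg$ is a decidable syntactic predicate, so for a return query exactly one of \textrm{Q-Ret} and \textrm{Q-Agg} is applicable.

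The base cases (\textrm{PT-Node}, \textrm{E-Prop}, \textrm{E-Value}, \textrm{P-True}, \textrm{P-False}) are immediate, and the remaining inductive cases merely recurse. The step I expect to be the main obstacle is the \textrm{C-With} case: its premise $\sdt, \rschema \vdash C \transclause \Xset, \proj_L(Q)$ demands that the clause $C$ translate to a query in the \emph{shape} $\proj_L(Q)$, whereas the induction hypothesis only yields \emph{some} SQL query for $C$. To bridge this I would strengthen the clause part of the induction hypothesis to assert that every clause translates to a query in this canonical projection form; this holds once one observes that the leaf rule \textrm{PT-Node} can be read as producing $\proj_L(\rename_X(R_l))$ with $L$ the identity attribute list, and that \textrm{C-Match1}, \textrm{C-Match2}, and \textrm{C-OptMatch} preserve (or can be wrapped to preserve) this shape — e.g. a filter or a join $Q_1 \joinop Q_2$ is re-expressed as $\proj_L(Q_1 \joinop Q_2)$ with $L$ listing all attributes. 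A secondary bookkeeping point is to confirm that the pattern-translation rules of the full rule set (including any relegated to Appendix~\ref{sec:transpilation-pred-expr}) handle all three edge directions $d \in \{\rightarrow,\leftarrow,\leftrightarrow\}$ of the grammar; the $\rightarrow$ case is the one displayed, and the others are symmetric, swapping the roles of $\fk_s$ and $\fk_t$ in the generated join predicates.
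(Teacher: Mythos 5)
Your proposal takes essentially the same route as the paper's proof: a mutual structural induction over queries, clauses, path patterns, expressions, and predicates, matching each grammar production to the corresponding translation rule and discharging the premises by the induction hypothesis (the paper factors this into separate completeness lemmas for clauses, patterns, expressions, and predicates). The one substantive point you raise --- that the \textrm{C-With} premise demands the clause translate to a query of the shape $\proj_L(Q)$, which a naive induction hypothesis does not supply --- is silently assumed in the paper's own proof of the clause lemma, so your proposed strengthening of the inductive invariant is, if anything, a more careful treatment than the published argument.
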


\begin{proof}
Prove by structural induction on $Q$. 

\begin{enumerate}
\item Base case: $Q = \sfReturn(C, \overline{E}, \overline{k})$.

By Lemma~\ref{lem:translation-completeness-clause}, there exist a variable set $\Xset'$ and a SQL query $Q'$ such that $\sdt, \rschema \vdash C \transpred \Xset', Q'$.
Also, by Lemma~\ref{lem:translation-completeness-expr}, for any expression $E \in \overline{E}$ there exists a SQL expression $E'$ such that $\sdt, \rschema \vdash E \transpred E'$.
Let us discuss the predicate $\neg \sfHasAgg(\overline{E})$ in two cases.

\begin{enumerate}
    \item If $\neg \sfHasAgg(\overline{E}) = \top$, then for any expression $E \in \overline{E}$, $\textsf{IsAgg}(E) = \bot$. 
    According to the \textrm{Q-Ret} rule in Figure~\ref{fig:trans-query}, there exists a SQL query $Q'' = \proj_{\rename_{\overline{k}}(\overline{E'})}(Q')$ such that $\sdt, \rschema \vdash \sfReturn(C, \overline{E}, \overline{k}) \transquery Q''$.
    
    \item If $\neg \sfHasAgg(\overline{E}) = \bot$, then there exist some expression $E \in \overline{E}$ s.t. $\textsf{IsAgg}(E) = \top$. 
    Therefore, this $\sfReturn$ statement in Cypher should be translated in $\ttGroupBy$ in SQL. Let $\overline{A} = \sfFilter(\lambda x.\neg \textsf{IsAgg}(x), \overline{E'})$ be the list of SQL expressions containing no aggregation function.
    According to the \textrm{Q-Agg} rule in Figure~\ref{fig:trans-query}, there exists a SQL query $Q'' = \ttGroupBy(Q, \overline{A}, \\ \rename_{\overline{k}}(\overline{E'}), \top)$ such that $\sdt, \rschema \vdash \sfReturn(C, \overline{E}, \overline{k}) \transquery Q''$.
\end{enumerate}

Thus, Theorem~\ref{thm:completeness} in this case is proved.

\item Inductive case: $Q = \sfOrderBy(R_1, k, b)$.

Specifically, $R_1$ denotes the $\sfReturn$ statement of Cypher.
By the inductive hypothesis, we have $\sdt, \rschema \vdash R_1 \transquery Q_1'$. 
Also, by Lemma~\ref{lem:translation-completeness-expr}, there exists a SQL expression $E' = k$ such that $\sdt, \rschema \vdash k \transexpr E'$.
By Lemma~\ref{lem:translation-completeness-pred}, there exists a SQL predicate $b' = b$ such that $\sdt, \rschema \vdash b \transexpr b'$. 
According to the \textrm{Q-OrderBy} rule in Figure~\ref{fig:trans-query}, there exists a SQL query $Q_1'' = \ttOrderBy(Q_1', E', b')$ such that $\sdt, \rschema \vdash \ttOrderBy(R_1, k, b) \transexpr Q_1''$.

\item Inductive case: $Q = \sfUnion(Q_1, Q_2)$.

By the inductive hypothesis, there exist two queries $Q_1'$ and $Q_2'$ such that $\sdt, \rschema \vdash Q_1 \transquery Q_1'$ and $\sdt, \rschema \vdash Q_2 \transquery Q_2'$. 
According to the \textrm{Q-Union} rule in Figure~\ref{fig:trans-query}, there exists a SQL query $Q' = Q_1' \cup Q_2'$ such that $\sdt, \rschema \vdash \sfUnion(Q_1, Q_2) \transexpr Q'$.

\item Inductive case: $Q = \sfUnionAll(Q_1, Q_2)$.

By the inductive hypothesis, there exist two queries $Q_1'$ and $Q_2'$ such that $\sdt, \rschema \vdash Q_1 \transquery Q_1'$ and $\sdt, \rschema \vdash Q_2 \transquery Q_2'$. 
According to the \textrm{Q-Union} rule in Figure~\ref{fig:trans-query}, there exists a SQL query $Q' = Q_1' \uplus Q_2'$ such that $\sdt, \rschema \vdash \sfUnionAll(Q_1, Q_2) \transexpr Q'$.

\end{enumerate}
\end{proof}

\begin{lemma}~\label{lem:translation-completeness-clause}
Let $\gschema$ be a graph schema, $\rschema = (S, \integrity)$ be the induced relational schema of $\gschema$, and $\smapping$ be the corresponding schema mapping. Given any Cypher clause $C$ accepted by the grammar shown in Figure~\ref{fig:cypher-syntax}, there exists a SQL query $Q$ over $\rschema$ such that $\sdt, \rschema \vdash C \transclause \Xset, Q$.
\end{lemma}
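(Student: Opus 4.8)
The plan is to proceed by structural induction on the Cypher clause $C$, following the four productions for clauses in the grammar of Figure~\ref{fig:cypher-syntax}. Each production is matched by exactly one rule in Figure~\ref{fig:trans-clause}, so it suffices to show that for every form of $C$ all premises of the corresponding rule can be satisfied. I will rely on the companion totality statements for the auxiliary judgments: Lemma~\ref{lem:translation-completeness-pred} (every Cypher predicate $\pred$ admits a translation $\sdt,\rschema\vdash\pred\transpred\pred'$), Lemma~\ref{lem:translation-completeness-expr} (every expression admits a translation), and the analogous totality lemma for path patterns (every path pattern $\pattern$ admits $\sdt,\rschema\vdash\pattern\transpattern\Xset,Q$). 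A key structural observation that discharges all of the "lookup" side conditions is that, because $\rschema$ is the \emph{induced} relational schema of $\gschema$ and $\sdt$ is the standard transformer produced by \textsc{InferSDT} (Figure~\ref{fig:infer-sdt}), $\sdt$ contains exactly one rule $l(\ldots)\to R_l(\ldots)$ for every node or edge label $l$ of $\gschema$; hence whenever a rule requires $l(\ldots)\to R_l(\ldots)\in\sdt$ or refers to $\PK(R_l)$, such an entry (and such a primary key) exists. Likewise, the incidence side conditions such as $\integrity(\rschema)\Rightarrow\pred\land\pred'$ in rule PT-Path of Figure~\ref{fig:trans-pattern} hold because the foreign keys $\fk_s,\fk_t$ were introduced by the Edge rule precisely to encode the source and target node types of each edge type.

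\textbf{Base case.} $C=\sfMatch(\pattern,\pred)$: the pattern totality lemma yields $\Xset$ and $Q$ with $\sdt,\rschema\vdash\pattern\transpattern\Xset,Q$, and Lemma~\ref{lem:translation-completeness-pred} yields $\pred'$ with $\sdt,\rschema\vdash\pred\transpred\pred'$. Rule C-Match1 then gives $\sdt,\rschema\vdash C\transclause\Xset,\filter_{\pred'}(Q)$.

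\textbf{Inductive cases $\sfMatch$ and $\sfOptMatch$.} For $C=\sfMatch(C_1,\pattern,\pred)$ (respectively $C=\sfOptMatch(C_1,\pattern,\pred)$), apply the induction hypothesis to $C_1$ to obtain $(\Xset_1,Q_1)$, the pattern lemma to obtain $(\Xset_2,Q_2)$, and Lemma~\ref{lem:translation-completeness-pred} to obtain $\pred'$; then choose fresh names $T_1,T_2$. By the structural observation above, the primary keys $\PK(R_l)$ for $(X{:}l)\in\Xset_1\cap\Xset_2$ are well-defined, so the conjunction $\pred''$ prescribed by C-Match2 (respectively C-OptMatch) is a legal SQL predicate, and the rule produces the witness $\rename_{T_1}(Q_1)\ijoin_{\pred''}\rename_{T_2}(Q_2)$ (respectively $\rename_{T_1}(Q_1)\ljoin_{\pred''}\rename_{T_2}(Q_2)$) with variable set $\Xset_1\cup\Xset_2$.

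\textbf{Inductive case $\sfWith$, and the main obstacle.} For $C=\sfWith(C_1,\overline{Y},\overline{Z})$ we apply the induction hypothesis to $C_1$. The delicate point — which I expect to be the main obstacle — is that rule C-With requires the translation of $C_1$ to be presented in the form $\proj_L(Q)$, whereas the $\sfMatch$ and $\sfOptMatch$ cases above deliver filter- or join-headed queries. I plan to address this by strengthening the induction hypothesis to: \emph{every clause $C$ admits a translation whose result has the form $\proj_L(Q)$}. This invariant is maintained by wrapping the outputs of the $\sfMatch$/$\sfOptMatch$ cases with the identity projection $\proj_{L_0}$, where $L_0$ enumerates the current columns — a step that amounts to reading derivations modulo such trivial projection wrappers (equivalently, one could attach the identity projection directly to the conclusions of C-Match1, C-Match2, and C-OptMatch). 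With $C_1$ translated to $\proj_L(Q)$, rule C-With produces $\proj_{\rename_{L[\overline{Z}/\overline{Y}]}(L)}(Q)$ with variable set $\Xset'=\Xset\setminus\overline{Y}\cup\overline{Z}$, finishing the case; the remaining check that the substitution $L[\overline{Z}/\overline{Y}]$ is well-formed, which follows from the global uniqueness of property-key names within $\gschema$, is routine.
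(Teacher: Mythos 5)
Your proof takes essentially the same route as the paper's: a structural induction on $C$ over the four clause productions, discharging each case by the matching rule in Figure~\ref{fig:trans-clause} together with the totality lemmas for patterns, predicates, and expressions. The one place you go beyond the paper is the $\sfWith$ case, where the paper's own proof silently assumes the inductive hypothesis delivers a query of the form $\proj_L(Q_1)$ even though C-Match1/C-Match2/C-OptMatch produce filter- or join-headed queries; your fix of strengthening the invariant (or equivalently wrapping with an identity projection) is a legitimate and arguably necessary repair of that gap.
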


\begin{proof}
Prove by structural induction on $C$. 

\begin{enumerate}

\item Base case: $C = \sfMatch(\pattern, \pred)$.

By Lemma~\ref{lem:translation-completeness-pattern}, there exist a variable set $\Xset$ and a SQL query $Q$ such that $\sdt, \rschema \vdash \pattern \transpattern \Xset, Q$.
Also, by Lemma~\ref{lem:translation-completeness-pred}, there exists a SQL predicate $\pred'$ such that $\sdt, \rschema \vdash \pred \transpred \pred'$. 
According to the \textrm{C-Match1} rule in Figure~\ref{fig:trans-clause}, there exists a SQL query $Q' = \filter_{\pred'}(Q)$ such that $\sdt, \rschema \vdash \sfMatch(\pattern, \pred) \transclause \Xset, Q'$.

\item Inductive case: $C = \sfMatch(C_1, \pattern, \pred)$.

By the inductive hypothesis, there exist a variable set $\Xset_1$ and a SQL query $Q_1$ such that $\sdt, \rschema \vdash C_1 \transclause \Xset_1, Q_1$.
By Lemma~\ref{lem:translation-completeness-pattern}, there exist a variable set $\Xset_2$ and a SQL query $Q_2$ such that $\sdt, \rschema \vdash \pattern \transclause \Xset_2, Q_2$.
Also, by Lemma~\ref{lem:translation-completeness-pred}, there exists a SQL predicate $\pred'$ such that $\sdt, \rschema \vdash \pred \transpred \pred'$. 
Let $T_1$ and $T_2$ be two fresh name for the output tables of $Q_1$ and $Q_2$. Then we have a new predicate $\phi'' = \phi' \land \land_{(X: l) \in \Xset_1 \cap \Xset_2}. T_1.\integrity_{pk}(\smapping(l)) = \integrity_{pk}(\smapping(l))$ to join $Q_1$ and $Q_2$.
According to the \textrm{C-Match2} rule in Figure~\ref{fig:trans-clause}, there exists a SQL query $Q' = \filter_{\pred'}(Q)$ such that $\sdt, \rschema \vdash \sfMatch(C_1, \pattern, \pred) \transclause \Xset_1 \cup \Xset_2, \rename_{T_1}(Q_1) \ijoin_{\pred''} \rename_{T_2}(Q_2)$.

\item Inductive case: $C = \sfOptMatch(C_1, \pattern, \pred)$.

By the inductive hypothesis, there exist a variable set $\Xset_1$ and a SQL query $Q_1$ such that $\sdt, \rschema \vdash C_1 \transclause \Xset_1, Q_1$.
By Lemma~\ref{lem:translation-completeness-pattern}, there exist a variable set $\Xset_2$ and a SQL query $Q_2$ such that $\sdt, \rschema \vdash \pattern \transclause \Xset_2, Q_2$.
Also, by Lemma~\ref{lem:translation-completeness-pred}, there exists a SQL predicate $\pred'$ such that $\sdt, \rschema \vdash \pred \transpred \pred'$. 
Let $T_1$ and $T_2$ be two fresh name for the output tables of $Q_1$ and $Q_2$. Then we have a new predicate $\phi'' = \phi' \land \land_{(X: l) \in \Xset_1 \cap \Xset_2}. T_1.\integrity_{pk}(\smapping(l)) = \integrity_{pk}(\smapping(l))$ to join $Q_1$ and $Q_2$.
According to the \textrm{C-Match2} rule in Figure~\ref{fig:trans-clause}, there exists a SQL query $Q' = \filter_{\pred'}(Q)$ such that $\sdt, \rschema \vdash \sfMatch(C_1, \pattern, \pred) \transclause \Xset_1 \cup \Xset_2, \rename_{T_1}(Q_1) \ljoin_{\pred''} \rename_{T_2}(Q_2)$.

\item Inductive case: $C = \sfWith(C_1, \overline{Y}, \overline{Z})$.

By the inductive hypothesis, we have $\sdt, \rschema \vdash C_1 \transclause \Xset_1, \proj_{L}(Q_1)$ where $L$ denotes a list of columns. 
According to the \textrm{C-With} rule in Figure~\ref{fig:trans-clause}, there exists a SQL query $Q_1' = \proj_{\rename_{L[\overline{Z} / \overline{Y}]}}(Q_1)$ such that $\sdt, \rschema \vdash \sfMatch(C_1, \pattern, \pred) \transclause \Xset_1 \setminus \overline{Y} \cup \overline{Z}, Q_1'$.

\end{enumerate}
\end{proof}

\begin{lemma}~\label{lem:translation-completeness-pattern}
Let $\gschema$ be a graph schema, $\rschema = (S, \integrity)$ be the induced relational schema of $\gschema$, and $\smapping$ be the corresponding schema mapping. Given any Cypher pattern $\pattern$ accepted by the grammar shown in Figure~\ref{fig:cypher-syntax}, there exists a SQL query $Q$ over $\rschema$ such that $\sdt, \rschema \vdash \pattern \transpattern \Xset, Q$.
\end{lemma}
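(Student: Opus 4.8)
The plan is to prove Lemma~\ref{lem:translation-completeness-pattern} by structural induction on the path pattern $\pattern$, mirroring the two productions for $\pattern$ in the grammar of Figure~\ref{fig:cypher-syntax} against the two rules \textrm{PT-Node} and \textrm{PT-Path} of Figure~\ref{fig:trans-pattern}. The whole argument reduces to checking that, in each case, the side conditions of the applicable rule are discharged by the way $\rschema$ and $\sdt$ are constructed via the \textsc{InferSDT} rules of Figure~\ref{fig:infer-sdt}. For the base case $\pattern = \nodepattern = (X, l)$, the label $l$ is a node label of $\gschema = (T_N, T_E)$, so there is a node type $(l, K_1, \ldots, K_n) \in T_N$; by the \textrm{Node} rule of \textsc{InferSDT}, the induced schema contains the table $R_l$ with attributes $K_1, \ldots, K_n$ and the standard transformer $\sdt$ contains the clause $l(K_1, \ldots, K_n) \to R_l(K_1, \ldots, K_n)$. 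This is exactly the premise of \textrm{PT-Node}, so $\sdt, \rschema \vdash (X, l) \transpattern \set{(X:l)}, \rename_X(R_l)$, and we take $\Xset = \set{(X:l)}$ and $Q = \rename_X(R_l)$.

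For the inductive case $\pattern = (X_1, l_1), (X_2, l_2, d_2), PP$, every path pattern begins with a node pattern, so $\mathrm{head}(PP) = (X_3, l_3)$ is well-defined; the inductive hypothesis applied to the strictly smaller $PP$ yields a variable set $\Xset'$ and a SQL query $Q'$ with $\sdt, \rschema \vdash PP \transpattern \Xset', Q'$. It then suffices to discharge the premises of \textrm{PT-Path}: (i) the existence in $\sdt$ of clauses $l_1(\ldots) \to R_{l_1}(\ldots)$ and $l_3(\ldots) \to R_{l_3}(\ldots)$ follows, as in the base case, from the \textrm{Node} rule of \textsc{InferSDT} since $l_1, l_3$ are node labels of $\gschema$; (ii) the existence of $l_2(\ldots, \fk_s, \fk_t) \to R_{l_2}(\ldots, \fk_s, \fk_t)$ follows from the \textrm{Edge} rule, which introduces the two foreign-key attributes $\fk_s, \fk_t$ for every edge type; and (iii) the implication $\integrity(\rschema) \Rightarrow \pred \land \pred'$, with $\pred = (R_{l_2}.\fk_s = \PK(R_{l_1}))$ and $\pred' = (R_{l_2}.\fk_t = \PK(R_{l_3}))$, holds because the same \textrm{Edge} rule adds precisely the integrity constraints $\FK(R_{l_2}.\fk_s) = \PK(R_{l_1})$ and $\FK(R_{l_2}.\fk_t) = \PK(R_{l_3})$ linking the edge table's source/target columns to the primary keys of the adjacent node tables. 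Applying \textrm{PT-Path} with $\Xset = \set{(X_1:l_1),(X_2:l_2)} \cup \Xset'$ and $Q = \rename_{X_1}(R_{l_1}) \ijoin_{\pred} \rename_{X_2}(R_{l_2}) \ijoin_{\pred'} Q'$ completes the step, and the induction closes because the grammar of $\pattern$ has no other productions.

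The main obstacle is side condition (iii): it presupposes that the edge type labeled $l_2$ genuinely connects a node of type $l_1$ to one of type $l_3$ (modulo the orientation recorded in $d_2$), so that the foreign keys $\fk_s, \fk_t$ of $R_{l_2}$ really do point at $R_{l_1}$ and $R_{l_3}$. For patterns that are well-typed with respect to $\gschema$ this is immediate from the \textrm{Edge} rule, but the subtle point to spell out is that the assignment of the roles $\fk_s$ versus $\fk_t$ is determined by the edge direction $d_2$ (and, for $d_2 = \leftrightarrow$, that either orientation yields an admissible derivation), and that in each of these cases the constraints emitted by \textsc{InferSDT} are exactly those demanded by \textrm{PT-Path}. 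Accordingly, I would make the well-typedness assumption on Cypher patterns explicit, reading the grammar of Figure~\ref{fig:cypher-syntax} together with the typing discipline induced by $\gschema$; everything else is a routine unfolding of the \textsc{InferSDT} and \textrm{PT-}$*$ rules.
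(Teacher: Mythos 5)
Your proof follows essentially the same route as the paper's: structural induction on $\pattern$, with the base case discharged by \textrm{PT-Node} and the inductive case by \textrm{PT-Path} applied to the head node, edge, and the recursively translated tail. If anything, you are more careful than the paper's own proof, which simply asserts that the rules apply, whereas you explicitly discharge their premises (including the integrity-constraint side condition $\integrity(\rschema) \Rightarrow \pred \land \pred'$) via the \textsc{InferSDT} rules and correctly flag the well-typedness assumption on patterns that this step silently requires.
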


\begin{proof}
Prove by structural induction on $\pattern$. 

\begin{enumerate}
\item Base case: $\pattern = \nodepattern = (X, l)$.

According to the \textrm{PT-Node} rule in Figure~\ref{fig:trans-pattern}, there exist a variable set $\Xset = \set{(X, l)}$ and a SQL query $Q = \rename_{X}(\smapping(l))$ such that $\sdt, \rschema \vdash \nodepattern \transpattern \Xset, Q$.

\item Inductive case: $\pattern = \nodepattern_1, \edgepattern_1, \pattern_1$.

Let $\nodepattern_1 = (X_1, l_1)$ be the first node pattern in $\pattern$, $\edgepattern = (X_2, l_2, d_2)$ be the first edge pattern in $\pattern$, and $\sfHead(\pattern_1) = (X_3, l_3)$ be the first node pattern in $\pattern_1$.


By the inductive hypothesis, there exist a variable set $\Xset_1 = \set{(X_1, l_1)}$ and a SQL query $Q_1 = \rename_{X_1}(\smapping(l_1))$ such that $\sdt, \rschema \vdash \nodepattern_1 \transpattern \Xset_1, Q_1$, a variable set $\Xset_2 = \set{(X_2, l_2)}$ and a SQL query $Q_2 = \rename_{X_2}(\smapping(l_2))$ such that $\sdt, \rschema \vdash \nodepattern_2 \transpattern \Xset_2, Q_2$, and a variable set $\Xset_3$ and a SQL query $Q_3$ such that $\sdt, \rschema \vdash \pattern_1 \transpattern \Xset_3, Q_3$.
Let $\pred_1 = \textsf{link}(\integrity, \smapping(l_1), \smapping(l_2))$ be a predicate to join the relations $\smapping(l_1)$ and $\smapping(l_2)$, and $\pred_2 = \textsf{link}(\integrity, \smapping(l_2), \smapping(l_3))$ to join the relations $\smapping(l_2)$ and $\smapping(l_3)$.
According to the \textrm{PT-Path} rule in Figure~\ref{fig:trans-pattern}, there exists a variable set $\Xset = \Xset_1 \cup \Xset_2 \cup \Xset_3$ and a SQL expression $Q = Q_1 \ijoin_{\pred_1} Q_2 \ijoin_{\pred_2} Q_3$ such that $\sdt, \rschema \vdash \pattern \transpattern \Xset,  Q$.


\end{enumerate}
\end{proof}

\begin{lemma}~\label{lem:translation-completeness-expr}
Let $\gschema$ be a graph schema, $\rschema = (S, \integrity)$ be the induced relational schema of $\gschema$, and $\smapping$ be the corresponding schema mapping. Given any Cypher expression $E$ accepted by the grammar shown in Figure~\ref{fig:cypher-syntax}, there exists a SQL expression $E'$ such that $\sdt, \rschema \vdash E \transexpr E'$.
\end{lemma}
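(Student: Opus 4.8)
The plan is to proceed by structural induction on the Cypher expression $E$, mirroring the grammar of Figure~\ref{fig:cypher-syntax} and checking that, for each production, there is an applicable rule in Figure~\ref{fig:trans-expr} whose premises can be discharged. The key observation is that every expression rule (E-Prop, E-Value, E-Pred, E-Agg, E-Arith) is syntax-directed and carries no side condition that can fail: the rule to use is determined by the shape of $E$, so totality of the translation relation reduces to totality on the immediate subterms.

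First, the two base cases. If $E = k$ is a property key, rule E-Prop immediately yields $\sdt, \rschema \vdash k \transexpr k$, so $E' = k$ works. If $E = v$ is a value, rule E-Value gives $\sdt, \rschema \vdash v \transexpr v$, so $E' = v$. Next, the inductive cases. For $E = \sfAgg(E_1)$, the induction hypothesis supplies a SQL expression $E_1'$ with $\sdt, \rschema \vdash E_1 \transexpr E_1'$; applying E-Agg gives $E' = \ttAgg(E_1')$. For $E = E_1 \arithop E_2$, the induction hypothesis applied to $E_1$ and $E_2$ gives $E_1', E_2'$, and rule E-Arith then yields $E' = E_1' \arithop E_2'$. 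The remaining case is $E = \sfCast(\pred)$, where E-Pred requires a translated predicate $\pred'$ with $\sdt, \rschema \vdash \pred \transpred \pred'$; this is exactly the content of the companion completeness lemma for predicates (Lemma~\ref{lem:translation-completeness-pred}), so we appeal to it to obtain $\pred'$ and set $E' = \ttCast(\pred')$.

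The main obstacle — in fact, essentially the only subtlety — is the mutual recursion between expressions and predicates: $\sfCast(\pred)$ nests a predicate inside an expression, while predicates such as $E \logicop E$, $\sfIsNull(E)$, and $E \in \overline{v}$ nest expressions. Consequently Lemma~\ref{lem:translation-completeness-expr} and Lemma~\ref{lem:translation-completeness-pred} cannot be proved in isolation; they must be established by a single simultaneous structural induction over the combined grammar of expressions and predicates (equivalently, by induction on the size of the syntax tree). Once the two statements are bundled this way, each case — including the $\sfExists(\pattern)$ case of the predicate lemma, which additionally invokes the pattern-completeness Lemma~\ref{lem:translation-completeness-pattern} to obtain the subquery used in rule P-Exists — follows directly from the syntax-directed structure of the rules in Figures~\ref{fig:trans-expr} and~\ref{fig:trans-pred}, with no genuine case analysis beyond matching the production to its rule.
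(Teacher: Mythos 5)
Your proof is correct and follows essentially the same route as the paper's: structural induction on $E$ with the two base cases discharged by E-Prop and E-Value, the $\sfAgg$ and $\arithop$ cases by the induction hypothesis plus E-Agg and E-Arith, and the $\sfCast(\pred)$ case by appeal to the predicate-completeness lemma and E-Pred. Your observation that the expression and predicate lemmas are mutually recursive and should formally be established by a single simultaneous induction (or induction on term size) is a valid refinement of a point the paper leaves implicit, but it does not change the substance of the argument.
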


\begin{proof}
Prove by structural induction on $E$. 

\begin{enumerate}
\item Base case: $E = k$.

According to the \textrm{E-Prop} rule in Figure~\ref{fig:trans-expr}, there exists a SQL expression $E' = k$ such that $\sdt, \rschema \vdash k \transexpr E'$.

\item Base case: $E = v$.

According to the \textrm{E-Value} rule in Figure~\ref{fig:trans-expr}, there exists a SQL expression $E' = v$ such that $\sdt, \rschema \vdash v \transexpr E'$.

\item Inductive case: $E = \sfCast(\pred)$.

By Lemma~\ref{lem:translation-completeness-pred}, there exists a SQL predicate $\pred'$ such that $\sdt, \rschema \vdash \pred \transpred \pred'$. According to the \textrm{E-Pred} rule in Figure~\ref{fig:trans-expr}, there exists a SQL expression $E' = \texttt{Cast}(\pred')$ such that $\sdt, \rschema \vdash \sfCast(\pred) \transexpr E'$.

\item Inductive case: $E = \textsf{Agg}(E_1)$ where $\textsf{Agg} \in \{\sfCount, \sfMax, \sfMin, \sfAvg, \sfSum\}$.

By the inductive hypothesis, there exists a SQL expression $E_1'$ such that $\sdt, \rschema \vdash E_1 \transexpr E_1'$. Also, for each aggregation function in Cypher there exists an equivalent SQL counterpart.
According to the \textrm{E-Agg} rule in
Figure~\ref{fig:trans-expr}, there exists a SQL expression $E' = \texttt{Agg}(E_1')$ such that $\sdt, \rschema \vdash \textsf{Agg}(E_1) \transexpr E'$.

\item Inductive case: $E = E_1 \arithop E_2$.

By the inductive hypothesis, there exists two SQL expressions $E_1'$ and $E_2'$ such that $\sdt, \rschema \vdash E_1 \transexpr E_1'$ and $\sdt, \rschema \vdash E_2 \transexpr E_2'$.
According to the \textrm{E-Arith} rule in Figure~\ref{fig:trans-expr}, there exists a SQL expression $E' = E_1' \arithop E_2'$ such that $\sdt, \rschema \vdash E_1 \arithop E_2 \transexpr E_1' \arithop E_2'$.

\end{enumerate}

\end{proof}

\begin{lemma}~\label{lem:translation-completeness-pred}
Let $\gschema$ be a graph schema, $\rschema = (S, \integrity)$ be the induced relational schema of $\gschema$, and $\transformer$ be the corresponding \trans. Given any Cypher predicate $\pred$ accepted by the grammar shown in Figure~\ref{fig:cypher-syntax}, there exists a SQL predicate $\pred'$ such that $\sdt, \rschema \vdash \pred \transpred \pred'$.
\end{lemma}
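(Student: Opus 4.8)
The plan is to prove this by structural induction on the Cypher predicate $\pred$, mirroring the case analysis in Figure~\ref{fig:trans-pred}. The grammar of Figure~\ref{fig:cypher-syntax} lists exactly the predicate forms $\top$, $\bot$, $E_1 \logicop E_2$, $\sfIsNull(E)$, $E \in \overline{v}$, $\sfExists(\pattern)$, $\pred_1 \land \pred_2$, $\pred_1 \lor \pred_2$, and $\neg \pred_1$, and Figure~\ref{fig:trans-pred} supplies a translation rule for each, so it suffices to check that the premises of the relevant rule can always be discharged, using the induction hypothesis on subpredicates and the previously-established completeness lemmas for expressions and patterns.

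First I would handle the base cases $\pred = \top$ and $\pred = \bot$: rules \textrm{P-True} and \textrm{P-False} have no premises, so we simply take $\pred' = \top$ and $\pred' = \bot$. For the cases $E_1 \logicop E_2$, $\sfIsNull(E)$, and $E \in \overline{v}$, the rule premises only require translating the component expression(s); these are supplied by Lemma~\ref{lem:translation-completeness-expr}, and then rules \textrm{P-Logic}, \textrm{P-IsNull}, and \textrm{P-In} apply directly to produce $\pred'$. For the boolean connectives $\neg \pred_1$ and $\pred_1 \circ \pred_2$ with $\circ \in \set{\land, \lor}$, the induction hypothesis yields translated subpredicates $\pred_1'$ (and $\pred_2'$), and rules \textrm{P-Not} and \textrm{P-AndOr} then give the desired $\pred'$.

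The one case needing slightly more care is $\pred = \sfExists(PP)$. Here I would first invoke Lemma~\ref{lem:translation-completeness-pattern} to obtain a variable set $\Xset$ and SQL query $Q$ with $\sdt, \rschema \vdash PP \transpattern \Xset, Q$. To apply \textrm{P-Exists} I must exhibit the relations $R_{l_1}$ and $R_{l_2}$ corresponding to the head and last node patterns of $PP$, together with their primary keys $\PK(R_{l_1})$ and $\PK(R_{l_2})$ forming $\overline{a}$. These are guaranteed to exist because $\rschema$ is the induced relational schema of $\gschema$: by the rules of Figure~\ref{fig:infer-sdt}, every node (and edge) type of $\gschema$ is mapped to a table whose default property key is designated as its primary key, and correspondingly $\sdt$ contains a clause $l(\ldots) \to R_l(\ldots)$ for each label $l$. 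Hence $\overline{a} = [\PK(R_{l_1}), \PK(R_{l_2})]$ is well-defined and \textrm{P-Exists} produces $\pred' = \overline{a} \in \proj_{\overline{a}}(Q)$. I expect this $\sfExists$ case to be the only point of the proof requiring genuine attention; the remaining cases are routine appeals to the induction hypothesis and to Lemmas~\ref{lem:translation-completeness-expr} and~\ref{lem:translation-completeness-pattern}.
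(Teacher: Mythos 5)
Your proposal is correct and follows essentially the same route as the paper's proof: structural induction on $\pred$ with one case per rule of Figure~\ref{fig:trans-pred}, discharging premises via the induction hypothesis and Lemmas~\ref{lem:translation-completeness-expr} and~\ref{lem:translation-completeness-pattern}. Your extra justification in the $\sfExists$ case, that the head/tail relations and their primary keys exist because $\rschema$ is the induced schema, is a slightly more explicit version of what the paper states by appeal to the schema mapping, but it is not a different argument.
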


\begin{proof}
Prove by structural induction on $\pred$. 

\begin{enumerate}
\item Base case: $\pred = \top$.

According to the \textrm{P-True} rule in Figure~\ref{fig:trans-pred}, there exists a SQL predicate $\pred' = \top$ such that $\sdt, \rschema \vdash \top \transpred \pred'$.

\item Base case: $\pred = \bot$.

According to the \textrm{P-False} rule in Figure~\ref{fig:trans-pred}, there exists a SQL predicate $\pred' = \bot$ such that $\sdt, \rschema \vdash \bot \transpred \pred'$.

\item Inductive case: $\pred = E_1 \logicop E_2$.

By Lemma~\ref{lem:translation-completeness-expr}, there exists two SQL expressions $E_1'$ and $E_2'$ such that $\sdt, \rschema \vdash E_1 \transexpr E_1'$ and $\sdt, \rschema \vdash E_2 \transexpr E_2'$.
According to the \textrm{P-Logic} rule in Figure~\ref{fig:trans-pred}, there exists a SQL predicate $\pred' = E_1' \logicop E_2'$ such that $\sdt, \rschema \vdash E_1 \logicop E_2 \transpred \pred'$.

\item Inductive case: $\pred = \sfIsNull(E)$.

By Lemma~\ref{lem:translation-completeness-expr}, there exists a SQL expression $E'$ such that $\sdt, \rschema \vdash E \transexpr E'$.
According to the \textrm{P-IsNull} rule in Figure~\ref{fig:trans-pred}, there exists a SQL predicate $\pred' = \ttIsNull(E')$ such that $\sdt, \rschema \vdash \sfIsNull(E) \transpred \pred'$.

\item Inductive case: $\pred = E \in \overline{v}$.

By Lemma~\ref{lem:translation-completeness-expr}, there exists a SQL expression $E'$ such that $\sdt, \rschema \vdash E \transexpr E'$.
According to the \textrm{P-In} rule in Figure~\ref{fig:trans-pred}, there exists a SQL predicate $\pred' = E' \in \overline{v}$ such that $\sdt, \rschema \vdash E \in \overline{v} \transpred \pred'$.

\item Inductive case: $\pred = \sfExists(\pattern)$.


By Lemma~\ref{lem:translation-completeness-pattern}, there exists a SQL query $Q$ such that $\sdt, \rschema \vdash \pattern \transpattern \Xset, Q$.
Let $(X_1, l_1)$ and $(X_2, l_2)$ be the head node and the tail node of the path pattern $\pattern$, and $\overline{a} = [\integrity_{pk}(\smapping(l_1)), $\\ $\integrity_{pk}(\smapping(l_2))]$ be a list of primary keys of those two nodes by the definition of $\smapping$.
According to the \textrm{P-Exist} rule in Figure~\ref{fig:trans-pred}, there exists a SQL predicate $\pred' = \overline{a} \in \proj_{\overline{a}}(Q)$ such that $\sdt, \rschema \vdash \sfExists(\pattern) \transpred \pred'$.

\item Inductive case: $\pred = \pred_1 \circ \pred_2$ where $\circ \in \{\land, \lor\}$.

By the inductive hypothesis, there exist two SQL predicates $\pred_1'$ and $\pred_2'$ such that $\sdt, \rschema \vdash \pred_1 \transpred \pred_1'$ and $\sdt, \rschema \vdash \pred_2 \transpred \pred_2'$.
According to the \textrm{P-AndOr} rule in Figure~\ref{fig:trans-pred}, there exists a SQL predicate $\pred' = \pred_1' \circ \pred_2'$ such that $\sdt, \rschema \vdash \pred_1 \circ \pred_2 \transpred \pred'$.

\item Inductive case: $\pred = \neg \pred_1$.

By the inductive hypothesis, there exists a SQL predicate $\pred_1'$ such that $\sdt, \rschema \vdash \pred_1 \transpred \pred_1'$.
According to the \textrm{P-Not} rule in Figure~\ref{fig:trans-pred}, there exists a SQL predicate $\pred' = \neg \pred_1'$ such that $\sdt, \rschema \vdash \neg \pred_1 \transpred \pred'$.

\end{enumerate}

\end{proof}

\begin{lemma}\label{lem:rst-soundness}
Let $G$ be a graph database instance over graph schema $\gschema$,  $\sdt$ be a standard \trans between $\gschema$ and $\rschema$, $\schema_{R'}$ be the induced relational schema, and $D' \conform \schema_{R'}$ be a relational database such that $D' = \sdt(G)$.
Given a \trans $\transformer$ from $\gschema$ to $\rschema$, a residual \trans $\rdt$ and a relational database instance $D \conform \rschema$ such that $D = \transformer(G)$, it holds that $D' \sim_{\rdt} D$.
\end{lemma}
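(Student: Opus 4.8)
The plan is to reduce the goal $D' \sim_{\rdt} D$, i.e.\ $\rdt(D') = D$, to the hypothesis $D = \transformer(G)$ by exhibiting a bijective renaming of predicate symbols that transports the Herbrand model witnessing $\transformer(G) = D$ into one witnessing $\rdt(D') = D$. The starting observation is structural: by the \textrm{Node} and \textrm{Edge} rules of \textsc{InferSDT}, every clause of $\sdt$ has the form $l(K_1,\ldots,K_n)\to R_l(K_1,\ldots,K_n)$ or $l(K_1,\ldots,K_m,\fk_s,\fk_t)\to R_l(K_1,\ldots,K_m,\fk_s,\fk_t)$ --- a single body atom whose predicate is a graph label $l$, a head atom whose predicate is the induced table name $R_l$, and \emph{identical} argument tuples on both sides. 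Consequently the substitution $\sigma = \{P_1 \mapsto P_0 \mid P_1(\ldots)\to P_0(\ldots)\in\sdt\}$ is precisely the map $l \mapsto R_l$, and it is injective since \textsc{InferSDT} assigns distinct table names to distinct node/edge types. By the fresh-naming conventions, the graph labels of $\gschema$, the induced-schema table names, and the relation names of $\rschema$ are pairwise disjoint; so I may extend $\sigma$ to act as the identity on $\rschema$'s relation names, obtaining a bijection on the (finite) set of all predicate symbols occurring in $\transformer$, $\rdt$, $\convert(G)$, $\convert(D')$, and $\convert(D)$.

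The first concrete step is to prove the identity $\convert(D') = \sigma(\convert(G))$, where $\sigma$ acts on a ground atom by renaming only its predicate symbol. This is where the shape of $\sdt$ pays off: since $D' = \sdt(G)$, the minimal-Herbrand (chase) reading of the transformer semantics forces $\convert(D')$ to contain exactly the head atoms triggered by the clauses of $\sdt$ on $\convert(G)$, which --- given that every clause merely renames a single atom's predicate and preserves its arguments --- are precisely $\{R_l(\bar a) \mid l(\bar a)\in\convert(G)\}$. This works uniformly for node atoms $l(a_1,\ldots,a_n)$ and edge atoms $l(a_1,\ldots,a_m,s,t)$, because $\convert$ already flattens an edge $E(l,s,t,a_1,\ldots,a_m)$ to $l(a_1,\ldots,a_m,s,t)$ and the \textrm{Edge} clause of $\sdt$ keeps this tuple verbatim. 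Two further facts are then immediate: $\denot{\rdt} = \sigma(\denot{\transformer})$, since $\rdt = \transformer[\sigma]$ and translating a transformer to first-order logic commutes with renaming predicate symbols; and $\sigma(\convert(D)) = \convert(D)$, since $\convert(D)$ mentions only relation names of $\rschema$, on which $\sigma$ is the identity.

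The core of the argument is the standard metatheoretic fact that a bijective renaming of predicate symbols is an isomorphism of Herbrand structures: for any set $M$ of ground atoms and any first-order sentence $\varphi$ over the same vocabulary, $M \models \varphi$ iff $\sigma(M) \models \sigma(\varphi)$. Instantiating with $M = \convert(G)\cup\convert(D)$ and $\varphi = \denot{\transformer}$, the hypothesis $D = \transformer(G)$ --- unfolded via the transformer semantics to $\convert(G)\cup\convert(D)\models\denot{\transformer}$ --- yields $\sigma(\convert(G))\cup\sigma(\convert(D)) \models \sigma(\denot{\transformer})$, and substituting the three identities above rewrites this as $\convert(D')\cup\convert(D)\models\denot{\rdt}$. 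By the semantics of transformers this is exactly $\rdt(D') = D$, i.e.\ $D' \sim_{\rdt} D$.

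I expect the main obstacle to be making the identity $\convert(D') = \sigma(\convert(G))$ airtight. It hinges on two points the paper treats somewhat informally: (i) that the semantics $\transformer(D)=D'\Leftrightarrow\convert(D)\cup\convert(D')\models\denot{\transformer}$ must be read as selecting the minimal model on the head relations (so $\sdt(G)$ has neither spurious nor missing tuples), and (ii) that \textsc{InferSDT} produces pairwise-distinct table names so that $\sigma$, extended by the identity on $\rschema$, is a genuine bijection rather than a merging of symbols. Once these bookkeeping points are pinned down, the remainder is a mechanical transport of the witnessing model along $\sigma$.
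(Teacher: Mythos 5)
Your proof is correct and rests on the same core observation as the paper's: the residual transformer is obtained from $\transformer$ by the predicate renaming $\sigma = \{l \mapsto R_l\}$ induced by the clauses of $\sdt$, and $D' = \sdt(G)$ is (at the level of ground atoms) exactly the $\sigma$-image of $G$, so the Herbrand-model witness for $\transformer(G) = D$ transports directly to one for $\rdt(D') = D$. Where you differ is in packaging: the paper verifies this clause-by-clause, with separate cases for node-generated and edge-generated rules of $\sdt$, whereas you establish three global identities ($\convert(D') = \sigma(\convert(G))$, $\denot{\rdt} = \sigma(\denot{\transformer})$, $\sigma(\convert(D)) = \convert(D)$) and discharge the goal with a single appeal to the invariance of Herbrand satisfaction under bijective predicate renaming. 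Your version is the more uniform and, arguably, the more faithful one --- it tracks the actual single-atom shape $P_1(\ldots) \to P_0(\ldots)$ of the $\sdt$ clauses assumed by \textsc{ReduceToSQL} and the \textrm{Node}/\textrm{Edge} rules of Figure~\ref{fig:infer-sdt}, whereas the paper's edge case writes the clause with multiple body and head atoms, which does not match its own definition --- and it generalizes without change to any SDT whose clauses are pure renamings. The two bookkeeping points you flag (the minimal-model reading of $\transformer(D) = D'$, needed so that $\sdt(G)$ has no spurious tuples, and the disjointness of graph labels from the relation names of $\rschema$, needed so that $\sigma$ extends to a genuine bijection fixing $\convert(D)$) are implicit assumptions that the paper's proof relies on equally, so they are not gaps in your argument relative to the paper; making them explicit is an improvement.
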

\begin{proof}

By Algorithm~\ref{algo:infer-rdt}, we know the residual \trans is obtained by a syntactic substitution. Let us discuss the substitution into two cases. 
\begin{enumerate}
\item~\label{lem:rst-soundness:case1} For any node $N(l, a_1, \ldots, a_n)$ with label $l$ and values $a_1, \ldots a_n$ of property keys $K_1, \ldots, K_2$, there exists a corresponding clause $l(a_1, \ldots, a_n) \to l'(a_1, \ldots, a_n)$ in $\sdt$. 
Therefore, $\sigma$ and $\rdt$ are updated by $\sigma \cup \set{l \mapsto l'}$ and $\rdt \cup \transformer[l \mapsto l']$, respectively. 
Let $l(a_1, \ldots, a_n) \to R_{l}(a_1', \ldots a_n')$ be a formula of $\transformer$ that represent the equivalence between the node $N$ of $l$ label and the table of $R_{l}$ relation by the semantics of $\transformer$. Then $l(a_1, \ldots, a_n) \to R_{l}(a_1', \ldots a_n')[l \mapsto l'] = l'(a_1, \ldots, a_n) \to R_{l}(a_1', \ldots a_n')$ is a clause of $\rdt$ that denote the equivalence between the table of $l'$ relation in $\rschema'$ and the table of  of $R_{l}$ relation in  $\rschema$. Formally, for any database instance $D_{l'} \in D'$ that corresponds to nodes $N_{l} \in G$ and database instance $D_{R_{l}} \in D$, $\rdt(D_{l'}) = D_{R_{l}}$

\item For any edge $E(l, s, t, a_1, \ldots, a_n)$ with label $l$ that connects nodes $s$ and $t$ and has property values $a_1, \ldots, a_n$, there exists a corresponding clause $s(\ldots), l(s, t, a_, \ldots, a_n), t(\ldots) \to s'(\ldots), l'(a_1, \ldots, a_n, s, t), t'(\ldots)$ in $\sdt$. Similarly, $\sigma$ and $\rdt$ are updated by $\sigma \cup \set{l \mapsto l'}$ and $\rdt \cup \transformer[l \mapsto l']$, respectively. Let $s(\ldots), l(s, t, a_, \ldots, a_n), t(\ldots) \to R_{s}(\ldots), R_{l}(a_, \ldots, a_n, s, t), \\ R_{t}(\ldots)$ be a formula of $\transformer$ by the semantics of $\transformer$. 
Also, since $s$ and $t$ are two labels for nodes, then by the case~\ref{lem:rst-soundness:case1} we have $s'(\ldots) \to R_{s}(\ldots)$ and $t'(\ldots) \to R_{t}(\ldots)$ from $\rdt$ (namely, $\transformer[s \mapsto s']$ and $\transformer[t \mapsto t']$).
Then $\transformer[s \mapsto s', l \mapsto l', t \mapsto t'] = s'(\ldots), l'(s, t, a_, \ldots, a_n), t'(\ldots) \to R_{s}(\ldots), R_{l}(a_, \ldots, a_n, s, t), R_{t}(\ldots)$. Formally, $\rdt(D_{l'}) = D_{R_l}$ where $D_{l'} \in D'$ and $D_{R_l} \in D$.
\end{enumerate}

Thus, we know $D' \conform \rschema' \land D \conform \rschema \land \rdt(D') = D \Rightarrow D' \sim_{\rdt} D$ where $\rdt$ is derived from Algorithm~\ref{algo:infer-rdt}.

\end{proof}

\begin{lemma}\label{lem:rst-completeness}
Let $\gschema$ be a graph schema, $\schema_{R'}$ be the induced relational schema.
Given a graph query $Q_G$ over $\gschema$ and a SQL query $Q_{R'}$ over $\schema_{R'}$ such that $\sdt, \rschema \vdash Q_G \transquery Q_{R'}$, let $\transformer$ be a \trans from $\gschema$ to $\rschema$, and $\rdt$ be the residual \trans between $\rschema'$ and $\rschema$. If $Q_G \simeq_{\transformer} Q_R$, it holds that $Q_{R'} \simeq_{\rdt} Q_R$.
\end{lemma}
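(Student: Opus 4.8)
The plan is to unfold the definition of $\simeq_{\rdt}$ and reduce the claim to three facts already available: soundness of transpilation (Theorem~\ref{lem:soundness}), the hypothesis $Q_G \simeq_{\transformer} Q_R$, and soundness of the residual transformer (Lemma~\ref{lem:rst-soundness}). Concretely, I would fix arbitrary relational instances $R' \conform \rschema'$ and $R \conform \rschema$ with $R' \sim_{\rdt} R$, and show $\denot{Q_{R'}}_{R'} \tbleq \denot{Q_R}_R$; this is exactly what $Q_{R'} \simeq_{\rdt} Q_R$ requires.

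The ingredient I would establish first is that the standard transformer $\sdt$ is a bijection between graph instances of $\gschema$ and relational instances of the induced schema $\rschema'$. Injectivity is immediate from the construction in Figure~\ref{fig:infer-sdt}, where each node type and each edge type is sent to its own table, so distinct graphs yield distinct relational representations. For surjectivity, given $R' \conform \rschema'$ I would reconstruct a graph $G$ by turning each tuple of a node table into a node (with that table's columns as property values) and each tuple of an edge table into an edge whose source and target are read off from its $\fk_s$ and $\fk_t$ columns; the foreign-key integrity constraints of $\rschema'$ guarantee that these columns reference existing node primary keys, so $G$ is well-formed, $G \conform \gschema$, and $\sdt(G) = R'$. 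Let $G$ denote this witness.

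With $G$ in hand the argument chains three equivalences. First, $\sdt(G) = R'$ gives $G \sim_{\sdt} R'$, so Theorem~\ref{lem:soundness} ($Q_G \simeq_{\sdt} Q_{R'}$) yields $\denot{Q_G}_G \tbleq \denot{Q_{R'}}_{R'}$. Second, set $D = \transformer(G)$, which exists and satisfies $D \conform \rschema$ since $\transformer$ is a transformer from $\gschema$ to $\rschema$; then $G \sim_{\transformer} D$, so the hypothesis $Q_G \simeq_{\transformer} Q_R$ yields $\denot{Q_G}_G \tbleq \denot{Q_R}_D$. Third, Lemma~\ref{lem:rst-soundness}, applied to $G$ with $D' = \sdt(G) = R'$ and $D = \transformer(G)$, gives $R' \sim_{\rdt} D$, i.e.\ $\rdt(R') = D$; together with the assumption $\rdt(R') = R$ and the fact that transformer evaluation is functional (by the Herbrand semantics of the DSL), this forces $D = R$. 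Chaining and using symmetry and transitivity of $\tbleq$: $\denot{Q_{R'}}_{R'} \tbleq \denot{Q_G}_G \tbleq \denot{Q_R}_D = \denot{Q_R}_R$, as desired.

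I expect the surjectivity of $\sdt$ onto $\rschema'$-conforming instances to be the main obstacle, since it is the only step that appeals to the precise definition of the induced relational schema and its integrity constraints — in particular, one must verify that every instance of $\rschema'$ decodes to a genuine $\gschema$-conforming graph, which is where the foreign keys linking edge tables to node tables do the work. A secondary point that should be made explicit is that $\rdt(R')$ is uniquely determined, so that $\rdt(R') = D$ and $\rdt(R') = R$ together imply $D = R$.
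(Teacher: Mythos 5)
Your proposal is correct and follows essentially the same route as the paper's proof: chain $\denot{Q_{R'}}_{R'} \tbleq \denot{Q_G}_G$ (soundness of transpilation, Theorem~\ref{lem:soundness}), $\denot{Q_G}_G \tbleq \denot{Q_R}_D$ (the hypothesis $Q_G \simeq_{\transformer} Q_R$), and $D = R$ (Lemma~\ref{lem:rst-soundness}). The one place you go beyond the paper is in explicitly constructing, for an arbitrary $R' \conform \rschema'$, a witness graph $G$ with $\sdt(G) = R'$ via surjectivity of $\sdt$ — a step the paper's proof silently assumes when it writes ``where $\sdt(G) = D'$'' — and in flagging that functionality of $\rdt$ is needed to conclude $D = R$; both are genuine tightenings rather than a different argument.
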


\begin{proof}
By the definition~\ref{def:queryEquivalence}, if $Q_G \simeq_{\transformer} Q_R$ holds, i.e., 
\[
\forall G, D. (G \conform \gschema \land D \conform \rschema \land G \sim_{\transformer} D) \Rightarrow \denot{Q_G}_{G} \equiv \denot{Q_R}_{D}
\]
then $\transformer(G) = D \Leftrightarrow G \sim_{\transformer} D$ holds. Therefore, we can infer that $\denot{Q_G}_{G} \equiv \denot{Q_R}_{D}$ is true.
Furthermore, by Lemma~\ref{lem:rst-soundness}, we know $\rdt(D') = D$ for any $D' \conform \schema_{R'}$ where $D'$ is a relational database instance over $\schema_{R'}$.
Also, by Lemma~\ref{lem:query}, we known $Q_{G} \simeq_{\sdt} Q_{R'} \Rightarrow \denot{Q_{G}}_{G} \equiv \denot{Q_{R'}}_{D'}$ where $\sdt(G) = D'$. Then we can infer that $\denot{Q_R}_{D} \equiv \denot{Q_G}_{G} \equiv \denot{Q_{R'}}_{D'}$.
Therefore, this Lemma is proved by that $Q_{R'} \simeq_{\transformer'} Q_R$ holds, i.e., 
\[
\forall D', D. (D' \conform \schema_{R'} \land D \conform \rschema \land D' \sim_{\transformer'} D) \Rightarrow \denot{Q_{R'}}_{D'} \equiv \denot{Q_R}_{D}
\]
\end{proof}

\begin{theorem}[Soundness(\ref{thm:full-soundness})]
Let $\textsf{CheckSQL}(\rschema, Q, \rschema', Q', \rdt)$ be a sound procedure for equivalence checking of SQL queries $Q, Q'$ over relational schemas $\rschema, \rschema'$ connected by \rdtname $\rdt$.
Given a Cypher query $Q_G$ over graph schema $\gschema$, a SQL query $Q_R$ over relational schema $\rschema$, and their \trans $\transformer$, if $\textsc{CheckEquivalence}(\gschema, Q_G, \rschema, Q_R, \transformer)$ returns $\top$, it holds that $Q_G \simeq_{\transformer} Q_R$.
\end{theorem}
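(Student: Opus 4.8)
The plan is to chain together the three correctness guarantees established earlier---soundness of the syntax-directed transpilation (Theorem~\ref{lem:soundness}), soundness of the residual-transformer construction (Lemma~\ref{lem:rst-soundness}), and the assumed soundness of the backend procedure $\mathsf{CheckSQL}$---using transitivity of table equivalence $\tbleq$ as the glue between them.

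Concretely, suppose $\textsc{CheckEquivalence}(\gschema, Q_G, \rschema, Q_R, \transformer)$ returns $\top$. Unfolding Algorithm~\ref{algo:verify}, this means $\textsc{InferSDT}(\gschema)$ produced $(\sdt, \rschema')$ with $\rschema'$ the induced relational schema, that $\textsc{Transpile}$ produced an SQL query $Q_{R'}$ over $\rschema'$ with $\sdt, \rschema' \vdash Q_G \transquery Q_{R'}$, and that $\textsc{ReduceToSQL}$---hence $\mathsf{CheckSQL}(\rschema, Q_R, \rschema', Q_{R'}, \rdt)$---returned $\top$, where $\rdt = \transformer[\sigma]$ is the residual transformer obtained by substituting $P_0$ for $P_1$ in $\transformer$ for each clause $P_1(\ldots) \to P_0(\ldots) \in \sdt$ (Algorithm~\ref{algo:infer-rdt}). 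By the assumed soundness of $\mathsf{CheckSQL}$ we get $Q_{R'} \simeq_{\rdt} Q_R$, and by Theorem~\ref{lem:soundness} we get $Q_G \simeq_{\sdt} Q_{R'}$.

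To establish $Q_G \simeq_{\transformer} Q_R$ it then suffices (Definition~\ref{def:queryEquivalence}) to fix arbitrary $G \conform \gschema$ and $R \conform \rschema$ with $G \sim_{\transformer} R$---equivalently $R = \transformer(G)$---and show $\denot{Q_G}_G \tbleq \denot{Q_R}_R$. Let $D' = \sdt(G)$, so $D' \conform \rschema'$ and $G \sim_{\sdt} D'$. From $Q_G \simeq_{\sdt} Q_{R'}$ we obtain $\denot{Q_G}_G \tbleq \denot{Q_{R'}}_{D'}$. Applying Lemma~\ref{lem:rst-soundness} with this $G$, $\sdt$, $D' = \sdt(G)$, $\transformer$, $\rdt$ and $R = \transformer(G)$ gives $D' \sim_{\rdt} R$, whence $Q_{R'} \simeq_{\rdt} Q_R$ yields $\denot{Q_{R'}}_{D'} \tbleq \denot{Q_R}_R$. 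Chaining the two equivalences by transitivity of $\tbleq$ gives $\denot{Q_G}_G \tbleq \denot{Q_R}_R$, as required.

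The routine parts are the unfolding of the algorithm and the schema bookkeeping; the load-bearing lemma is Lemma~\ref{lem:rst-soundness}, which guarantees that the purely syntactic substitution producing $\rdt$ really does realize the semantic ``diff'' between $\sdt(G)$ and $\transformer(G)$---so the main obstacle, already discharged there, is verifying that for every node and edge the substituted clause of $\transformer$ correctly relates the induced-schema table to the target-schema table. The only remaining point needing care is that $\tbleq$ is genuinely an equivalence relation, in particular transitive: composing the column bijections witnessing $\denot{Q_G}_G \tbleq \denot{Q_{R'}}_{D'}$ and $\denot{Q_{R'}}_{D'} \tbleq \denot{Q_R}_R$ gives the required bijection, and this argument works uniformly whether the bag semantics or (under an \ttOrderBy clause) the list semantics of Definition~\ref{def:tableEquivalence} is in force.
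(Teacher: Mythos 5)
Your proposal is correct and follows essentially the same route as the paper's proof: chain $Q_G \simeq_{\sdt} Q_{R'}$ (soundness of transpilation) with $Q_{R'} \simeq_{\rdt} Q_R$ (soundness of the backend) via the commuting fact $\rdt(\sdt(G)) = \transformer(G)$ and transitivity of $\tbleq$. If anything, you are more careful than the paper's own write-up, which asserts $\rdt(D') = D$ ``by the definition of $\rdt$'' where you explicitly invoke Lemma~\ref{lem:rst-soundness} and spell out the quantifier structure over $G$ and $R$.
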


\begin{proof}
By Lemma~\ref{lem:query}, it holds that $Q_G \simeq_{\sdt} Q_{R'}$, i.e., 
\[
\forall G, D'. (G \conform \gschema \land D' \conform \schema_{R'} \land G \sim_{\sdt} D') \Rightarrow \denot{Q_G}_{G} \equiv \denot{Q_{R'}}_{D'}
\]
Since $G \sim_{\sdt} D'$ holds, we have $\denot{Q_G}_{G} \equiv \denot{Q_{R'}}_{D'}$.
Further, if $\textsc{CheckEquivalence}(\gschema, Q_G, \rschema, Q_R, \\ \transformer)$ returns $\top$,  then \textsf{CheckSQL} also returns $\top$, i.e., $\denot{Q_{R'}}_{D'} \equiv \denot{Q_R}_{D}$ where $\rdt(D') = D$ by the definition of $\rdt$. Also, we have $\transformer(G) = D$ by the definition of $\transformer$ and $Q_G \simeq_{\sdt} Q_{R'} \Rightarrow \denot{Q_{G}}_{G} \equiv \denot{Q_{R'}}_{D'}$ by Lemma~\ref{lem:soundness}. Then we have $\denot{Q_{G}}_{G} \equiv \denot{Q_{R'}}_{D'} \equiv \denot{Q_R}_{D}$.

Apparently, $G \sim_{\transformer} D$ holds by our assumption. 
Thus, $Q_G \simeq_{\transformer} Q_R$ is proved, i.e., 
\[
\forall G, D. (G \conform \gschema \land D \conform \rschema \land G \sim_{\transformer} D) \Rightarrow \denot{Q_{G}}_{G} \equiv \denot{Q_{R}}_{D}
\]

\end{proof}

\begin{theorem}[Completeness(\ref{thm:full-completeness})]
Let $\textsf{CheckSQL}(\rschema, Q, \rschema', Q', \rdt)$ be a complete procedure for equivalence checking of SQL queries $Q, Q'$ over schemas $\rschema, \rschema'$ connected by \rdtname $\rdt$.
Given a Cypher query $Q_G$ over graph schema $\gschema$, a SQL query $Q_R$ over relational schema $\rschema$, and their \trans $\transformer$, if $Q_G \simeq_{\transformer} Q_R$, then $\textsc{CheckEquivalence}(\gschema, Q_G, \rschema, Q_R, \transformer)$ returns $\top$.
\end{theorem}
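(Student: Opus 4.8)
The plan is to chain together three ingredients: the completeness of syntax-directed transpilation (Theorem~\ref{thm:completeness}), the semantic adequacy of the residual transformer (Lemma~\ref{lem:rst-completeness}), and the assumed completeness of the SQL equivalence checker $\mathsf{CheckSQL}$. Concretely, I would unfold the definition of $\textsc{CheckEquivalence}$ from Algorithm~\ref{algo:verify}: it first calls $\textsc{InferSDT}(\gschema)$ to obtain the induced relational schema $\rschema'$ and standard transformer $\sdt$; it then calls $\textsc{Transpile}(Q_G, \sdt, \rschema')$ to obtain a SQL query $Q_{R'}$; and it finally returns the result of $\textsc{ReduceToSQL}(\rschema, Q_R, \rschema', Q_{R'}, \transformer, \sdt)$. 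The goal is to show that, under the hypothesis $Q_G \simeq_{\transformer} Q_R$, each of these steps behaves as needed and the procedure terminates with $\top$.

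First I would argue that $\textsc{Transpile}$ succeeds, i.e., there exists $Q_{R'}$ with $\sdt, \rschema' \vdash Q_G \transquery Q_{R'}$. This is exactly Theorem~\ref{thm:completeness}: every Cypher query accepted by the Featherweight Cypher grammar of Figure~\ref{fig:cypher-syntax} has a transpilation derivation, so $\textsc{Transpile}$ does not fail. Next I would examine $\textsc{ReduceToSQL}$ (Algorithm~\ref{algo:infer-rdt}): it builds the substitution $\sigma$ mapping each left-hand predicate of a clause of $\sdt$ to the corresponding right-hand predicate, and sets $\rdt = \transformer[\sigma]$. This is precisely the residual transformer $\rdt$ referenced in Lemma~\ref{lem:rst-completeness}. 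Since $Q_G \simeq_{\transformer} Q_R$ by hypothesis and $\sdt, \rschema' \vdash Q_G \transquery Q_{R'}$, Lemma~\ref{lem:rst-completeness} yields $Q_{R'} \simeq_{\rdt} Q_R$. (Internally, Lemma~\ref{lem:rst-completeness} relies on Lemma~\ref{lem:rst-soundness}, which shows that $\rdt$ faithfully relates instances of $\rschema'$ to instances of $\rschema$, together with the soundness of translation, Theorem~\ref{lem:soundness}/Lemma~\ref{lem:query}.)

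Finally, since $\mathsf{CheckSQL}(\rschema, Q, \rschema', Q', \rdt)$ is assumed to be a \emph{complete} procedure for SQL equivalence modulo a residual transformer, the fact $Q_{R'} \simeq_{\rdt} Q_R$ implies $\mathsf{CheckSQL}(\rschema, Q_R, \rschema', Q_{R'}, \rdt)$ returns $\top$; hence $\textsc{ReduceToSQL}$ returns $\top$ and therefore $\textsc{CheckEquivalence}(\gschema, Q_G, \rschema, Q_R, \transformer)$ returns $\top$, as desired.

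I expect the main obstacle to be the step invoking Lemma~\ref{lem:rst-completeness} — specifically, ensuring that the purely syntactic substitution $\transformer[\sigma]$ performed by $\textsc{ReduceToSQL}$ indeed realizes, at the level of transformer semantics, the composition of the ``inverse'' of $\sdt$ with $\transformer$, so that $\rdt(\sdt(G)) = \transformer(G)$ for every graph instance $G$. This requires the case analysis over node-clauses and edge-clauses of $\sdt$ (as in the proof of Lemma~\ref{lem:rst-soundness}) and the observation that $\sdt$ induces a bijection between graph elements and induced-schema tuples, which makes the substitution semantically equivalent to inverting $\sdt$ and then applying $\transformer$. Once that bridge is in place, the remaining steps — invoking Theorem~\ref{thm:completeness} and the completeness hypothesis on $\mathsf{CheckSQL}$ — are essentially bookkeeping about the control flow of Algorithms~\ref{algo:verify} and~\ref{algo:infer-rdt}.
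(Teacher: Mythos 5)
Your proposal is correct and follows essentially the same route as the paper's proof: apply Lemma~\ref{lem:rst-completeness} to convert $Q_G \simeq_{\transformer} Q_R$ into $Q_{R'} \simeq_{\rdt} Q_R$, then invoke the assumed completeness of $\mathsf{CheckSQL}$ to conclude that the procedure returns $\top$. Your explicit appeal to Theorem~\ref{thm:completeness} to guarantee that $\textsc{Transpile}$ actually produces some $Q_{R'}$ is a detail the paper's two-line proof leaves implicit, but it does not change the argument's structure.
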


\begin{proof}

If $Q_G \simeq_{\transformer} Q_R$ holds, then $Q_{R'} \simeq_{\rdt} Q_R$ holds by Lemma~\ref{lem:rst-completeness}.
Further, we know that if $Q_{R'} \simeq_{\rdt} Q_R$, then $\textsf{CheckSQL}(\schema_{R'}, Q_{R'}, \rschema, Q, \transformer')$ returns $\top$. Therefore, the procedure \\ $\textsc{CheckSQLEquivalence}(\gschema, Q_G, \rschema, Q_R, \transformer)$ returns $\top$.

\end{proof}

\end{document}